\tikzstyle{every picture}+=[remember picture]
\newcommand{\email}{}
\definecolor{labelkey}{rgb}{0,0.08,0.45}
\definecolor{refkey}{rgb}{0,0.6,0.0}
\definecolor{Brown}{rgb}{0.45,0.0,0.05}
\definecolor{dgreen}{rgb}{0.00,0.49,0.00}
\definecolor{dblue}{rgb}{0,0.08,0.75}
\DeclareMathOperator*{\Argmin}{Argmin}
\DeclareMathOperator*{\Argmax}{Argmax}
\newcommand{\argmind}[2]{\ensuremath{\underset{\substack{{#1}}}%
{\mathrm{argmin}}\;\;#2 }}
\newcommand{\Argmind}[2]{\ensuremath{\underset{\substack{{#1}}}%
{\mathrm{Argmin}}\;\;#2 }}
\renewcommand{\leq}{\ensuremath{\leqslant}}
\renewcommand{\geq}{\ensuremath{\geqslant}}
\renewcommand{\le}{\ensuremath{\leqslant}}
\renewcommand{\ge}{\ensuremath{\geqslant}}
\newcommand{\emp}{\ensuremath{{\varnothing}}}
\newcommand{\proj}{\ensuremath{\Pi}}
\newcommand{\dist}{\ensuremath{\operatorname{dist}}}
\newcommand{\Bc}{\mathcal{B}}
\newcommand{\Cc}{\mathcal{C}}
\newcommand{\Lc}{\mathcal{L}}
\newcommand{\Mc}{\mathcal{M}}
\newcommand{\Sc}{\mathcal{S}}
\newcommand{\C}{\ensuremath{\mathbb{C}}}
\newcommand{\N}{\ensuremath{\mathbb{N}}}
\newcommand{\R}{\ensuremath{\mathbb{R}}}
\newcounter{hypoconbis}
\newcounter{saveconbis}
\newcommand\debutH{\begin{list}
{\textbf{H\arabic{hypoconbis}}}{\usecounter{hypoconbis}}\setcounter{hypoconbis}{\value{saveconbis}}}
\newcommand\finH{\end{list}\setcounter{saveconbis}{\value{hypoconbis}}}
\newtheorem{theorem}{Theorem}[section]
\theoremstyle{plain}{\theorembodyfont{\rmfamily}%
}
\theoremstyle{plain}{\theorembodyfont{\rmfamily}%
\newtheorem{assumption}[theorem]{Assumption}}
\theoremstyle{plain}{\theorembodyfont{\rmfamily}%
}
\theoremstyle{plain}{\theorembodyfont{\rmfamily}%
}
\theoremstyle{plain}{\theorembodyfont{\rmfamily}%
\newtheorem{remark}[theorem]{Remark}}
\theoremstyle{plain}{\theorembodyfont{\rmfamily}%
\newtheorem{definition}[theorem]{Definition}}
\theoremstyle{plain}{\theorembodyfont{\rmfamily}%
}
\numberwithin{equation}{section}
\begin{document}

\title{Scalable Bayesian Uncertainty Quantification in Imaging Inverse Problems via Convex Optimization\footnote{This work was supported by the UK Engineering and Physical Sciences Research Council (EP/M008843/1 and EP/M019306/1).}}
\author{A. Repetti, 
M. Pereyra, 
and Y. Wiaux\\[5mm]
\small
\small Heriot-Watt University\\
\small Edinburgh EH14 4AS, United Kingdom\\
\small \email{\{a.repetti, m.pereyra, y.wiaux\}@hw.ac.uk}}
\date{}

\maketitle

\vskip 8mm

\begin{abstract}
We propose a Bayesian uncertainty quantification method for large-scale imaging inverse problems. Our method applies to all Bayesian models that are log-concave, where maximum-a-posteriori (MAP) estimation is a convex optimization problem. The method is a framework to analyse the confidence in specific structures observed in MAP estimates (e.g., lesions in medical imaging, celestial sources in astronomical imaging), to enable using them as evidence to inform decisions and conclusions. Precisely, following Bayesian decision theory, we seek to assert the structures under scrutiny by performing a Bayesian hypothesis test that proceeds as follows: firstly, it postulates that the structures are not present in the true image, and then seeks to use the data and prior knowledge to reject this null hypothesis with high probability. Computing such tests for imaging problems is generally very difficult because of the high dimensionality involved. A main feature of this work is to leverage probability concentration phenomena and the underlying convex geometry to formulate the Bayesian hypothesis test as a convex problem, that we then efficiently solve by using scalable optimization algorithms. This allows scaling to high-resolution and high-sensitivity imaging problems that are computationally unaffordable for other Bayesian computation approaches. We illustrate our methodology, dubbed BUQO (Bayesian Uncertainty Quantification by Optimization), on a range of challenging Fourier imaging problems arising in astronomy and medicine.
\textsc{Matlab} code for the proposed uncertainty quantification method is available on GitHub.
\end{abstract}

{\bfseries Keywords.} 
Bayesian inference; 
uncertainty quantification; 
hypothesis testing;
inverse problems; 
convex optimization; 
image processing.

{\bfseries MSC.}
62F03, 
62F15, 
49N45, 
68U10. 

\section{Introduction} 
\label{Sec:Intro}

In this paper, we consider the problem of estimating an unknown image ${x} \in \R^N$ from an observation $y \in \mathbb{C}^M$, related to $x$ by a statistical model $p(y|x)$. We focus on linear problems of the form
\begin{equation}\label{pb:inv_pb}
y = \Phi {x} + w,
\end{equation}
where $\Phi \colon \R^N \to \C^M$ is a known observation operator and $w \in \mathbb{C}^M$ is a realization of random noise with bounded energy (i.e., we assume that $\|w\|^2 \le \epsilon$ with $\epsilon > 0$ known, and $\| \cdot \|$ being the usual Euclidean norm). 
Assuming the exact noise model is unknown, we simply postulate a uniform likelihood $p(y|x) \propto \boldsymbol{1}_{\mathcal{B}(y,\epsilon)} (\Phi x)$, where $\Bc_2(y,\varepsilon)$ denotes the $\ell_2$ ball centred in $y$ with radius $\epsilon$, and where, for every $ s \in \mathbb{C}^M$, the function $\boldsymbol{1}_{\mathcal{B}(y,\epsilon)}(s) = 1$ if $s \in \mathcal{B}(y,\epsilon)$, and $\boldsymbol{1}_{\mathcal{B}(y,\epsilon)}(s) = 0$ otherwise.\footnote{The likelihood $p(x|y) \propto \boldsymbol{1}_{\mathcal{B}(y,\epsilon)} (\Phi x)$ can also be used as an approximation in cases where $\|w\|^2 \le \epsilon$ holds with high probability. See Section \ref{sec:results} for more details.}
This likelihood is commonly used in computational imaging, for example in astronomical imaging \cite{onose2016scalable, wiaux2009compressed} and medical imaging \cite{Haldar2011}. 
Section~\ref{Ssec:discuss:noise} explains how to generalise the methodology proposed in this paper to other noise models. This generalisation is straightforward; however, for presentation clarity and conciseness here we use the model \eqref{pb:inv_pb}.

In imaging sciences, the problem of estimating $x$ from $y$ is often ill-posed or ill-conditioned, resulting in significant uncertainty about the true value of $x$ \cite{robert2007bayesian} (this arises for example in compressive sensing problems where the dimensions $M \ll N$). Bayesian imaging methods address this difficulty by using prior knowledge about $x$ to regularise the estimation problem and reduce the uncertainty about $x$ \cite{robert2007bayesian}. Formally, they model $x$ as a random vector with prior distribution $p(x)$ promoting expected properties (e.g., sparsity or smoothness), and combine observed and prior information by using Bayes' theorem to produce the posterior distribution \cite{robert2007bayesian}
\begin{equation}\label{eq:posterior1}
p(x|y) = \frac{p(y|x)p(x)}{\int_{\R^N} p(y|x)p(x)\textrm{d}x}\,,
\end{equation}
which models our knowledge about $x$ after observing $y$. \footnote{Notice that we use \emph{generic} Bayesian notation. We use $p$ for all density functions, and use conditioning to implicitly distinguish between random variables and their realization.}

Bayesian methods have been successfully applied to a wide range of imaging problems, including for example image denoising \cite{Lebrun2013}, inpainting \cite{Niknejad2018}, deblurring \cite{BioucasDias2006}, fusion \cite{Wei2016}, unmixing \cite{Altmann2015}, tomographic reconstruction \cite{Giovannelli_book_2015}, compressive sensing sparse regression \cite{Wipf2007}, and segmentation \cite{Pereyra2013}. Solutions can then be computed by using advanced stochastic simulation and optimisation algorithms, as well as deterministic algorithms related to variational Bayes and message passing approximations \cite{pereyra2016survey, Pustelnik2016}. Moreover, log-concave formulations have also received a lot of attention lately because they lead to solutions that can be efficiently computed by using modern convex optimisation methods \cite{Chambolle2016}.

In addition to the chosen log-concave uniform likelihood, in a manner akin to \cite{pereyra2017maximum}, here we assume that the prior distribution $p(x)$ of $x$ is log-concave, and that the following Assumption holds, where $\Gamma_0(\R^N)$ denotes the set of lower semi-continuous, proper, convex functions from $\R^N$ to $]-\infty,+\infty[$.
\begin{assumption}\label{Ass:posterior}
The posterior distribution $p(x|y)$ is given by 
\begin{equation}
\label{eq:posterior2}
\begin{cases}
p(x|y) \propto {\exp \big({-g_1(x,y) - g_2(x)} \big)}\, ,	\\
g_1(x, y) =  \iota_{\mathcal{B}(y,\epsilon)} (\Phi x),
\end{cases}
\end{equation}
where $g_2(x) = - \log p(x)  \in \Gamma_0(\R^N)$, and $\iota_{\mathcal{B}(y,\epsilon)}$ denotes the indicator function\footnote{For a closed, non-empty, convex subset $\Cc$ of $\R^N$, the indicator function of $\Cc$ at a point $x \in \R^N$ is defined by $\iota_{\Cc} (x) = 0$ if $x \in \Cc$, and $\iota_{\Cc} (x) = +\infty$ otherwise.} of the $\ell_2$ ball $\mathcal{B}(y,\epsilon)$. 
\end{assumption}
As a consequence of Assumption~\ref{Ass:posterior}, the model described in equation~\eqref{eq:posterior2} is log-concave.

For example, in many imaging problems $g_2$ is of the form
\begin{equation}	\label{def:reg_gen}
(\forall x \in \R^N)\quad
g_2(x) = \lambda f( \Psi x) + \iota_{\mathcal{C}}(x),
\end{equation}
where $\lambda>0$ is the regularization parameter, $\Psi \colon \R^N \to \R^L$ is an analysis operator, $f \in \Gamma_0(\R^L)$ typically corresponds to an $\ell_p$ norm ($p\ge 1$) promoting regularity or sparsity in the domain induced by $\Psi$, and $\mathcal{C}$ is a closed non-empty convex subset of $\R^N$ encoding constraints on the solution space. 
Observe that \eqref{def:reg_gen} encompasses sparsity aware models developed during the last decade in the compressed sensing framework \cite{donoho2006compressed, candes2006compressive}. In particular, $\Psi$ may be related to a differential operator (e.g. the horizontal and vertical gradients defining the total variation (TV) image prior \cite{chambolle2004algorithm, rudin1992nonlinear}, or a possibly redundant wavelet transform \cite{Mallat_book}.

Once a model $p(x|y)$ has been defined, imaging methods generally solve the image estimation problem by computing a point estimator of $x|y$. In particular, most modern methods exploit the convexity properties of $p(x|y)$ and use the MAP estimator
\begin{equation}\label{hpd2}
x^\dagger \in \Argmax_{x \in \mathbb{R}^N} p(x|y) 
\quad \Leftrightarrow \quad
x^\dagger \in \Argmin_{x \in \mathbb{R}^N} g_1(x,y) + g_2(x) ,
\end{equation}
which can be computed efficiently using convex optimization techniques \cite{boyd2004convex,combettes2011proximal,komodakis2015playing}. In particular, the so-called proximal optimization methods received a lot of attention, for example forward-backward algorithms \cite{Attouch_Bolte_2011, beck2009fast, Chambolle2015, Chouzenoux_2016, combettes2005signal, Tseng_P_2000_j-siam-control-optim_Modified_fbs}, and primal-dual algorithms \cite{Alotaibi_A_2014_Solving_ccm, Bot_R_2014_jmiv_conv_primal_apd, Briceno_L_2011_j-siam-opt_mon_ssm, Chambolle_A_2010_first_opdacpai, condat2013primal, Combettes_P_2012_j-svva_pri_dsa, Esser_E_2010_j-siam-is_gen_fcf, komodakis2015playing, vu2013splitting}.

\begin{figure}[h]
\begin{tabular}{@{}cc@{}}
	\includegraphics[height=5.3cm]{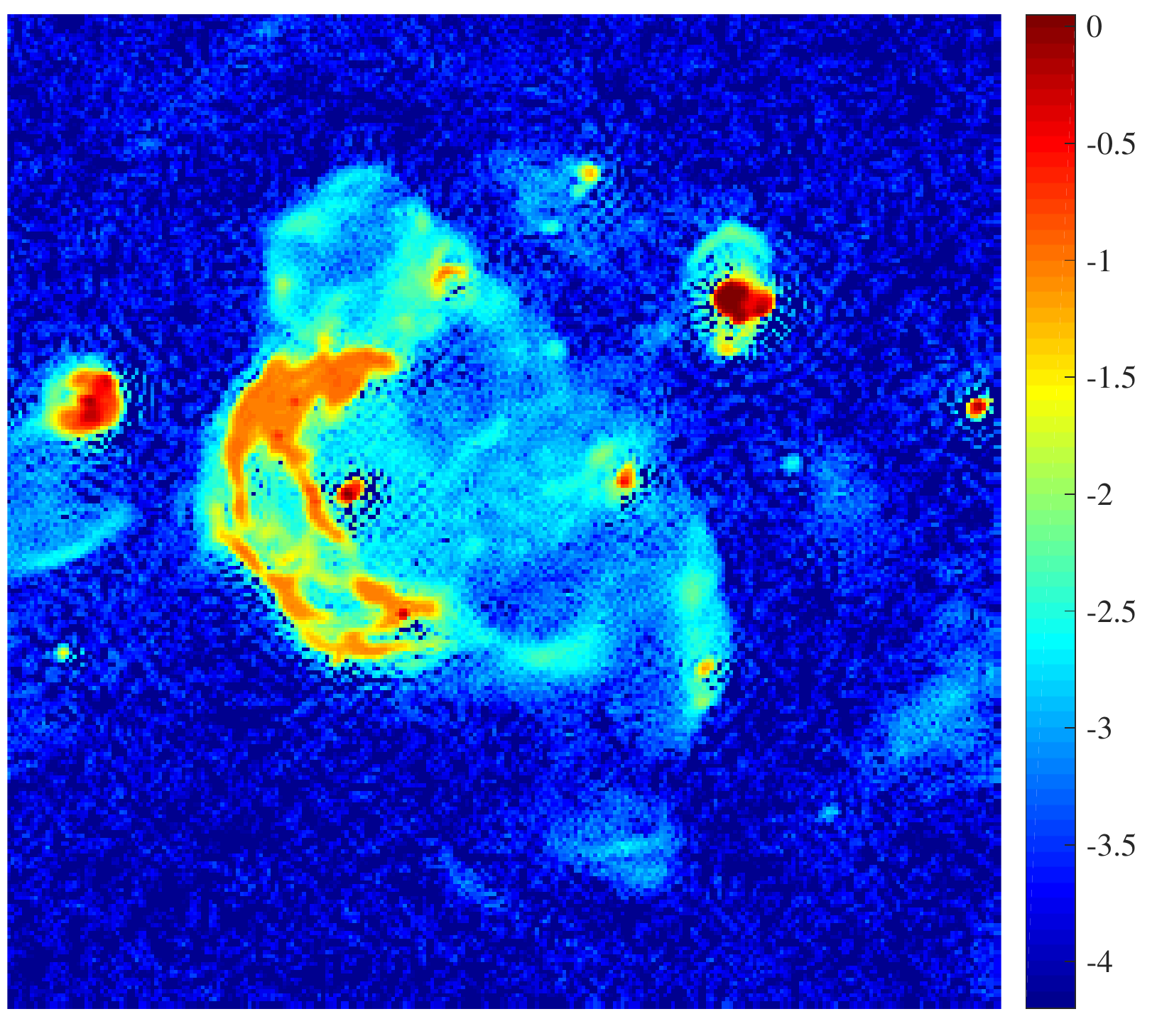}
&	\includegraphics[height=5.0cm]{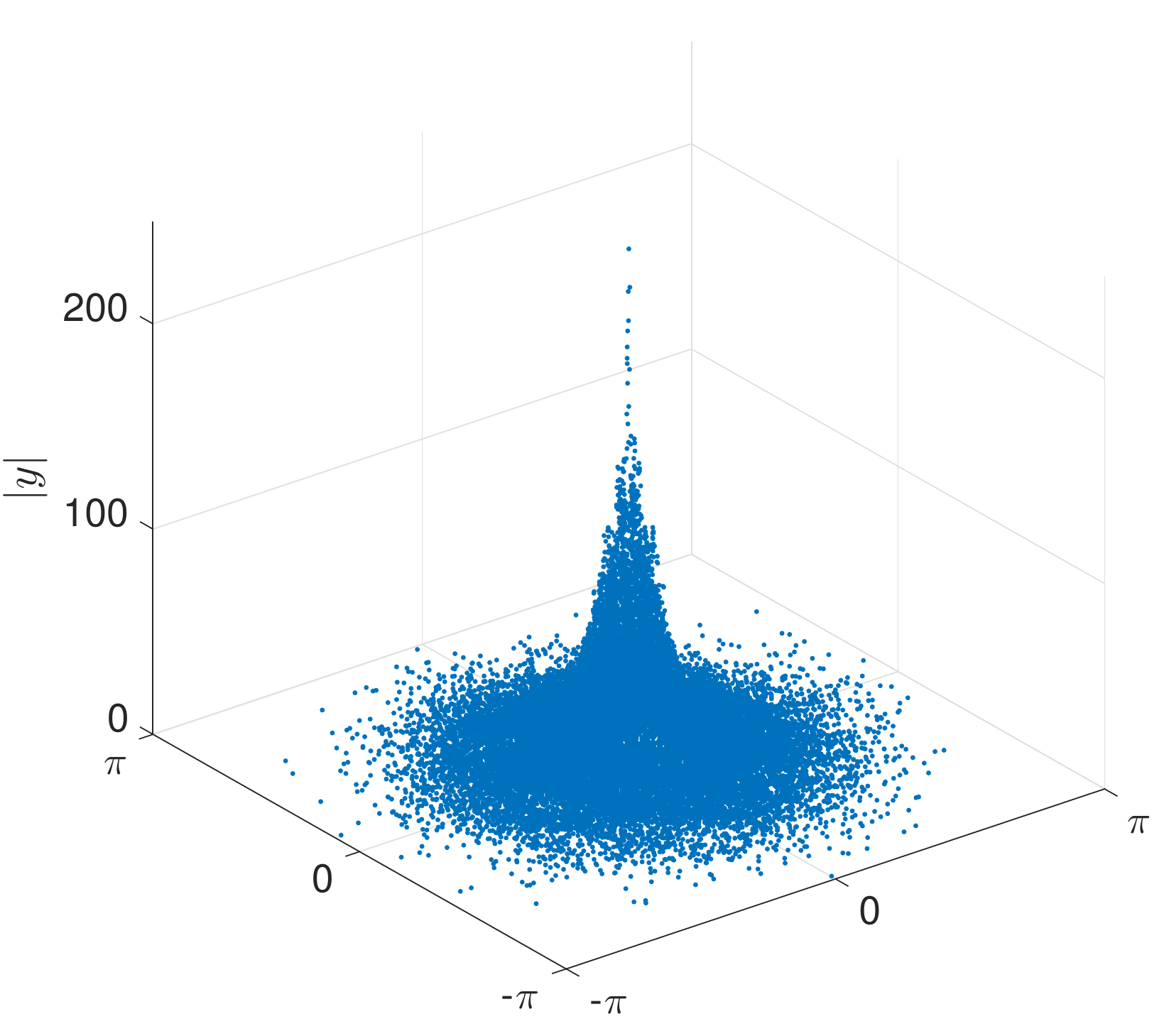}\\
(a) & (b)
\end{tabular}
\caption{\label{Fig:exampleUQ}
Illustration for the need of uncertainty quantification in the context of RI imaging with random Gaussian Fourier samplings. 
(a) MAP estimate $x^\dagger \in \R^N$ of the W28 supernova, in log scale, obtained from under-sampled (continuous) Fourier measurements $y\in \C^M$, with $N=256 \times 256$, $M = N/2$, and $\sigma^2 =0.01$. 
(b) Normalized continuous Fourier space showing the magnitude of the measurements $y$. 
}
\end{figure}

Summarising $x|y$ with a single point $x^\dagger$ has the key advantage of producing a solution that can be easily displayed and visually analysed. However, a main limitation of this approach is that it does not provide any information regarding the uncertainty in the solution delivered \cite{biegler2011large}. As explained previously, quantifying this uncertainty is important in many applications related to quantitative imaging, scientific inquiry, and image-driven decision-making, where it is necessary to analyse images as high-dimensional physical measurements and not as pictures. This analysis is particularly important in imaging problems that are ill-posed or ill-conditioned because of their high intrinsic uncertainty. For illustration, Fig.~\ref{Fig:exampleUQ}(a) shows an estimate $x^\dagger$ of the W28 supernova, obtained from the under-sampled Fourier measurements of Fig.~\ref{Fig:exampleUQ} (b), with $M/N = 0.5$, by using a Bayesian model tailored to radio-astronomical imaging \cite{wiaux2009compressed}. Clearly, the estimation problem is challenging given the severe under-sampling. For this specific imaging setup, what is the uncertainty involved in the estimate $x^\dagger$? In particular, are we confident about the different structures observed in $x^\dagger$? We expect the main structures to be reliably recovered, but is this also true for the structures of weak amplitude in the background? Perhaps they are reconstruction artefacts. 

The objective of this paper is contribute statistical imaging methodology to probe the data and investigate this type of questions. 
The proposed method, namely BUQO (Bayesian Uncertainty Quantification by Optimization), consists in quantifying the uncertainty of the structures under scrutiny by performing a Bayesian hypothesis test. This test consists of two steps: firstly, it postulates that the structures are not present in the true image, and secondly the data and prior knowledge are used to determine if this null hypothesis is rejected with high probability. Computing such tests for imaging problems is often intractable due to the high dimensionality involved. In this work, we propose to leverage probability concentration phenomena and the underlying convex geometry to formulate the Bayesian hypothesis test as a convex problem. The resulting problem can then be solved efficiently by using scalable optimization algorithms. This allows scaling to high-resolution and high-sensitivity imaging problems that are computationally unaffordable for other Bayesian computation approaches. 
To illustrate the proposed BUQO methodology, we apply it to a range of challenging Fourier imaging problems arising in astronomy and medicine.

The remainder of the paper is organized as follows. Section~\ref{Ssec:Bayes_quant},  introduces the Bayesian uncertainty quantification framework that underpins our work. The proposed methodology is presented in Section~\ref{sec:method}. Section~\ref{sec:results} illustrates the method on two challenging Fourier imaging problems related to radio astronomy and magnetic resonance imaging. Section~\ref{sec:discuss} is a discussion of the proposed methodology. Conclusions and perspectives for future work are finally reported in  Section~\ref{Sec:conclusion}.

\section{Imaging and Bayesian uncertainty quantification}
\label{Ssec:Bayes_quant}

The Bayesian \linebreak paradigm provides a powerful methodological framework to analyse uncertainty in imaging inverse problems. One main approach, adopted in \cite{pereyra2017maximum, Cai_2017_BayesII}, is to compute confidence or credible regions that indicate where $x|y$ takes values with high probability. This allows testing if specific images belong to the set of likely solutions and making some preliminary analyses. However, its capacity for formal uncertainty quantification is very limited. 

To properly assess the degree of confidence in specific image structures it is necessary to perform a Bayesian hypothesis test. Formally, we postulate two hypotheses: 
\begin{equation*}
\begin{split}
H_0:\quad& \textrm{The structure of interest is ABSENT in the true image}\, ,\\
H_1:\quad& \textrm{The structure of interest is PRESENT in the true image}.
\end{split}
\end{equation*}
These hypotheses split the image space $\mathbb{R}^N$ in two regions: a set $\Sc \subset \R^N$ associated with $H_0$ containing all the images (i.e. solutions) \emph{without the structure of interest}, and the complement $\mathbb{R}^N \setminus \Sc$ associated with $H_1$. 
The goal of the hypothesis test is then to determine if the observed data $y$ supports $H_0$ or $H_1$; that is, if it supports the claim that the estimated structure is real or corresponds to a reconstruction artefact.
This is formalized by using Bayesian decision theory \cite{robert2007bayesian}, a statistical framework for decision-making under uncertainty. 
Precisely, from Bayesian decision theory, we reject $H_0$ in favour of $H_1$ with significance level $\alpha \in ]0,1[$ if
\begin{equation}\label{test:eqn}
\begin{split}
P\left[H_0 | y \right] &= P\left[x \in \Sc | y \right] = \int_{\Sc} p(x|y)\textrm{d}x \leq \alpha\,, 
\end{split}
\end{equation}
or equivalently, if the ratio of posterior probabilities
\begin{equation*}
\frac{P\left[H_1 | y \right]}{P\left[H_0 | y \right]} = \frac{P\left[x \in \mathbb{R}^N \setminus \Sc | y \right]}{P\left[x \in \Sc | y \right]} \geq \dfrac{1-\alpha}{\alpha}\, , 
\end{equation*}
where we recall that rejecting $H_0$ means that the structure considered is real (i.e. not an artefact).

Unfortunately, computing hypothesis tests for images requires calculating probabilities w.r.t. $p(x|y)$, which are generally intractable because of the high-\linebreak dimensionality involved. These probabilities can be approximated with high accuracy by Monte Carlo integration \cite{robert2004monte} (for example by using the state-of-the-art proximal Markov chain Monte Carlo (MCMC) algorithm \cite{durmus2016efficient, pereyra2016proximal}). Nevertheless, the computational cost associated with these methods is often several orders of magnitude higher than that involved in computing the MAP estimator by convex optimization \cite{Chambolle2016}, which will be discussed later in Section~\ref{Ssec:discuss:comp_MCMC}, in the context of our simulations. Consequently, most of the imaging methods used in practice do not quantify uncertainty.

\section{Proposed BUQO method}
\label{sec:method}

\subsection{Uncertainty quantification approach}

A main contribution of this paper is to exploit the log-concavity of $p(x|y)$ to formulate the hypothesis test \eqref{test:eqn} as a convex program that can be solved straightforwardly by using modern convex optimization algorithms when $\Sc$ is a convex set. The proposed method only assumes knowledge of the MAP estimator $x^\dagger$, and does not require computing probabilities. We first introduce the convex program associated with \eqref{test:eqn}, then describe the proposed convex optimization algorithm used to solve it, and subsequently present our approach to specify the set $\Sc$ associated with $H_0$. 
In the remainder of the paper, we make the following assumption on $\Sc$.
\begin{assumption}	\label{Ass:S}
The subset $\Sc$ of $\R^N$ is convex.
\end{assumption}

The proposed method solves the hypothesis test by comparing $\Sc$ with the region of the solution space where most of the posterior probability mass of $x|y$ lies. Such regions are known as posterior credible sets in the Bayesian literature \cite{pereyra2017maximum}. Precisely, a set $\Cc_{\alpha}$ is a posterior credible region with confidence level $(1-\alpha)$ if $P\left(x \in \Cc_{\alpha}|y\right) = 1-\alpha$ for $\alpha \in ]0,1[$. Computing credible regions exactly is difficult because it requires calculating probabilities w.r.t. $p(x|y)$, which is too computationally expensive when the dimension of $x$ is large. Here we take advantage of the conservative credible region recently proposed in \cite{pereyra2017maximum}, which is available for free in problems solved by MAP estimation. Precisely, for any $\alpha \in ]4\exp(-N/3),1[$, we use the region
\begin{equation}\label{eq:hpdapp1}
\widetilde{\Cc}_{\alpha} = \left\{ x \in \R^N \mid \Phi x \in \Bc_2(y,\varepsilon) \text{ and } g_2(x) \leq \tilde{\eta}_\alpha\right\},  
\end{equation}
where the threshold $\tilde{\eta}_\alpha = g_2(x^\dagger) + N(\tau_\alpha + 1)$ with $\tau_\alpha = \sqrt{{16 \log(3/\alpha)}/{N}}$ and $x^\dagger $ is the MAP estimator \eqref{hpd2} such that $\Phi x^\dagger \in \Bc_2(y,\varepsilon)$.

The set $\widetilde{\Cc}_{\alpha}$ is a conservative Bayesian confidence region for $x|y$; i.e., $P(x \in \widetilde{\Cc}_{\alpha} | y ) \geq 1-\alpha$. Observe that, in addition to being computationally straightforward, $\widetilde{\Cc}_{\alpha}$ is also a convex set because $p(x|y)$ is log-concave and has convex superlevel sets. This property will play a central role in our algorithm to compute the hypothesis test.
Also note that the highest-posterior-density region $\mathcal{C}^*_\alpha = \{x | g_1(x,y)+g_2(x) \le \eta_\alpha\} $, with $\eta_\alpha \in \mathbb{R}$ chosen such that $\int_{\mathcal{C}_\alpha^*} p(x|y)dx = 1-\alpha$, is the tightest credibility region in the sense of compactness or minimum volume \cite{robert2007bayesian}. 
It is also a convex set as it corresponds to the sublevel set of a convex function. The approximate credibility region $\widetilde{\mathcal{C}}_\alpha$ defined in \eqref{eq:hpdapp1} results from an analytical approximation $\widetilde{\eta}_\alpha$, with $\widetilde{\eta}_\alpha \ge \eta_\alpha$, that can be obtained by leveraging the concentration of measure phenomenon. 
The set $\widetilde{\mathcal{C}}_\alpha$ is the tightest approximation of $\mathcal{C}^*_\alpha$ that can be obtained from the knowledge of the MAP estimate (which is computed by convex optimisation) \cite{pereyra2017maximum}.
It also follows from its definition that $\widetilde{\mathcal{C}}_\alpha$ is convex.

\begin{theorem}	\label{Thm:proba_feas}
Consider the posterior distribution $p(x|y)$ given in \linebreak Assumption~\ref{Ass:posterior}. 
For any $\alpha \in ]4\exp(-N/3),1[$, let $\widetilde{\Cc}_\alpha$ be the convex set \eqref{eq:hpdapp1}, and $\Sc$ be the set associated with the null hypothesis $H_0$ satisfying Assumption~\ref{Ass:S}. If the following non-feasibility condition holds
$$
\widetilde{\Cc}_\alpha \cap \Sc = \emp\,,  
$$
then $H_0$ is rejected in favour of $H_1$ with significance $\alpha$,
$$
P\left[H_0 | y \right] \leq \alpha\quad \textrm{and}\quad \frac{P\left[H_1 | y \right]}{P\left[H_0 | y \right]} \geq \frac{1-\alpha}{\alpha}\,.
$$
\end{theorem}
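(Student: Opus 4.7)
The plan is to derive both conclusions directly from the conservative credibility property of $\widetilde{\mathcal{C}}_\alpha$ recalled just above the theorem, combined with the disjointness hypothesis $\widetilde{\mathcal{C}}_\alpha \cap \mathcal{S} = \emptyset$. Interestingly, convexity of $\mathcal{S}$ and of $\widetilde{\mathcal{C}}_\alpha$ is not actually needed for the probabilistic conclusion itself; it is needed elsewhere for verifying the hypothesis $\widetilde{\mathcal{C}}_\alpha \cap \mathcal{S} = \emptyset$ tractably via convex feasibility. So the proof will really just be a measure-theoretic one-liner dressed up with bookkeeping for the posterior odds.

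First I would invoke the result from \cite{pereyra2017maximum}, namely that for $\alpha \in \,]4\exp(-N/3),1[$ the set $\widetilde{\mathcal{C}}_\alpha$ defined in \eqref{eq:hpdapp1} satisfies
\begin{equation*}
P\bigl(x \in \widetilde{\mathcal{C}}_\alpha \,\big|\, y\bigr) \;\geq\; 1-\alpha.
\end{equation*}
Next, from $\widetilde{\mathcal{C}}_\alpha \cap \mathcal{S} = \emptyset$ I would deduce $\mathcal{S} \subseteq \mathbb{R}^N \setminus \widetilde{\mathcal{C}}_\alpha$, so that monotonicity of the posterior measure gives
\begin{equation*}
P[H_0 \mid y] \;=\; P[x \in \mathcal{S} \mid y] \;\leq\; P\bigl(x \in \mathbb{R}^N \setminus \widetilde{\mathcal{C}}_\alpha \,\big|\, y\bigr) \;=\; 1 - P\bigl(x \in \widetilde{\mathcal{C}}_\alpha \,\big|\, y\bigr) \;\leq\; \alpha.
\end{equation*}
This establishes the first claimed inequality.

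For the posterior odds, I would use $P[H_1 \mid y] = 1 - P[H_0 \mid y]$ and observe that the map $t \mapsto (1-t)/t$ is strictly decreasing on $\,]0,1]$. Applying this to $t = P[H_0 \mid y] \leq \alpha$ yields
\begin{equation*}
\frac{P[H_1 \mid y]}{P[H_0 \mid y]} \;=\; \frac{1 - P[H_0 \mid y]}{P[H_0 \mid y]} \;\geq\; \frac{1 - \alpha}{\alpha},
\end{equation*}
which is the second claim.

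The only subtlety — and not really an obstacle, since it has been addressed upstream — is that the credibility bound $P(x \in \widetilde{\mathcal{C}}_\alpha \mid y) \geq 1 - \alpha$ is not exact but relies on the concentration-of-measure argument of \cite{pereyra2017maximum}, which is why the admissible range of $\alpha$ is restricted to $]4\exp(-N/3),1[$. Everything else in the proof is a direct consequence of set inclusion and the definition of conditional probability, so no additional machinery is required.
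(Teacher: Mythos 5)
Your proof is correct and follows essentially the same route as the paper's: the set inclusion $\Sc \subset \R^N \setminus \widetilde{\Cc}_\alpha$ combined with the conservative credibility bound $P(x \in \widetilde{\Cc}_\alpha \mid y) \geq 1-\alpha$ from \cite[Theorem 3.1]{pereyra2017maximum}. You merely make explicit the monotonicity argument for the posterior-odds inequality, which the paper leaves implicit, and your remark that convexity plays no role in this probabilistic step is accurate.
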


\begin{proof}
If $\widetilde{\Cc}_\alpha \cap \Sc = \emp$, then we have $\Sc \subset \mathbb{R}^N \setminus \widetilde{\Cc}_\alpha$, 
which implies that
$P\left[H_0 | y \right] = P\left[x \in \Sc | y \right] \le 1-P\left[x \in \widetilde{\Cc}_\alpha | y \right]$. 
In addition, according to \cite[Theorem 3.1.]{pereyra2017maximum}, for any $\alpha \in ]4\exp(-N/3),1[$, we have $P\left[H_0 | y \right] \leq \alpha$, hence concluding the proof.
\end{proof}

\begin{remark}	\label{rq:contrap}
The converse of Theorem \ref{Thm:proba_feas} is not true; i.e.,  $\widetilde{\Cc}_\alpha \cap \Sc \neq \emp$ does not imply $P\left[H_0 | y \right] \geq \alpha$. It is possible that $\Sc \subset \widetilde{\Cc}_\alpha$ with $P\left[H_0 | y \right]$ arbitrarily small. Hence, when $\widetilde{\Cc}_\alpha \cap \Sc \neq \emp$ we fail to reject the null hypothesis $H_0$.
\end{remark}

From Theorem~\ref{Thm:proba_feas}, we can verify if $P\left[H_0 | y \right] \leq \alpha$ by solving the following problem:
\begin{equation}	\label{pb:feas_emp}
\text{determine if }  \widetilde{\Cc}_\alpha \cap \Sc = \emp\,.
\end{equation}
There are two possible outcomes: either $\widetilde{\Cc}_\alpha \cap \Sc = \emp$ or $\widetilde{\Cc}_\alpha \cap \Sc \neq \emp$. If $\widetilde{\Cc}_\alpha \cap \Sc = \emp$ for a small value of $\alpha$, we conclude that there is strong evidence for the structure considered. Moreover, in that case we also compute the distance between $\widetilde{\Cc}_\alpha $ and $ \Sc$,
\begin{equation}
\dist(\widetilde{\Cc}_\alpha, \Sc)
= \inf \| \widetilde{\Cc}_\alpha - \Sc \|
= \inf \Big\{ \| x_{\widetilde{\Cc}_\alpha}- x_{\Sc} \| \, : \, (x_{\widetilde{\Cc}_\alpha}, x_{\Sc}) \in \widetilde{\Cc}_\alpha \times \Sc \Big\}.
\end{equation}
We will later discuss using this distance to quantify the uncertainty in the intensity of the structure considered (precisely, to lower bound the structure's intensity).

If we determine that $\widetilde{\Cc}_\alpha \cap \Sc \neq \emp$, this suggests that the evidence for the structure under scrutiny is weak. In particular, that the structure is not present in all of the images that $p(x|y)$ considers likely solutions to our inverse problem. Following on from this, to produce an example of such solution we solve the feasibility problem
\begin{equation}	\label{pb:feas_gen}
\text{find } x^\ddagger \in \widetilde{\Cc}_\alpha \cap \Sc\,.
\end{equation}
We view $x^\ddagger$ as a counter-example solution where the structure of interest does not exist.

Furthermore, we propose to rely on the von Neumann algorithm \cite{vonNeumann_1951, Halperin_1962, Cheney_goldstein_1959, Bregman_1965} to solve problem~\eqref{pb:feas_emp}-\eqref{pb:feas_gen}. 
This POCS algorithm alternates Euclidean projections onto the set $\widetilde{\Cc}_\alpha$ and the set $\Sc$. 
Formally, the Euclidean projection of $x \in \R^N$ onto $\Sc$ is 
\begin{align}
\proj_{\Sc}(x) 
&=	\argmind{u \in \Sc} \| x - u \|^2.
\end{align}
The main iterations of the von Neumann method are described in Algorithm~\ref{algo:POCS_gen}.

\begin{algorithm}[h!]
	\caption{POCS algorithm to solve problem~\eqref{pb:feas_gen}.}
	\begin{algorithmic}[1]
        \State 	
        \textbf{Initialization:} 
        Let $x^{(0)} \in \Sc$.
        \newline
        \State	\vspace{0.5em}
        \textbf{For} $k = 0, 1, \ldots$
		\State	\vspace{0.3em}	\label{algo:POCS_gen:projC}
		$\displaystyle \quad\quad
		x^{(k+\frac12)} = 
		\proj_{\widetilde{\Cc}_\alpha} \big( x^{(k)} \big)$
		\State	\vspace{0.3em}	\label{algo:POCS_gen:projS}
		$ \displaystyle \quad\quad
		x^{(k+1)} = 
		\proj_{\Sc} \big( x^{(k+\frac12)} \big)	$
        \State	\vspace{0.5em}
        \textbf{end for}
	\end{algorithmic}
    \label{algo:POCS_gen}
\end{algorithm}

The following convergence result from \cite[Thm. 4.8]{Bauschke_Borwein_1994} allows to determine if the intersection between $\widetilde{\Cc}_\alpha$ and $ \Sc$ is empty or not. 
\begin{theorem}[Thm. 4.8 in \cite{Bauschke_Borwein_1994}]		\label{thm:POCS:cvg}
Let $(x^{(k+\frac12)})_{k \in \N}$ and $(x^{(k)})_{k \in \N}$ be sequences generated by Algorithm~\ref{algo:POCS_gen}.
Under Assumptions~\ref{Ass:posterior} and~\ref{Ass:S}, the following assertions hold:
\begin{enumerate}
\item	\label{POCS:cv:i}
If $\widetilde{\Cc}_\alpha \cap \Sc \neq \emp$, then the sequences $(x^{(k+\frac12)})_{k \in \N}$ and $(x^{(k)})_{k \in \N}$ both converge to a point $x^\ddagger \in \widetilde{\Cc}_\alpha \cap \Sc$.
\item	\label{POCS:cv:ii}
If $\widetilde{\Cc}_\alpha \cap \Sc = \emp$, then the sequence $(x^{(k+\frac12)})_{k \in \N}$ converges to $x_{\widetilde{\Cc}_\alpha}^\ddagger \in \widetilde{\Cc}_\alpha$ and the sequence $(x^{(k)})_{k \in \N}$ converges to $x_{\Sc}^\ddagger \in \Sc$. In addition, we have $\| x_{\widetilde{\Cc}_\alpha}^\ddagger - x_{\Sc}^\ddagger \| = \dist(\widetilde{\Cc}_\alpha, \Sc)$.
\end{enumerate}
\end{theorem}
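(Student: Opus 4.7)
The plan is to reduce the claim to the cited alternating-projection theorem of Bauschke--Borwein, after checking that its hypotheses are satisfied in our setting. The first step is therefore to verify that $\widetilde{\Cc}_\alpha$ and $\Sc$ are closed, convex, and nonempty subsets of $\R^N$. For $\Sc$ this is Assumption~\ref{Ass:S} (closedness being implicit in its use as a projection target). For $\widetilde{\Cc}_\alpha$ I would rewrite
$$
\widetilde{\Cc}_\alpha \;=\; \Phi^{-1}\!\bigl(\Bc_2(y,\varepsilon)\bigr)\;\cap\;\lev_{\tilde{\eta}_\alpha}\! g_2,
$$
where the first factor is the preimage of a closed Euclidean ball under a linear map, and the second is a sublevel set of $g_2 \in \Gamma_0(\R^N)$. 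Both are closed and convex; and the intersection contains the MAP estimator $x^\dagger$ by the construction of $\tilde{\eta}_\alpha$, so it is nonempty.

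For assertion~\ref{POCS:cv:i}, I would follow the classical von Neumann argument. For any $z \in \widetilde{\Cc}_\alpha \cap \Sc$, firm nonexpansiveness of Euclidean projections onto closed convex sets gives the Fejér monotonicity
$$
\|x^{(k+1)} - z\| \;\le\; \|x^{(k+\frac12)} - z\| \;\le\; \|x^{(k)} - z\|,
$$
so both sequences are bounded. Telescoping the differences $\|x^{(k)}-z\|^2 - \|x^{(k+1)}-z\|^2$ and using the projection characterization yield $\|x^{(k+\frac12)} - x^{(k)}\| \to 0$ and $\|x^{(k+1)} - x^{(k+\frac12)}\| \to 0$. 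Any accumulation point therefore lies in $\widetilde{\Cc}_\alpha \cap \Sc$ (by continuity of the projections), and uniqueness of the limit follows from Fejér monotonicity with respect to that accumulation point.

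For assertion~\ref{POCS:cv:ii}, the sequences can no longer share a limit, and the goal shifts to showing that they converge to the endpoints of a best-approximation pair. The strategy is to analyse the displacement $v^{(k)} := x^{(k+\frac12)} - x^{(k+1)}$: a short computation using firm nonexpansiveness shows that $\|v^{(k)}\|$ is nonincreasing and bounded below by $\dist(\widetilde{\Cc}_\alpha,\Sc)$, and that $v^{(k)}$ has as unique accumulation point the gap vector $v^\star$ realising that distance. One then lifts this to convergence of the sequences themselves by establishing boundedness of $(x^{(k)})_{k\in\N}$ and applying finite-dimensional compactness.

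The main obstacle — and the reason this second part genuinely needs \cite{Bauschke_Borwein_1994} — is exactly this last point: in general Hilbert spaces the infimum defining $\dist(\widetilde{\Cc}_\alpha,\Sc)$ need not be attained and the alternating sequences may fail to converge strongly, or even to be bounded. In our finite-dimensional setting, however, boundedness is ensured by the coercive structure of $g_2$ that is standard in imaging (see~\eqref{def:reg_gen}, where either $f\circ\Psi$ is coercive or $\mathcal{C}$ is bounded), which in turn guarantees that $\widetilde{\Cc}_\alpha$ is bounded. With boundedness in hand, the argument of \cite[Thm.~4.8]{Bauschke_Borwein_1994} identifies the two limits as the unique best-approximation pair, giving $\|x_{\widetilde{\Cc}_\alpha}^\ddagger - x_{\Sc}^\ddagger\| = \dist(\widetilde{\Cc}_\alpha,\Sc)$ and concluding the proof.
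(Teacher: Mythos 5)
The paper does not actually prove this statement: it is quoted directly as Theorem~4.8 of \cite{Bauschke_Borwein_1994}, so your overall route --- check that $\widetilde{\Cc}_\alpha$ and $\Sc$ are nonempty closed convex sets and then invoke the alternating-projection theorem --- is essentially the paper's own approach, with the welcome extras that you verify the hypotheses explicitly (including noting that closedness of $\Sc$ is only implicit in Assumption~\ref{Ass:S}) and that your self-contained Fej\'er-monotonicity argument for assertion (i) is the standard, correct von Neumann proof.

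The weak point is your treatment of assertion (ii). You correctly identify the real subtlety (when $\widetilde{\Cc}_\alpha \cap \Sc = \emp$ the infimum defining $\dist(\widetilde{\Cc}_\alpha,\Sc)$ need not be attained, in which case the iterates are unbounded and do not converge), but your fix --- that $\widetilde{\Cc}_\alpha$ is bounded because ``either $f\circ\Psi$ is coercive or $\mathcal{C}$ is bounded'' --- imports a hypothesis that is nowhere in the theorem. The statement is made only under Assumptions~\ref{Ass:posterior} and~\ref{Ass:S}, where $g_2$ is an arbitrary element of $\Gamma_0(\R^N)$: for a TV-type prior $\Psi$ has a nontrivial kernel, $\mathcal{C}=[0,+\infty[^N$ in the paper's own experiments is unbounded, and the background-removal set of Definition~\ref{example:background} leaves the pixels outside $\Mc$ unconstrained above, so neither $\widetilde{\Cc}_\alpha$ nor $\Sc$ need be bounded and attainment of the gap is genuinely not automatic (two unbounded closed convex sets in $\R^2$ can have distance $0$ without intersecting, and alternating projections then diverge). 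The honest resolution is the one in \cite{Bauschke_Borwein_1994} itself: the cited theorem establishes convergence of the difference $x^{(k+\frac12)}-x^{(k+1)}$ to the gap vector unconditionally, and convergence of the iterates under the condition that the distance is attained (equivalently $\widetilde{\Cc}_\alpha\cap(\Sc - v)\neq\emp$), a caveat which the paper's formulation of assertion (ii) silently inherits. So either state attainment (or boundedness of one of the sets) as an explicit hypothesis, or reproduce the trichotomy of the cited theorem; as written, the coercivity step is an unsupported claim rather than a consequence of the stated assumptions.
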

Versions of the POCS method with acceleration and approximated projections are discussed in Section~\ref{Ssec:discuss:fast}. 
A simple example illustrating Theorem~\ref{thm:POCS:cvg} is given in Figure~\ref{Fig:POCS_illustration}. 
\begin{figure}
\begin{center}
\includegraphics[width=8cm]{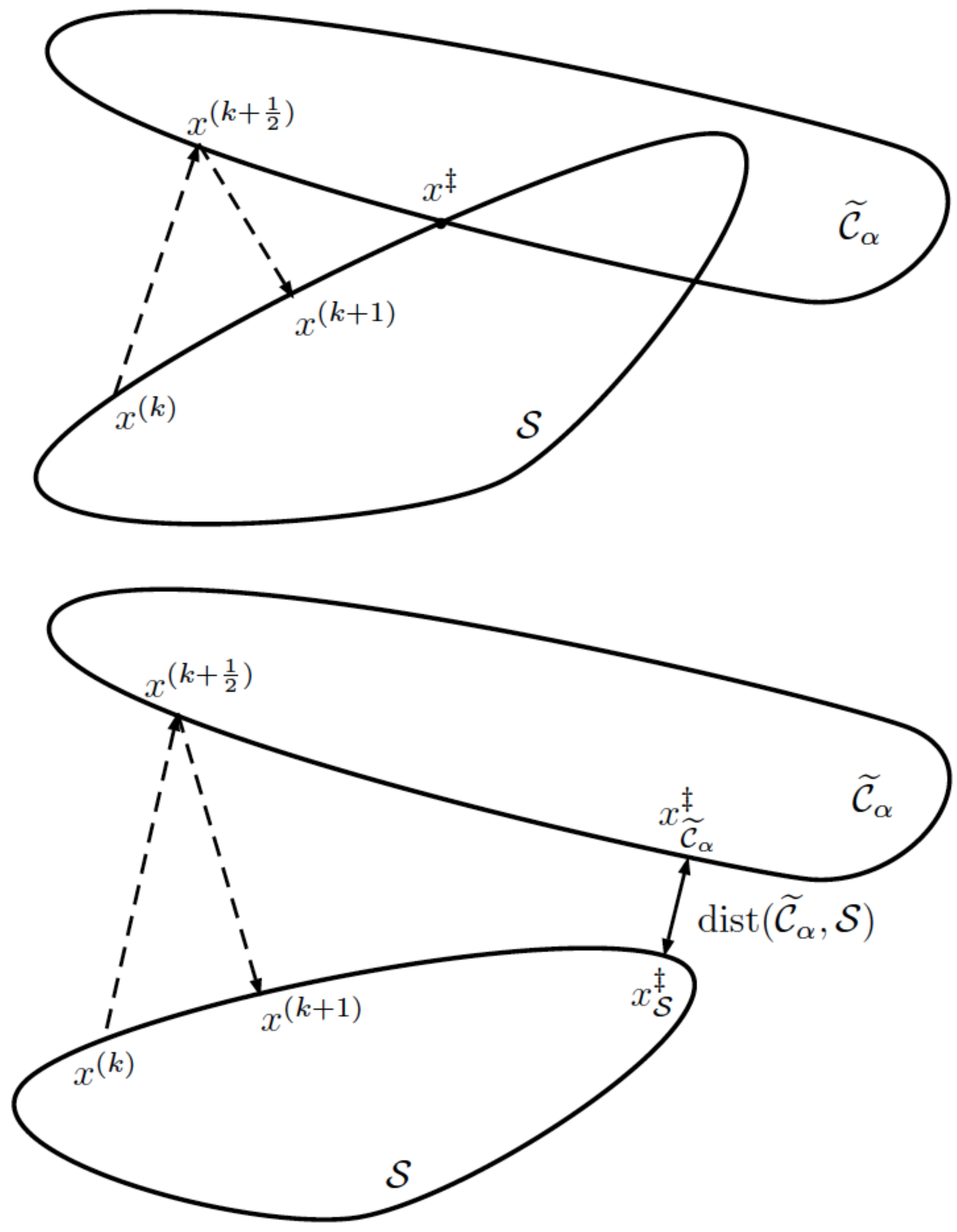} 
\end{center}
\caption{\label{Fig:POCS_illustration}
Illustration of few iterations of Algorithm~\ref{algo:POCS_gen} in the case when $\widetilde{\Cc}_\alpha \cap \Sc \neq \emp$ (top) and $\widetilde{\Cc}_\alpha \cap \Sc = \emp$ (bottom). }
\end{figure}

\subsection{Choice of the set $\Sc$}
\label{Ssec:def_S}

We are now in a position to present our approach to construct the set $\Sc$. 
This construction should be intuitive, easy to interpret, and sufficiently flexible to accommodate a broad range of scenarios. Also, it should guarantee that $\Sc$ is convex (Assumption~\ref{Ass:S}) and hence that the non-feasibility condition $\widetilde{\Cc}_\alpha \cap \Sc = \emp$ is easy to evaluate. 

We define $\Sc$ as the intersection of $L$ convex sets $\Sc_1,\ldots, \Sc_L$ related to different properties that we wish to encode in the test, i.e.,
\begin{equation}\label{defS:eqn}
\Sc = \Big\{ x \in \R^N \, \mid \,
                  (\forall l \in \{1, \ldots, L \}) \quad x \in \Sc_l \Big\}\, .
\end{equation}
It is important to emphasize that the projection onto the set $\Sc$, as defined above, may not have a closed form expression. In this case it is necessary to adopt a sub-iterative approach, for example by using a best-approximation method (e.g. Dykstra's algorithm, see \cite{Bauschke_Borwein_1994, bauschke2017convex} for details). Similarly, when the sets $(\Sc_l)_{1 \le l \le L}$ are sophisticated, then primal-dual methods can be used \cite{komodakis2015playing}.
We illustrate the definition of $\Sc$ with the following two examples that will be also relevant for the experiments that we report in Section~\ref{sec:results}. The first example is related to spatially localized image structures appearing in $x^\dagger$, such as a tumour in a medical image. The second example is related to background removal; this is useful for instance to assess low-intensity sources appearing in the background of astronomical images. 
Before giving the particular definitions associated with the localized structures and the background, we need to introduce an additional image $x^\dagger_{\Sc} \in \Sc$ defined such that $x^\dagger - x^\dagger_{\Sc}$ corresponds to the structure of interest, as it appears in the MAP estimate $x^\dagger$ (formal definitions are given for the two particular types of structures defined below). 
In addition, we introduce the operator $\Mc \colon \R^N \to \R^{N_{\Mc}}$ selecting the structure of interest. For an image $x \in \R^N$, $\Mc(x) \in \R^{N_{\Mc}}$ denotes either the region of the spatially localized structure (in Definition~\ref{example:structure}), or the background (in Definition~\ref{example:background}). In both the cases, $\Mc^c \colon \R^N \to \R^{N-N_{\Mc}}$ denotes the complementary operator of $\Mc$.
 
\begin{definition}[Spatially localized structures]	\label{example:structure}
To assess the confidence in a structure localized in a region $\Mc(x)$ of the image $x$, we construct $\Sc$ by using an inpainting technique $\Lc$ that fills the pixels $\Mc(x)$ with the information in the other image pixels $\Mc^c(x) $. To ensure that $\Sc$ is convex we define $\Lc \colon [0,+\infty[^{N-N_{\Mc}} \to [0,+\infty[^{N_{\Mc}}$ as a positive linear operator, and allow deviations from this linear inpainting by as much as $\pm \tau$ per pixel, for some tolerance value $\tau > 0$. This inpainting could potentially amplify the energy in $\Mc(x)$ and lead to artificial structures. To prevent this we enforce that $\Mc(x)\in\Bc_2 ( b,\theta )$ where $b \in \R^{N_{\Mc}}$ is a reference background level for $\Mc^c(x)$ and $\theta > 0$ controls the energy in $\Mc(x)$. Formally, we use \eqref{defS:eqn} with $L=3$ given by
\begin{equation}	\label{def:S:structure}
\begin{cases}
\Sc_1 = [0,+\infty[^N,	\\
\Sc_2 = \Big\{ x \in \R^N \, \mid \,
				     \Mc(x) - \Lc \big( \Mc^c (x) \big) \in [-\tau, \tau]^{N_{\Mc}} \Big\},	\\
\Sc_3 = \Big\{ x \in \R^N \, \mid \,
				     \Mc(x) \in \Bc_2 ( b,\theta ) \Big\}.
\end{cases}
\end{equation}
In this case, we define $x^\dagger_{\Sc}$ such that $\Mc(x^\dagger_{\Sc}) = \Lc \big( \Mc^c(x^\dagger) \big)$ and $\Mc^c (x^\dagger_{\Sc}) = \Mc^c(x^\dagger)$. In addition, $b$ and $\theta$ are chosen such that $x^\dagger_{\Sc}$ belongs to $\Sc$.
\end{definition}

\begin{definition}[Background removal]	\label{example:background}
To assess the confidence in low-intensity structures appearing in the background (e.g., determine if they exist or if they are artefacts due to the reconstruction process), we use \eqref{defS:eqn} with $L=2$ given by
\begin{equation}	\label{def:S:background}
\begin{cases}
\Sc_1 = [0,+\infty[^N,	\\
\Sc_2 = \big\{ x \in \R^N \, \mid \,
				     \Mc(x) \in [\underline{\tau}, \overline{\tau}]^{N_{\Mc}} \big\}\, ,
\end{cases}
\end{equation}
where $\overline{\tau} >\underline{\tau} \geq 0$ are tolerance parameters on the reference background level $b \in \R^{N_{\Mc}}$ for $\Mc(x)$. 
In this case, we define $x^\dagger_{\Sc}$ such that $\Mc(x^\dagger_{\Sc}) = b $ and $\Mc^c (x^\dagger_{\Sc}) = \Mc^c(x^\dagger)$. 
\end{definition}

To conclude, we now discuss our approach for using the distance between $\Sc$ and $\widetilde{\Cc}_\alpha$ to bound the intensity of the structure of interest. Recall that $\dist(\widetilde{\Cc}_\alpha, \Sc) = \| x_{\widetilde{\Cc}_\alpha}^\ddagger - x_{\Sc}^\ddagger \|$ is a by-product of Algorithm~\ref{algo:POCS_gen}. 
To relate this quantity to the structure's intensity we define the normalised intensity of the structure as the ratio between $\dist(\widetilde{\Cc}_\alpha, \Sc)$ and the intensity of the structure present in the MAP, given by $\| x^\dagger - x^\dagger_{\Sc} \|$:
\begin{equation}	\label{def:rho_alpha}
\rho_\alpha 
=	\frac{\| x^\ddagger_{\Sc} - x^\ddagger_{\widetilde{\Cc}_\alpha}\|}{\| x^\dagger - x^\dagger_{\Sc} \|}
=	\frac{\dist(\Sc, \widetilde{\Cc}_\alpha)}{\| x^\dagger - x^\dagger_{\Sc} \|}.
\end{equation}
Notice that $\rho_\alpha = 0$ is equivalent to $\dist(\Sc, \widetilde{\Cc}_\alpha) = 0$. Consequently when $\rho_\alpha = 0$, we conclude that $\Sc \cap \widetilde{\Cc}_\alpha \neq \emp$ and $H_0$ is not rejected. In the case when $\rho_\alpha >0$, we conclude that $H_0$ is rejected and the value of $\rho_\alpha$ corresponds to the energy percentage of the structure that is confirmed in the MAP estimate.

\subsection{Illustration example}
\label{ssec:Illust}

In this section, we provide a simulation example to illustrate the application of the proposed approach for uncertainty quantification. We consider the hypothesis test described in Section~\ref{Ssec:Bayes_quant}, with significance $\alpha = 1\%$.
We focus on the example in radio-astronomical imaging described in Fig.~\ref{Fig:exampleUQ}. We propose to quantify the uncertainty of the spatially localized structure appearing on the left of the image. 
More precisely, the compact source of interest is highlighted in red on the MAP estimate $x^\dagger$, in the top-left image of Fig.~\ref{Fig:Illustration_high}. 
Mathematical details for the definition of the set $\Sc$ are given in Section~\ref{Sssec:RI}.

The two resulting images $x^\ddagger_{\widetilde{\Cc}_\alpha}$ and $x^\ddagger_{\Sc}$ generated by Algorithm~\ref{algo:POCS_gen} are provided on the top-center and top-right images of Fig.~\ref{Fig:Illustration_high}, respectively. 
On the one hand, it can be visually observed that $x^\ddagger_{\widetilde{\Cc}_\alpha}$ and $x^\ddagger_{\Sc}$ are very similar. The structure is neither visible in $x^\ddagger_{\widetilde{\Cc}_\alpha}$, nor in $x^\ddagger_{\Sc}$. This similarity is highlighted on the bottom-center and right images of Fig.~\ref{Fig:Illustration_high}, corresponding to the images $x^\ddagger_{\widetilde{\Cc}_\alpha}$ and $x^\ddagger_{\Sc}$ zoomed in the area of interest. 
On the other hand, this visual observation is confirmed by the value of $\rho_\alpha$. For this example, the structure's confirmed intensity percentage is equal to $\rho_\alpha = 2.52\%$. Even if this value is not zero, we consider that we have $\rho_\alpha \approx 0\%$ due to the numerical approximations involved (see Section~\ref{sec:results} for details). 
Consequently, we conclude that $\widetilde{\Cc}_\alpha \cap \Sc \neq \emp$, $H_0$ is not rejected, and the evidence for the structure is weak. 

The uncertainty quantification conclusions drawn above are characterized by the simulation parameters $(M/N, \sigma^2) = (0.5, 0.01)$. It is reasonable to assume that the uncertainty should decrease if either $M/N$ increases, or $\sigma$ decreases. For the sake of the illustration, we now investigate the Bayesian uncertainty of the same compact source, but we consider the case when $(M/N, \sigma^2) = (1, 0.01)$. Note that other cases will be provided in Section~\ref{sec:results}. 
Results for this second case are provided in Fig.~\ref{Fig:Illustration_low}.
The images $x^\ddagger_{\widetilde{\Cc}_\alpha}$ and $x^\ddagger_{\Sc}$ generated by Algorithm~\ref{algo:POCS_gen} are provided on the top-center and top-right images of Fig.~\ref{Fig:Illustration_low}, respectively. 
It can be visually observed that $x^\ddagger_{\widetilde{\Cc}_\alpha}$ and $x^\ddagger_{\Sc}$ are different: the structure is visible in $x^\ddagger_{\widetilde{\Cc}_\alpha}$, while it is not visible in $x^\ddagger_{\Sc}$. This difference is highlighted on the bottom-center and right images of Fig.~\ref{Fig:Illustration_low}, corresponding to the images $x^\ddagger_{\widetilde{\Cc}_\alpha}$ and $x^\ddagger_{\Sc}$ zoomed in the area of interest. 
The visual observation is confirmed by the value of $\rho_\alpha= 18.76\%$.  
Consequently, we can conclude that $\widetilde{\Cc}_\alpha \cap \Sc = \emp$, and, according to Theorem~\ref{Thm:proba_feas}, $H_0$ is rejected with significance $\alpha= 1\%$ (recall that rejecting $H_0$ is equivalent to stating that the structure considered is real, not an artefact).

\begin{figure}[t!]
\begin{center}
\begin{tabular}{@{}c@{}c@{}c@{}}
	\includegraphics[height=3.8cm]{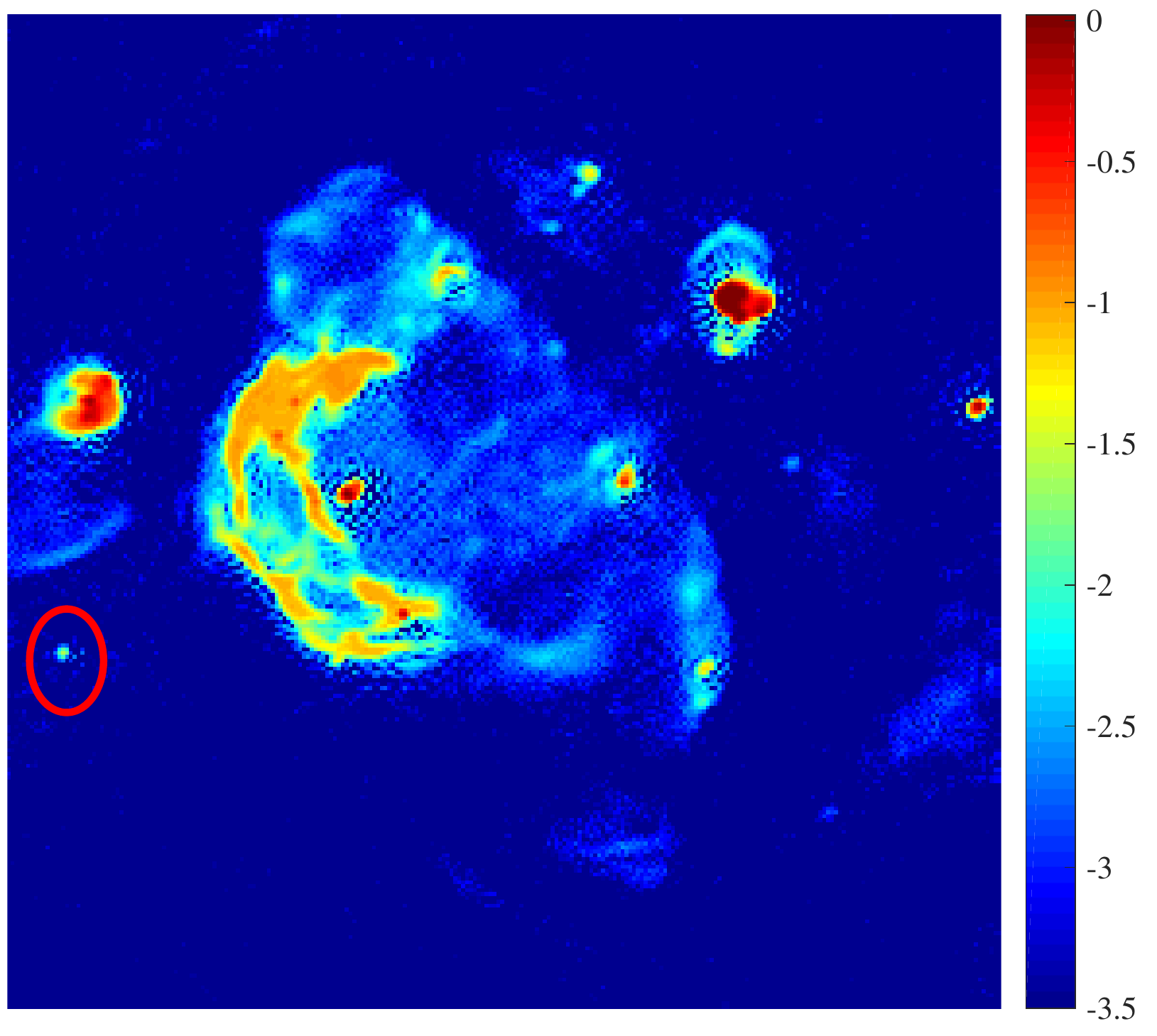}
&	\includegraphics[height=3.8cm]{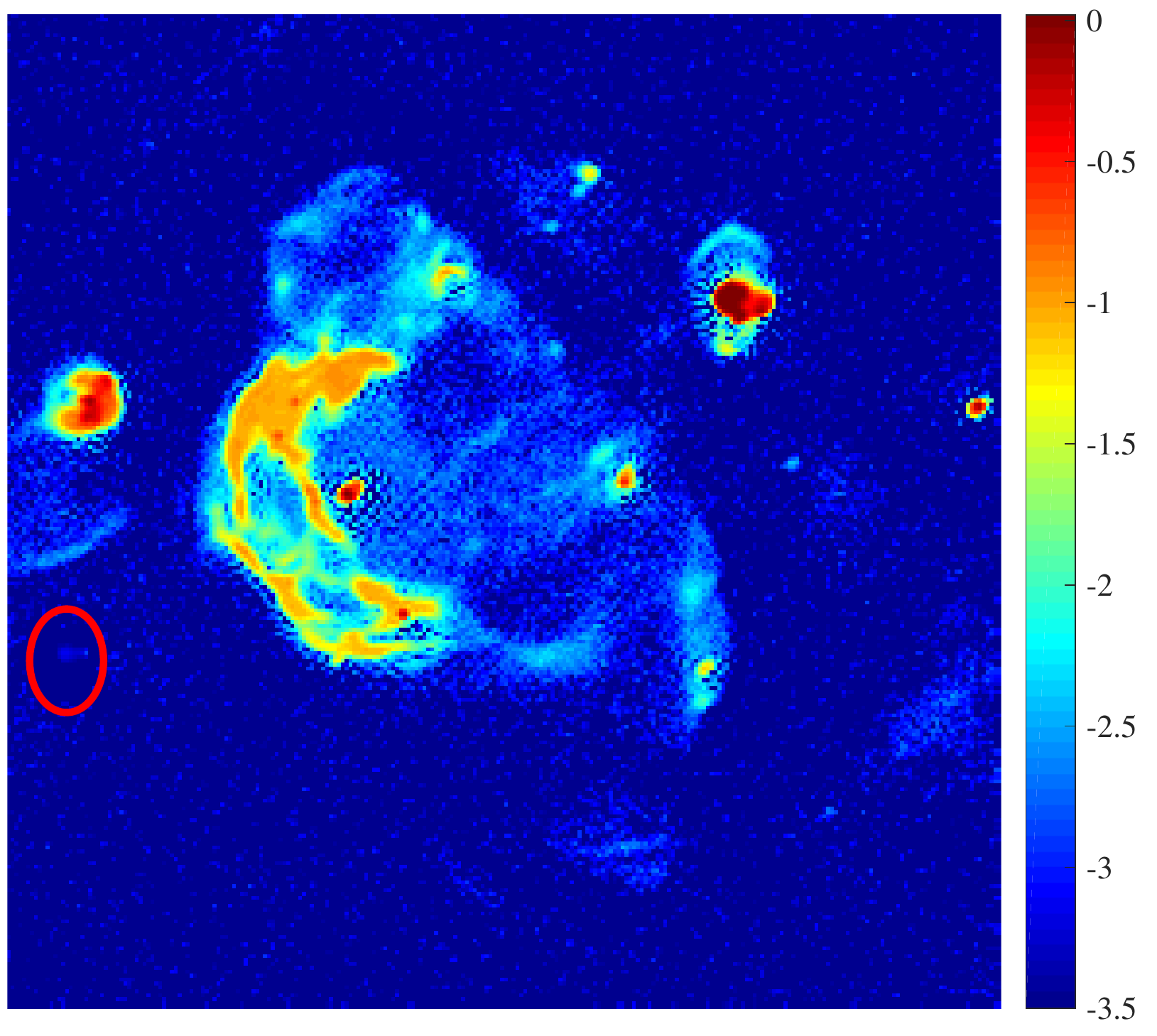}
&	\includegraphics[height=3.8cm]{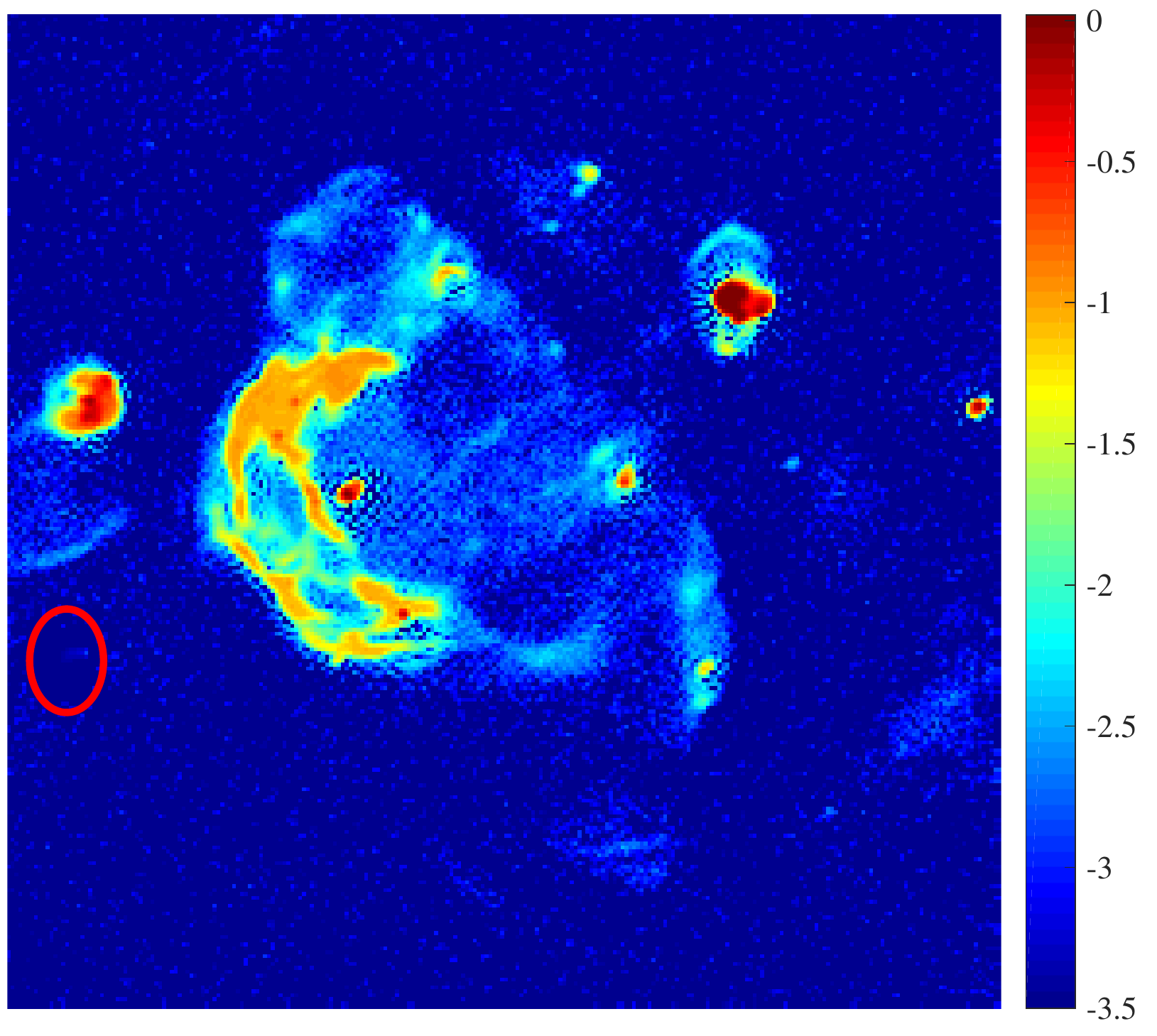}	\\[-0.2cm]
	$x^\dagger$	&	$x^\ddagger_{\widetilde{\Cc}_\alpha}$	&	$x^\ddagger_{\Sc}$	\\[0.1cm]
	\includegraphics[height=3.8cm]{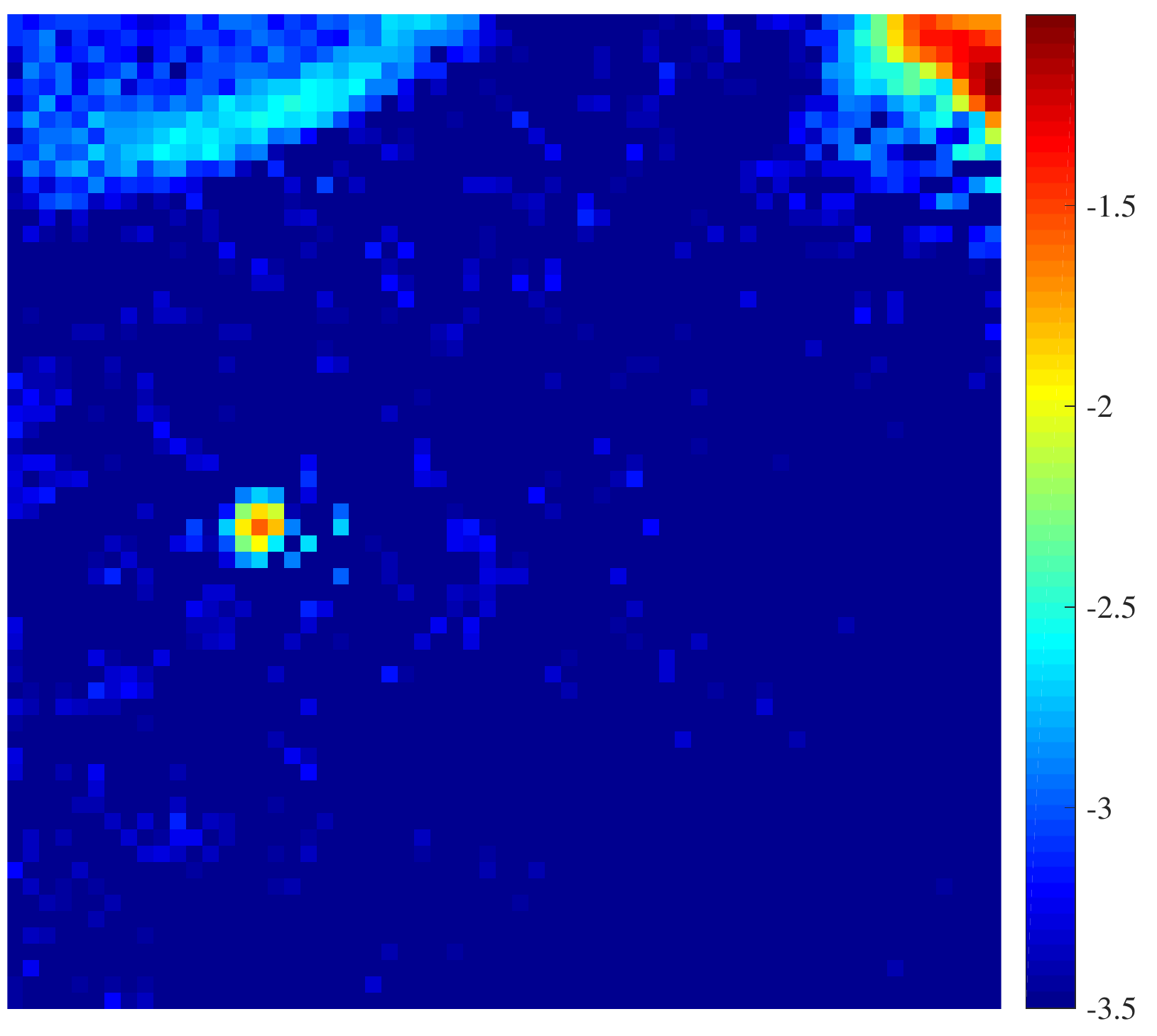}
&	\includegraphics[height=3.8cm]{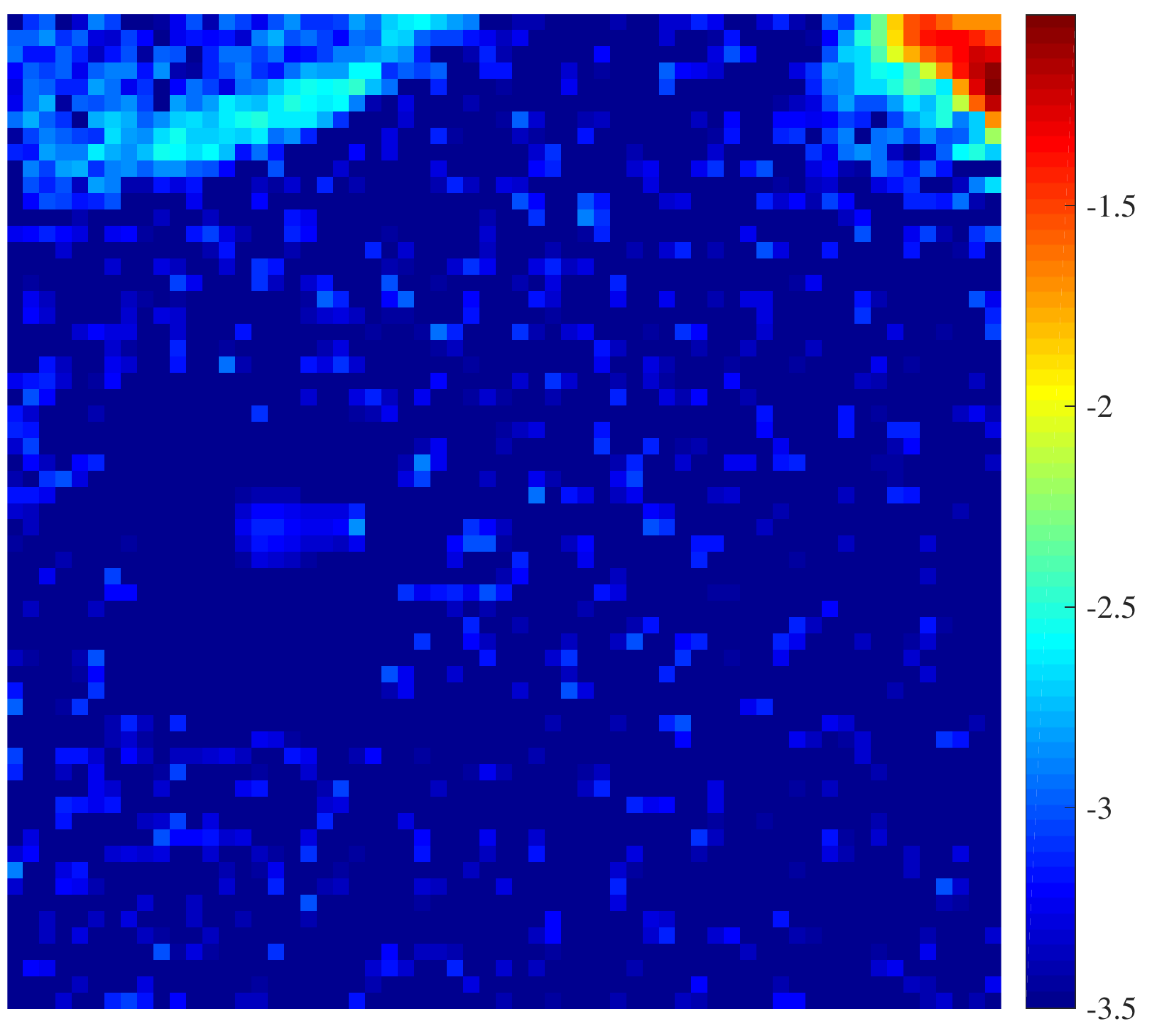}	
&	\includegraphics[height=3.8cm]{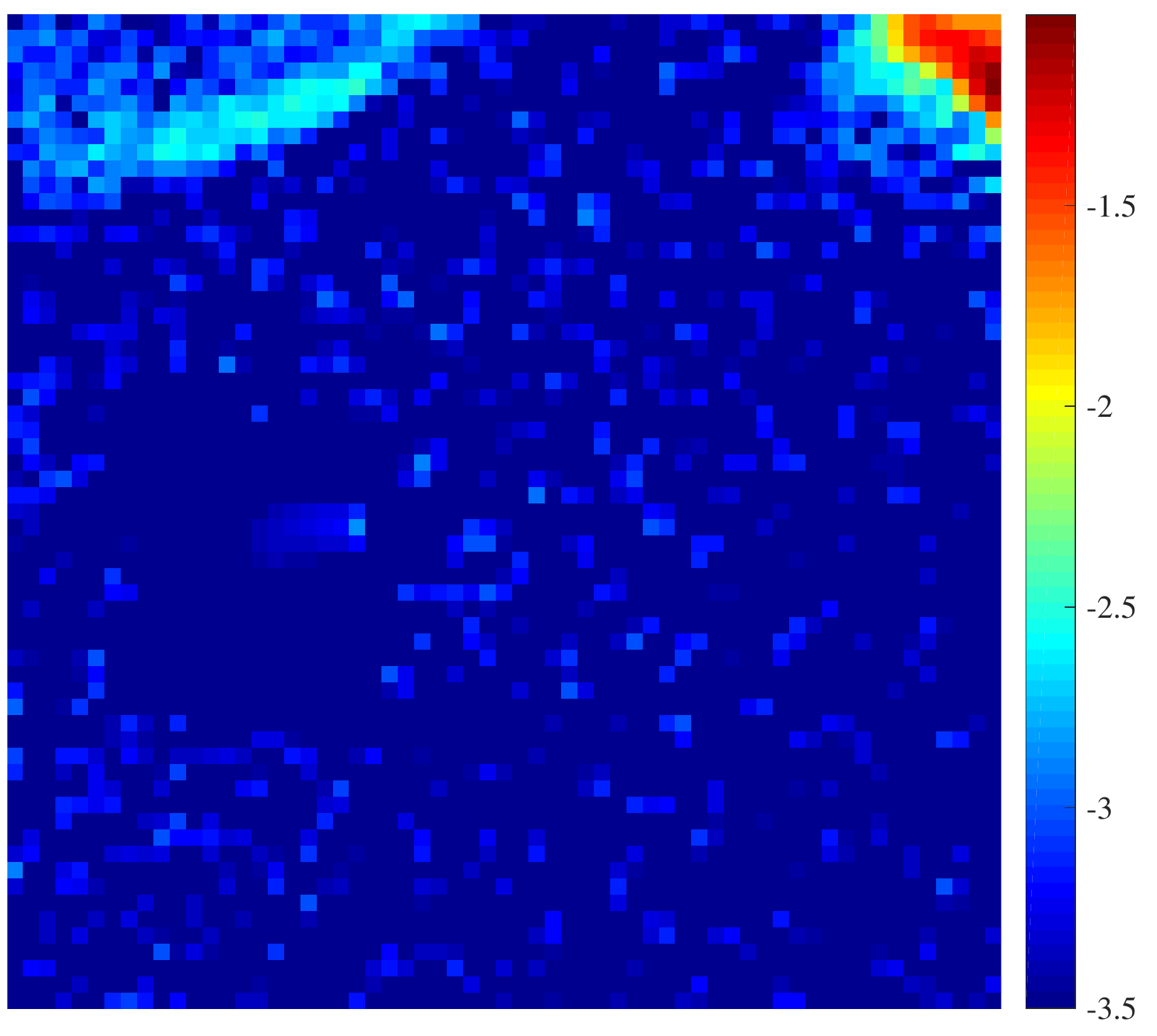}	
\end{tabular}
\end{center}

\vspace*{-0.3cm}

\caption{\label{Fig:Illustration_high}\small
Illustration of the proposed uncertainty quantification method, in the context of RI imaging. Simulation results obtained considering random Gaussian Fourier samplings, $N = 256 \times 256$, $M = N/2$ and $\sigma^2 =0.01$. 
From left to right, the top row shows the MAP estimate $x^\dagger$, and the two resulting images from the POCS algorithm, $x^\ddagger_{\widetilde{\Cc}_\alpha}$	and	$x^\ddagger_{\Sc}$. 
The Bayesian uncertainty quantification is performed on the compact source highlighted in red on the top row MAP estimate $x^\dagger$. 
The bottom row shows images zoomed in the corresponding area of interest. 
For this example, we have $\rho_\alpha \approx 0\%$. 
}
\end{figure}
\begin{figure}[h!]
\begin{center}
\begin{tabular}{@{}c@{}c@{}c@{}}
	\includegraphics[height=3.8cm]{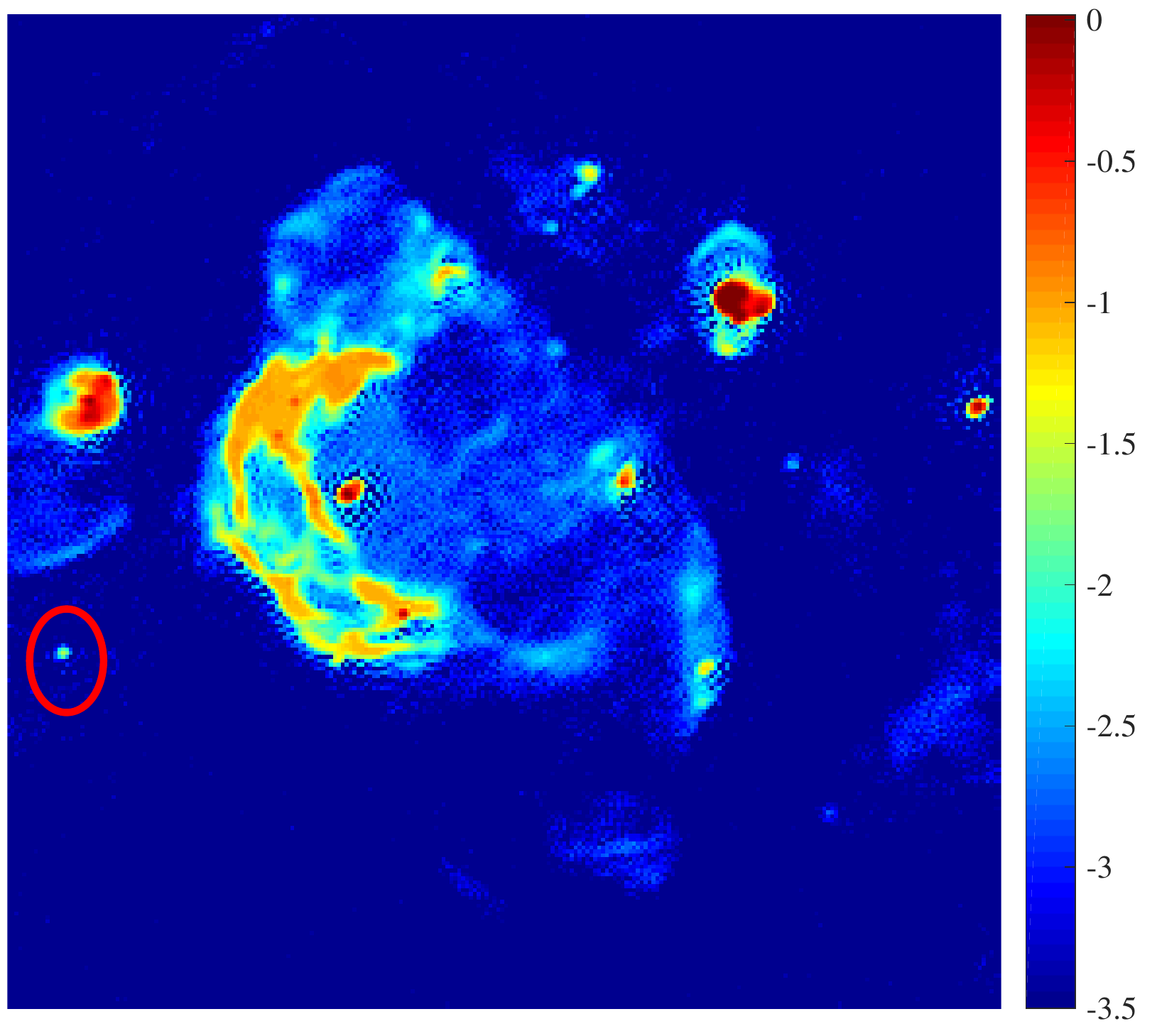}
&	\includegraphics[height=3.8cm]{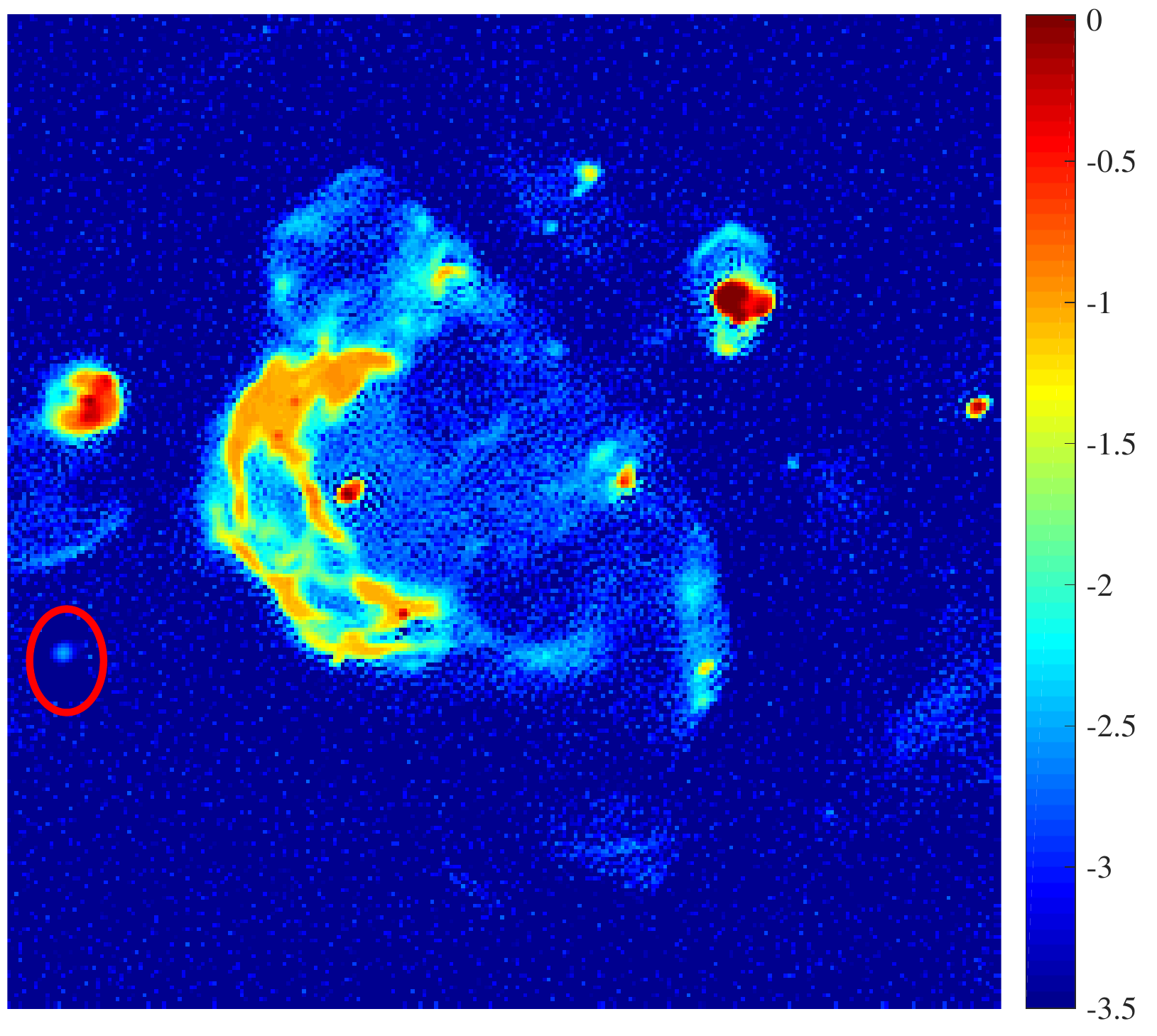}
&	\includegraphics[height=3.8cm]{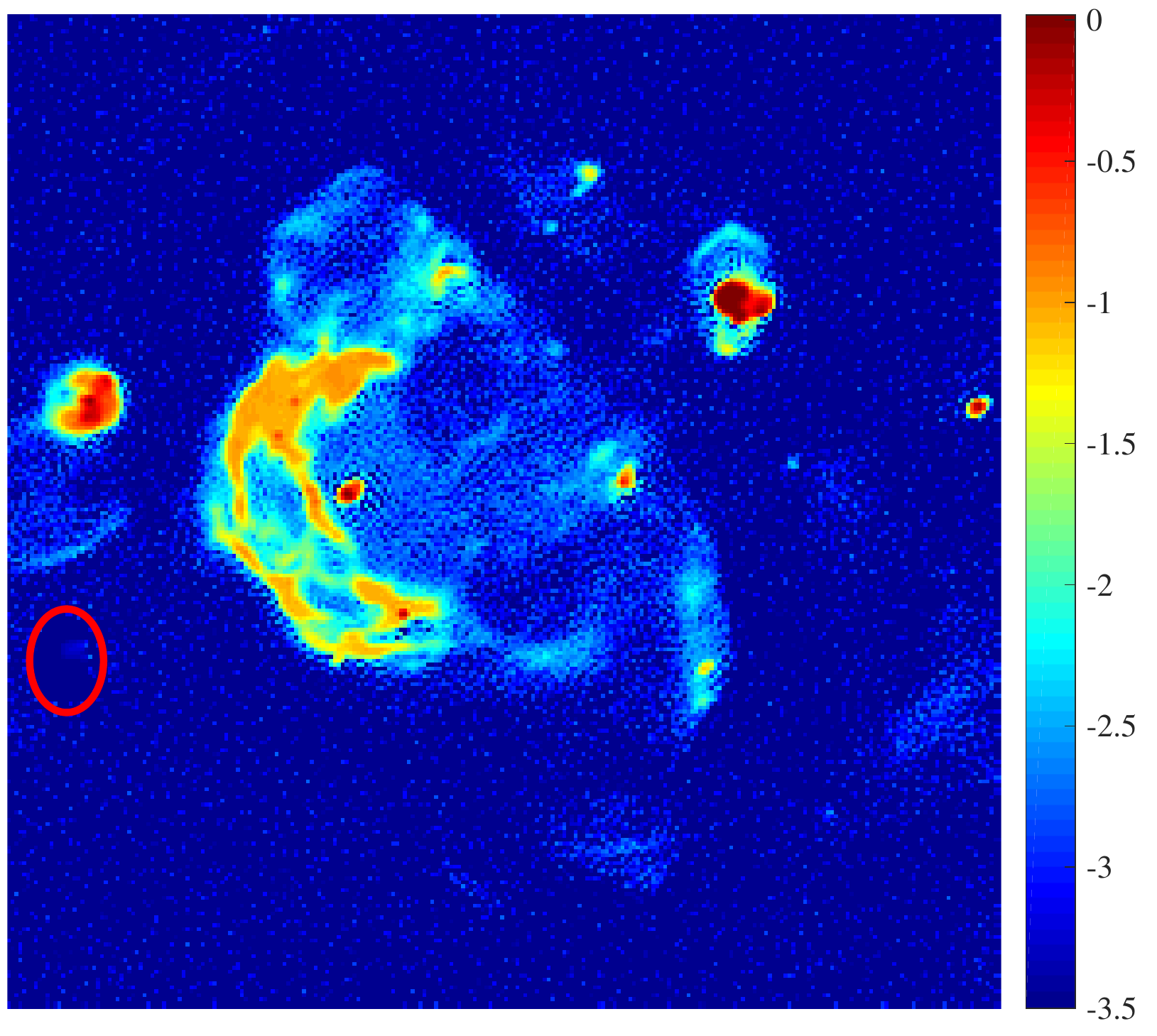}	\\[-0.2cm]
	$x^\dagger$	&	$x^\ddagger_{\widetilde{\Cc}_\alpha}$	&	$x^\ddagger_{\Sc}$	\\[0.1cm]
	\includegraphics[height=3.8cm]{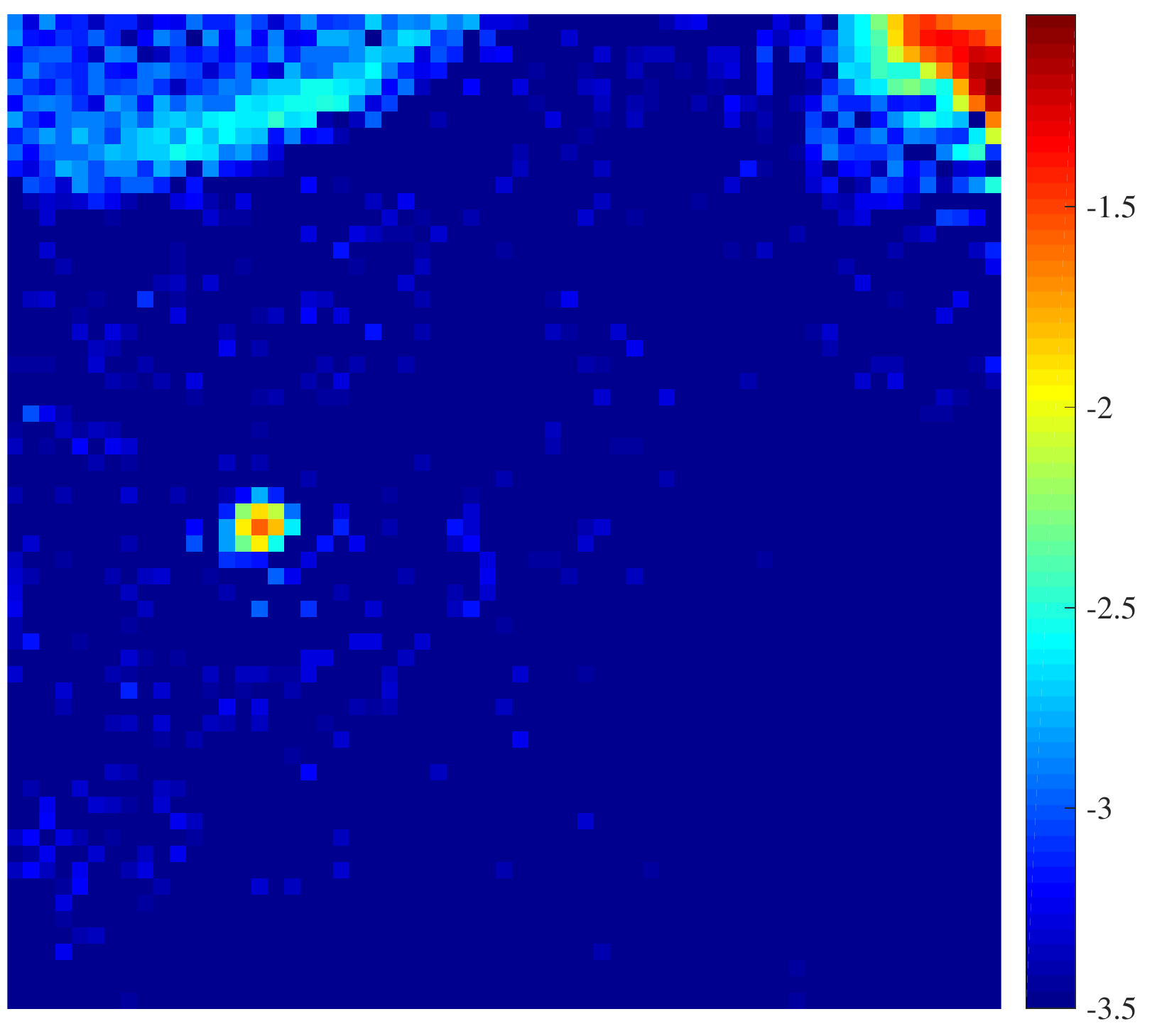}
&	\includegraphics[height=3.8cm]{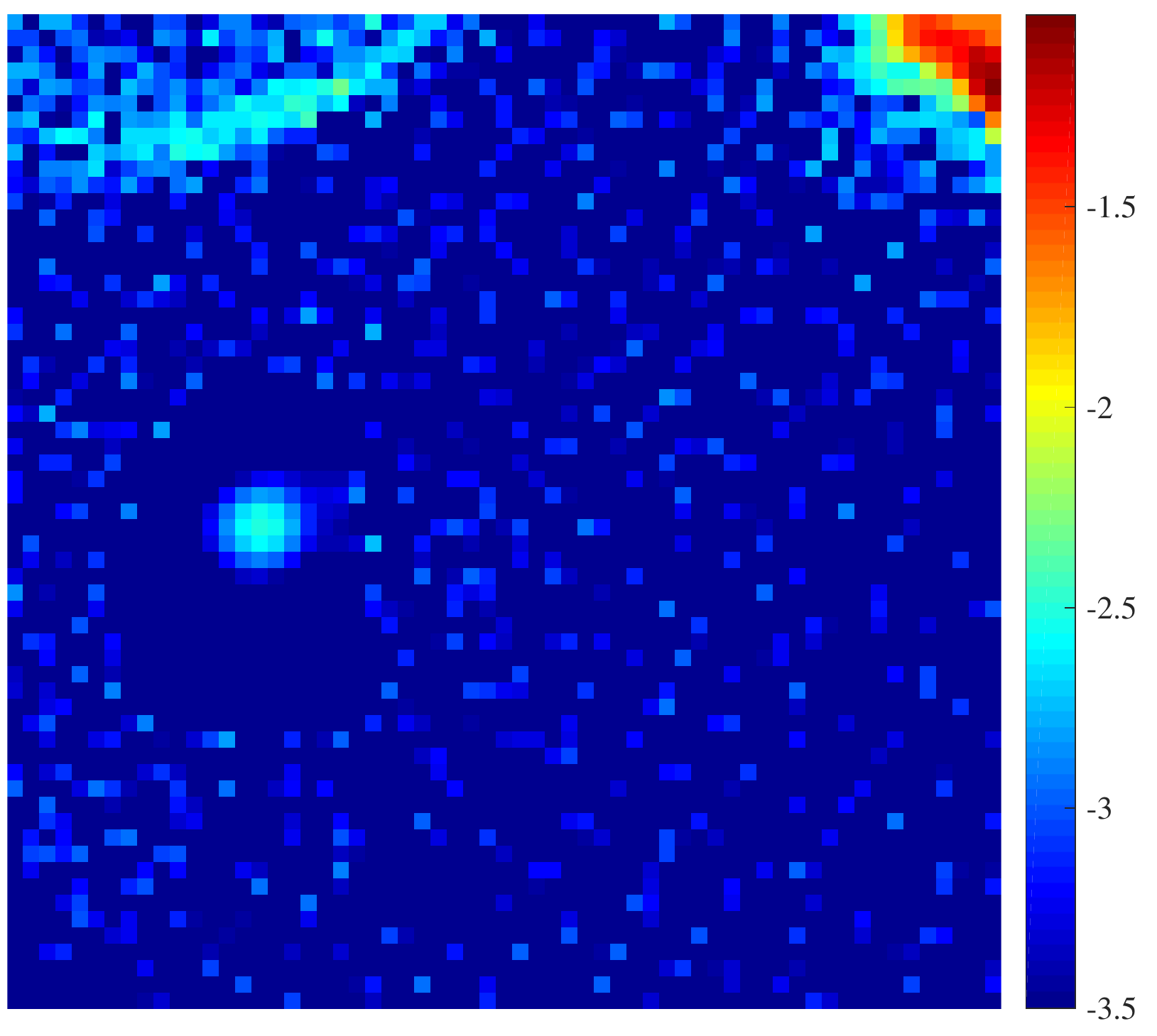}	
&	\includegraphics[height=3.8cm]{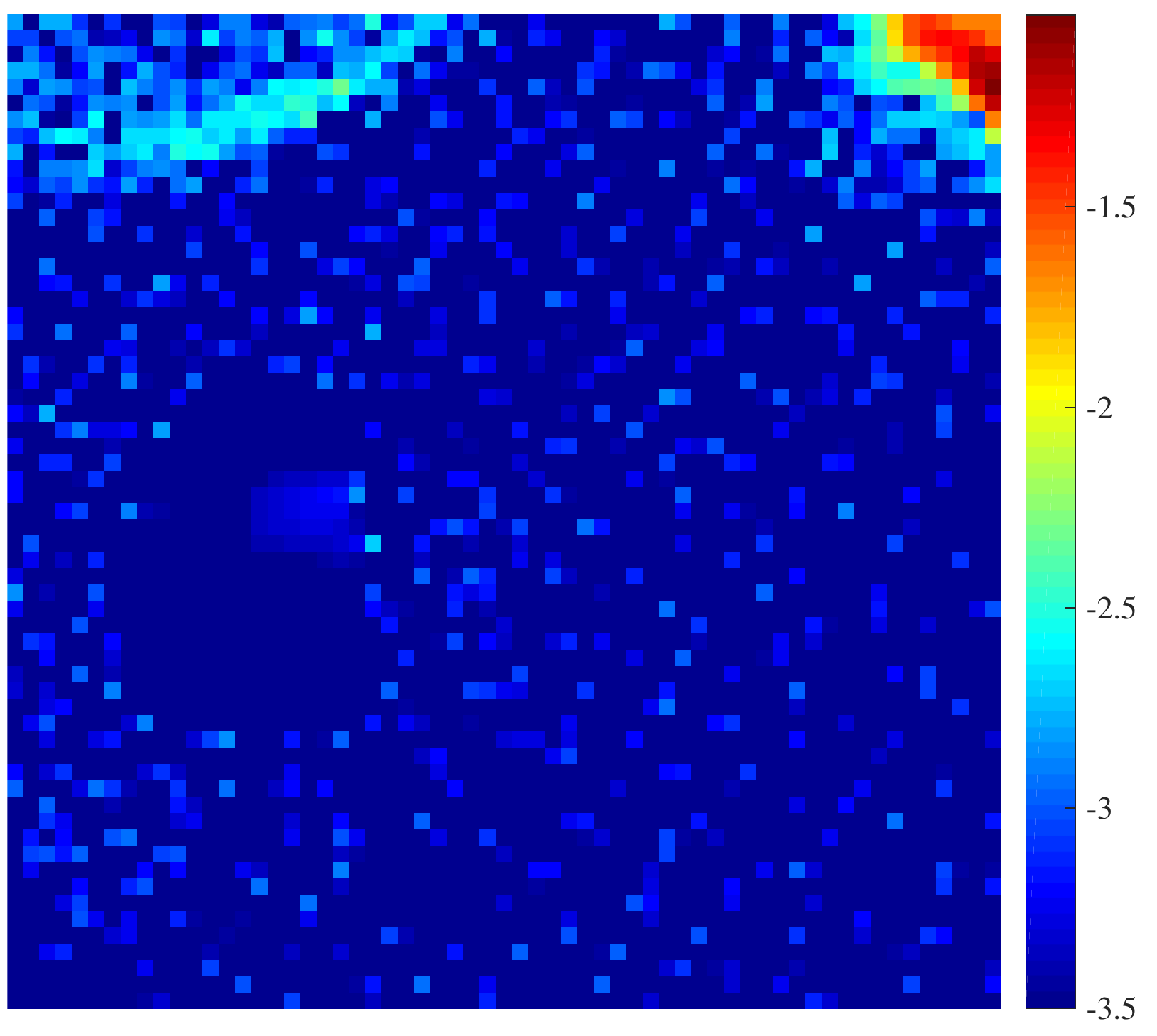}	
\end{tabular}
\end{center}

\vspace*{-0.3cm}

\caption{\label{Fig:Illustration_low}\small
Illustration of the proposed uncertainty quantification method, in the context of RI imaging. Simulation results obtained considering random Gaussian Fourier samplings, $N = 256 \times 256$, $M = N$ and $\sigma^2=0.01$. 
From left to right, the top row shows the MAP estimate $x^\dagger$, and the two resulting images from the POCS algorithm, $x^\ddagger_{\widetilde{\Cc}_\alpha}$	and	$x^\ddagger_{\Sc}$. 
The Bayesian uncertainty quantification is performed on the compact source highlighted in red on the top row MAP estimate $x^\dagger$. 
The bottom row shows images zoomed in the corresponding area of interest.  
For this example, we have $\rho_\alpha = 18.76\%$. 
}
\end{figure}

\subsection{Implementation details}
\label{sec:part_ex}

We now discuss implementation strategies for the proposed methodology. In particular, Algorithm~\ref{algo:POCS_gen} requires computing the projections onto $\widetilde{\Cc}_\alpha$ and $\Sc$, which may need to be sub-iterative depending on the structure of these sets. While there are several methods to compute the projection onto convex sets, here we choose to use primal-dual approaches (see e.g. \cite{beck2009fast, boyd2011distributed, Combettes_Vu_2011, Combettes_P_2012_j-svva_pri_dsa, condat2013primal, komodakis2015playing, vu2013splitting}).

\subsubsection{Projection onto $\widetilde{\Cc}_\alpha$}
\label{Sssec:proj_HPD}
We focus on the case where $g_2$ is the hybrid regularization given in \eqref{def:reg_gen}, where $f$ and $\Cc$ are chosen such that the projections onto the sets $\operatorname{lev}_{\le \beta}(f) = \{u \in \R^N \mid f(x) \le  \beta \}$, for any $\beta>0$, and $\Cc$ have closed form expressions. 
Accordingly, the MAP estimator is given by
\begin{equation}	\label{eq:pb_min_MAP_glob}
x^\dagger \in \Argmind{x \in \R^N} \lambda f( \Psi x ) + \iota_{\Cc}(x) + \iota_{\Bc_2(y,\varepsilon)}(\Phi x).
\end{equation}
In this case, the approximated confidence region is given by
\begin{equation}
\widetilde{\Cc}_\alpha
= 	\Big\{ x \in \Cc \, \mid \,
	\Phi x \in \Bc_2(y,\varepsilon) 
	\quad	\text{and}	\quad
	\lambda f( \Psi x ) \le \widetilde{\eta}_\alpha 
	\Big\},	\label{def:HPD:glob_balls}
\end{equation}
where $\widetilde{\eta}_\alpha =  \lambda f( \Psi x^\dagger ) + N(\tau_\alpha +1)$. 
In this particular case, at iteration $k \in \N$ in Algorithm~\ref{algo:POCS_gen}, the projection step~\ref{algo:POCS_gen:projC} onto the set $\widetilde{\Cc}_\alpha$  reads
\begin{equation}	\label{eq:proj:HPD}
x^{(k+\frac12)} 
= \proj_{\widetilde{\Cc}_\alpha}(x^{(k)})
= \argmind{x \in \R^N} F(x, x^{(k)}),
\end{equation}
with, for every $\big( x, x^{(k)} \big) \in \R^N \times \R^N$,
\begin{equation}	\label{eq:proj:HPDdef}
 F(x, x^{(k)}) = \iota_{\Bc_2(y,\varepsilon)}(\Phi x) + \iota_{\operatorname{lev}_{\le \widetilde{\eta}_\alpha/ \lambda}(f)}(\Psi x) + \iota_{[0, +\infty[^N}(x) + \frac12 \| x - x^{(k)} \|^2.
\end{equation}
The minimization problem described in~\eqref{eq:proj:HPD}-\eqref{eq:proj:HPDdef} involves sophisticated constraints with linear operators. These constraints can be handled efficiently using primal-dual methods such as those developed in \cite{Alotaibi_A_2014_Solving_ccm, Combettes_Vu_2011, Combettes_P_2012_j-svva_pri_dsa, condat2013primal, vu2013splitting}. For an overview on primal-dual approaches, we refer the reader to \cite{komodakis2015playing}. In particular, we propose to solve problem \eqref{eq:proj:HPD}-\eqref{eq:proj:HPDdef} using the primal-dual forward-backward algorithm developed in~\cite{condat2013primal, vu2013splitting} given in Algorithm~\ref{algo:PD_HPD} below.

\begin{algorithm}[ht!]
	\caption{Primal-dual forward-backward algorithm to solve \eqref{eq:proj:HPD}-\eqref{eq:proj:HPDdef}.}
	\begin{algorithmic}[1]
        \State 	
        \textbf{Initialization:} 
        Let $u^{(0)} \in \R^N$,  
        $v_1^{(0)} \in \R^N$, and $v_2^{(0)} \in \C^M$. 
        Let $\sigma>0$ and $\gamma>0$ such that $\displaystyle \sigma \big( \frac12 + \gamma ( \| \Psi \|^2 + \| \Phi \|^2 ) \big) <1$.
        \newline
        \State	\vspace{0.5em}
        \textbf{For} $i = 0, 1, \ldots$
		\State	\vspace{0.3em}	\label{algo:PD_HPD:primal}
		$\displaystyle \quad\quad
		u^{(i+1)} = 
		\proj_{\Cc} \Big( u^{(i)} - \sigma \big(  u^{(i)} - x^{(k)} + \Psi^\dagger v_1^{(i)} + \Phi^\dagger v_2^{(i)} \big) \Big)$
		\State	\vspace{0.3em}	\label{algo:PD_HPD:dual11}
		$ \displaystyle \quad\quad
		\widetilde{v}_1^{(i)} = 
		v_1^{(i)} + \gamma \Psi (2 u^{(i+1)} - u^{(i)})	$
		\State	\vspace{0.3em}	\label{algo:PD_HPD:dual12}
		$ \displaystyle \quad\quad
		v_1^{(i+1)} = 
		\widetilde{v}_1^{(i)} - \gamma \proj_{\operatorname{lev}_{\le \widetilde{\eta}_\alpha/ \lambda}(f)} \big( \gamma^{-1} \widetilde{v}_1^{(i)} \big)	$
		\State	\vspace{0.3em}	\label{algo:PD_HPD:dual21}
		$ \displaystyle \quad\quad
		\widetilde{v}_2^{(i)} = 
		v_2^{(i)} + \gamma \Phi (2 u^{(i+1)} - u^{(i)})	$
		\State	\vspace{0.3em}	\label{algo:PD_HPD:dual22}
		$ \displaystyle \quad\quad
		v_2^{(i+1)} = 
		\widetilde{v}_2^{(i)} - \gamma \proj_{\Bc_2(y, \varepsilon)} \big( \gamma^{-1} \widetilde{v}_2^{(i)} \big)	$
        \State	\vspace{0.5em}
        \textbf{end for}
	\end{algorithmic}
    \label{algo:PD_HPD}
\end{algorithm}
%

Since $\widetilde{\Cc}_\alpha$ is convex, for every $x^{(k)} \in \R^N$, the function $F( \cdot, x^{(k)})$ is strictly convex. Therefore, according to \cite{condat2013primal, vu2013splitting}, the sequence $(u^{(i)})_{i \in \N}$ generated by Algorithm~\ref{algo:PD_HPD} is ensured to converge to the unique minimizer of $F( \cdot, x^{(k)})$ (i.e. the point of $\widetilde{\Cc}_\alpha$ the closest to $x^{(k)}$).

As particular cases for the function $f$, we can mention the $\ell_1$-norm, the $\ell_2$-norm, the $\ell_\infty$-norm, the $\ell_{1,2}$-norm, and the negative logarithm function. The projections onto the lower level sets of these functions can be found on the online proximity operator repository \cite{Chierchia_prox_rep}.

\subsubsection{Projection onto $\Sc$: Definition~\ref{example:structure}}
\label{Sssec:proj_S_struct}

Consider the set $\Sc$ defined by \eqref{def:S:structure}. 
For every iteration $k\in \N$, the projection step~\ref{algo:POCS_gen:projS} onto $\Sc$ in Algorithm~\ref{algo:POCS_gen} is given by
\begin{align}
x^{(k+1)} 
&=	\proj_{\Sc}(x^{(k+\frac12)} ) \nonumber	\\
&= 	\argmind{x \in \R^N} \iota_{[0,+\infty[^N}(x) 
		+ \iota_{\Sc_2}(x) 
		+ \iota_{\Bc_2(b,\theta)}\big(\Mc(x)\big) 
		+ \frac12 \| x - x^{(k+\frac12)} \|^2	 .	
	\label{eq:proj_D_ex2}
\end{align}
Since 
\begin{equation}
x \in \Sc_2
\quad \Leftrightarrow \quad
	\Mc(x) - \Lc \big(\Mc^c(x)\big)	\in [\underline{\tau}, \overline{\tau}]^{N_{\Mc}}	
\quad \Leftrightarrow \quad
	\overline{\Lc} (x) \in [\underline{\tau}, \overline{\tau}]^{N_{\Mc}}	,		
\end{equation}
where $\overline{\Lc } = \Mc - \Lc \circ \Mc^c$, we have
\begin{equation}
x^{(k+1)} 
=	\argmind{x \in \R^N} \iota_{[0,+\infty[^N}(x) 
		+ \iota_{[\underline{\tau}, \overline{\tau}]^{N_{\Mc}}} \big( \overline{\Lc} (x) \big)
		+ \iota_{\Bc_2(b,\theta)}\big(\Mc(x)\big) 
		+ \frac12 \| x - x^{(k+\frac12)} \|^2	 .
\end{equation}
This problem does not have a closed form solution, and hence needs to be solved by computing sub-iterations. Again, here we use a primal-dual forward-backward algorithm \cite{condat2013primal, vu2013splitting}. The resulting method is described in Algorithm~\ref{algo:PD_projS_struct}.

\begin{algorithm}[htbp!]
	\caption{Primal-dual forward-backward algorithm to solve \eqref{eq:proj_D_ex2}.}
	\begin{algorithmic}[1]
        \State 	
        \textbf{Initialization:} 
        Let $a^{(0)} \in [0,+\infty[^{N}$, $p_1^{(0)} \in \R^{N_{\Mc}}$, and $p_2^{(0)} \in \R^{N_{\Mc}}$. 
        Let $\kappa>0$ and $\nu>0$ such that $\displaystyle \kappa \big( \frac12 + \nu( \| \Lc \|^2 + 1 ) \big) <1$.
        \newline
        \State	\vspace{0.5em}
        \textbf{For} $j = 0, 1, \ldots$
		\State	\vspace{0.3em}	\label{algo:PD_projS_struct:primal1}
		$\displaystyle \quad\quad
		\widetilde{a}^{(j)} = 
		\proj_{[0,+\infty[^{N}}
		\Big(	a^{(j)} 
				- \kappa \big( a^{(j)} - x^{(k+\frac12)} 
				-  \overline{\Lc}^\dagger p_1^{(j)} + \Mc^\dagger(p_2^{(j)}) \big) \Big) $
		\State	\vspace{0.3em}	\label{algo:PD_projS_struct:dual1}
		$ \displaystyle \quad\quad
		\widetilde{p}_1^{(j)} = 
		p_1^{(j)} + \nu \, \overline{\Lc} (2 a^{(j+1)} - a^{(j)})	$
		\State	\vspace{0.3em}	\label{algo:PD_projS_struct:dual2}
		$ \displaystyle \quad\quad
		p_1^{(j+1)} = 
		\widetilde{p}_1^{(j)} - \nu \, \proj_{[\underline{\tau}, \overline{\tau}]^{N_{\Mc}}} \big( \nu^{-1} \widetilde{p}_1^{(j)} \big)	$
		\State	\vspace{0.3em}	\label{algo:PD_projS_struct:dual3}
		$ \displaystyle \quad\quad
		\widetilde{p}_2^{(j)} = 
		p_2^{(j)} + \nu \, \Mc (2 a^{(j+1)} - a^{(j)})	$
		\State	\vspace{0.3em}	\label{algo:PD_projS_struct:dual4}
		$ \displaystyle \quad\quad
		p_2^{(j+1)} = 
		\widetilde{p}_2^{(j)} - \nu \, \proj_{\Bc_2(b, \theta)} \big( \nu^{-1} \widetilde{p}_2^{(j)} \big)	$
        \State	\vspace{0.5em}
        \textbf{end for}
	\end{algorithmic}
    \label{algo:PD_projS_struct}
\end{algorithm}
%

Note that, for every $x^{(k+\frac12)} \in \R^N$, problem~\eqref{eq:proj_D_ex2} is strictly convex. Therefore, according to \cite{condat2013primal, vu2013splitting}, the sequence $(a^{(j)})_{j \in \N}$ generated by Algorithm~\ref{algo:PD_projS_struct} is ensured to converge to the unique solution to problem~\eqref{eq:proj_D_ex2} (i.e. the point of $\Sc$ the closest to $x^{(k+\frac12)}$).

\subsubsection{Projection onto $\Sc$: Definition~\ref{example:background}}
\label{Sssec:proj_S_back}

Consider the set $\Sc$ defined by \eqref{def:S:background}. In this case the projection onto $\Sc$ has an explicit formula. In particular, at every iteration $k\in \N$, the projection step~\ref{algo:POCS_gen:projS} onto $\Sc$ in Algorithm~\ref{algo:POCS_gen} is given by
\begin{align}
x^{(k+1)} 
&= \proj_{\Sc}(x^{(k+\frac12)} \nonumber	\\
&= \argmind{x \in \R^N} \iota_{[0,+\infty[^N}(x) + \iota_{\Sc_2}(x) + \frac12 \| x - x^{(k+\frac12)} \|^2 .		
\end{align}
Then, we have 
\begin{equation}
\begin{cases}
\Mc(x^{(k+1)} ) = 0,	\\
\Mc^c(x^{(k+1)} ) = \min \Big\{ \overline{\tau}, \max \big\{ \underline{\tau}, \Mc^c(x^{(k+1)} ) \big\} \Big\}.
\end{cases}
\end{equation}

\subsection{Scalable and approximated alternating projection methods}
\label{Ssec:discuss:fast}

The use of the POCS method given in Algorithm~\ref{algo:POCS_gen} to solve problem~\eqref{pb:feas_gen} is important to illustrate the proposed uncertainty quantification approach. However, it is worth mentioning that the convergence of this algorithm can be slow in practice and the convergence results (see Theorem~\ref{thm:POCS:cvg}) hold only if the projections are computed exactly. 

There are multiple (possibly accelerated) methods in the literature to solve convex feasibility problems such as~\eqref{pb:feas_gen} (see \cite{Bauschke_Borwein_1996, Deutsch_1992_MAP, Escalande_book_2011} for details). 
However, our method not only requires to solve~\eqref{pb:feas_gen}, but also necessitate to determine if this problem is feasible or not, i.e. if the intersection between $\widetilde{\Cc}_\alpha $ and $ \Sc$ is empty or not. Due to that particular subtlety, accelerated POCS methods cannot be used in our approach, since they all assume that the problem of interest must be feasible.

Because $\widetilde{\Cc}_\alpha \cap \Sc = \emp$ holds if and only if $\dist \big( \widetilde{\Cc}_\alpha, \Sc \big) >0$, we could also formulate \eqref{pb:feas_emp} as follows:
\begin{equation}	\label{pb:feas_gen_dist}
\text{find }
\big( x^\ddagger_{\widetilde{\Cc}_\alpha}, x^\ddagger_{\Sc} \big) 
= 
\argmind{\big( x_{\widetilde{\Cc}_\alpha}, x_{\Sc} \big) \in \R^{2N}} \| x_{\widetilde{\Cc}_\alpha} - x_{\Sc} \|^2
\text{ s.t. }
\big( x_{\widetilde{\Cc}_\alpha}, x_{\Sc} \big) \in \widetilde{\Cc}_\alpha \times \Sc.
\end{equation}
This problem is strictly convex on $\big( x_{\widetilde{\Cc}_\alpha}, x_{\Sc} \big)$ and can be solved using recent convex optimization techniques, e.g. the forward-backward (FB) algorithm \cite{Tseng_P_2000_j-siam-control-optim_Modified_fbs, combettes2005signal, Attouch_Bolte_2011} or its accelerated versions (e.g. \cite{beck2009fast, Chouzenoux13, Pock2013}). Applied to problem~\eqref{pb:feas_gen_dist}, the classical FB method can be seen as an alternating projection approach, and takes the form of Algorithm~\ref{algo:FB_dist}.
\begin{algorithm}[h!]
	\caption{FB algorithm to solve problem~\eqref{pb:feas_gen_dist}.}
	\begin{algorithmic}[1]
        \State 	
        \textbf{Initialization:} 
        Let $x_{\widetilde{\Cc}_\alpha}^{(0)} \in \widetilde{\Cc}_\alpha$ and $x_{\Sc}^{(0)} \in \Sc$. 
        Let $\gamma \in ]0, 1[$.
        \newline
        \State	\vspace{0.5em}
        \textbf{For} $k = 0, 1, \ldots$
		\State	\vspace{0.3em}	\label{algo:FB_dist:projC}
		$\displaystyle \quad\quad
		x_{\widetilde{\Cc}_\alpha}^{(k+1)} = 
		\proj_{\widetilde{\Cc}_\alpha} \Big( (1-\gamma) x_{\widetilde{\Cc}_\alpha}^{(k)} + \gamma  x^{(k)}_{\Sc} \Big)$
		\State	\vspace{0.3em}	\label{algo:FB_dist:projS}
		$ \displaystyle \quad\quad
		x_{\Sc}^{(k+1)} = 
		\proj_{\Sc} \Big( (1-\gamma) x^{(k)}_{\Sc} + \gamma  x^{(k)}_{\widetilde{\Cc}_\alpha} \Big)	$
        \State	\vspace{0.5em}
        \textbf{end for}
	\end{algorithmic}
    \label{algo:FB_dist}
\end{algorithm}
The sequence $\Big( x_{\widetilde{\Cc}_\alpha}^{(k)}, x^{(k)}_{\Sc} \Big)_{k \in \N}$ generated by Algorithm~\ref{algo:FB_dist} converges to the unique solution to problem~\eqref{pb:feas_gen_dist}. 
The convergence of this algorithm is also guaranteed when projections are computed approximately (with additive errors \cite{combettes2005signal} or relative errors \cite{Attouch_Bolte_2011}). 
Notice that the POCS method given in Algorithm~\ref{algo:POCS_gen} is recovered in the limit case when $\gamma = 1$ in Algorithm~\ref{algo:FB_dist}, which actually provides some notion of robustness of Algorithm~\ref{algo:POCS_gen} to approximation errors.

As for the POCS method given in Algorithm~\ref{algo:POCS_gen}, the projections onto the sets $\widetilde{\Cc}_\alpha$ and $\Sc$, appearing in steps~\ref{algo:FB_dist:projC} and \ref{algo:FB_dist:projS} respectively, may require sub-iterations (see Section~\ref{sec:part_ex} for implementation details). To avoid these sub-iterations, it is possible to use more advanced techniques such as the primal-dual algorithm used in Section~\ref{sec:part_ex} (see for example \cite{Combettes_P_2012_j-svva_pri_dsa, condat2013primal, komodakis2015playing, pesquet2014class, vu2013splitting}).

\subsection{BUQO in a nutshell}

In this section, we summarize the principle of the proposed method. BUQO for computational imaging consists of four main steps, described below:
\begin{enumerate}
\item
Compute the MAP estimate $x^\dagger$ by solving problem~\eqref{hpd2}.
\begin{itemize}
\item
Use $x^\dagger$ to deduce $\widetilde{\Cc}_\alpha$ using equation~\eqref{eq:hpdapp1}.
\end{itemize}

\item
Identify the structure of interest in $x^\dagger$.
\begin{itemize}
\item
Define the hypothesis test, by postulating the null hypothesis $H_0$, i.e. the structure of interest is absent in the true image (see Section~\ref{Ssec:Bayes_quant} for the details).
\item
Define the associated set $\Sc$ (see Section~\ref{Ssec:def_S}): If the structure is spatially localized, use Definition~\ref{def:S:structure}; If the structure corresponds to the background, use Definition~\ref{def:S:background}.
\end{itemize}

\item
Determine if $\Sc \cap \widetilde{\Cc}_\alpha = \emp$ using Algorithm~\ref{algo:POCS_gen}. In this algorithm, 
\begin{itemize}
\item
Step~3 corresponds to the projection onto $\widetilde{\Cc}_\alpha$. The computation of this projection is detailed in Section~\ref{Sssec:proj_HPD}.
\item
Step~4 corresponds to the projection onto $\Sc$. The computation of this projection is detailed in Section~\ref{Sssec:proj_S_struct} for a spatially localized structure, and in Section~\ref{Sssec:proj_S_back} for background removal.
\end{itemize}

\item
Deduce if $H_0$ is rejected using Theorem~\ref{Thm:proba_feas}.
\begin{itemize}
\item
If $\Sc \cap \widetilde{\Cc}_\alpha = \emp$, then $H_0$ is rejected with significance $\alpha$, and the structure of interest is present in the true image with probability $1-\alpha$.
\item
If $\Sc \cap \widetilde{\Cc}_\alpha \neq \emp$, then $H_0$ cannot be rejected, and the presence of the structure of interest in the true image is uncertain.
\end{itemize}
\end{enumerate}

\section{Simulation results}
\label{sec:results}

In this section we apply the proposed uncertainty quantification approach to Fourier imaging applications in radio astronomy (Section~\ref{ssec:RI_astro}) and magnetic resonance in medicine (Section~\ref{ssec:MRmed}). 
We refer the reader to Section~\ref{ssec:Illust} for an illustration example, where a step-by-step explanation is given for the practical application of the proposed uncertainty quantification approach.

Before giving the uncertainty quantification results obtained using the proposed approach, we describe in Section~\ref{Ssec:simul:evalH0} the common simulation settings.

\subsection{Simulation settings}
\label{Ssec:simul:evalH0}

In both the two considered applications, the MAP estimate $x^\dagger$ is obtained from problem~\eqref{eq:pb_min_MAP_glob}, where $\Phi$ is the measurement operator associated with each problem (defined in Sections~\ref{Sssec:RI} and \ref{Sssec:MRI}), $f = \|.\|_1$, and $\Psi$ corresponds to the Daubechies wavelet Db8. 
As far as the additive noise is considered, it is generated as i.i.d. Gaussian noise with variance $\sigma^2$. We recall that our approach assumes no explicit knowledge of the noise distribution, other than the fact that it has bounded energy with bound $\epsilon$. For Gaussian noise, such a bound can be computed analytically based on the fact that $\|w\|$  follows a $\chi^2$ distribution with $2M$ degrees of freedom. Due the the concentration of measure in high dimension, the $\chi^2$ is extremely peaked around its mean value. In practice, to ensure a bound satisfied with high probability, we choose $\varepsilon = \sigma \big( 2M + 2\sqrt{4M} \big)^{1/2} $ corresponding to a value 2 standard deviations above the mean of the $\chi^2$ distribution.

We consider the definition of $\widetilde{\Cc}_\alpha$ given in equation~\eqref{def:HPD:glob_balls}, with $\alpha = 1\%$.
To choose $\lambda>0$, we assume that $\Psi \overline{x}$ follows an i.i.d. Laplace distribution, and we propose to choose the maximum likelihood of $\lambda$ based on the MAP estimate $x^\dagger$, i.e.:
\begin{equation}
\lambda = \dfrac{N}{\| \Psi x^\dagger \|_1}.
\end{equation}

In our simulations, we consider that Algorithm~\ref{algo:POCS_gen} has converged if one of the following stopping criteria is fulfilled:
\begin{equation}	\label{stop_crit:norm}
\begin{cases}
\| x^{(k+1)} - x^{(k)} \| < 10^{-5} \| x^{(k+1)} \| \, ,	\\
\| x^{(k+\frac12)} - x^{(k-\frac12)} \| < 10^{-5} \|x^{(k+\frac12)} \| \, ,
\end{cases}
\end{equation}
or
\begin{equation}	\label{stop_crit:dist}
| \delta^{(k+1)} - \delta^{(k)} | < 10^{-5} \delta^{(k+1)},
\end{equation}
where $\delta^{(k+1)} = \| x^{(k+\frac12)} - x^{(k+1)}\|$. 
In other worlds, the first criterion \eqref{stop_crit:norm} verifies the relative variations of the convergent sequences $\big( x^{(k)} \big)_{k \in \N}$ and $\big( x^{(k+\frac12)} \big)_{k \in \N}$. 
The second criterion \eqref{stop_crit:dist} verifies the relative variations of $\big( \delta^{(k)} \big)_{n \in \N}$ which, according to Theorem~\ref{thm:POCS:cvg}, converges to $\dist(\Sc, \widetilde{\Cc}_\alpha)$.

Note that due to the considered stopping criteria \eqref{stop_crit:norm} and \eqref{stop_crit:dist}, the algorithm cannot reach exactly $\| x^\ddagger_{\Sc} - x^\ddagger_{\widetilde{\Cc}_\alpha} \| = 0$. Consequently, the parameter $\rho_\alpha$ introduced in \eqref{def:rho_alpha} cannot be equal to $0$. 
To take into account this numerical approximation, we consider that when $\rho_\alpha > \eta $, with $\eta\approx 0$, then $H_0$ is rejected with significance $\alpha = 1\%$. For instance, in our simulations, we will choose $\eta = 3\%$.

\subsection{Radio-astronomical imaging}
\label{ssec:RI_astro}

\subsubsection{Problem description}
\label{Sssec:RI}

\begin{figure}[h]
\begin{center}
\begin{tabular}{@{}c@{}c@{}}
	\includegraphics[height=5.5cm]{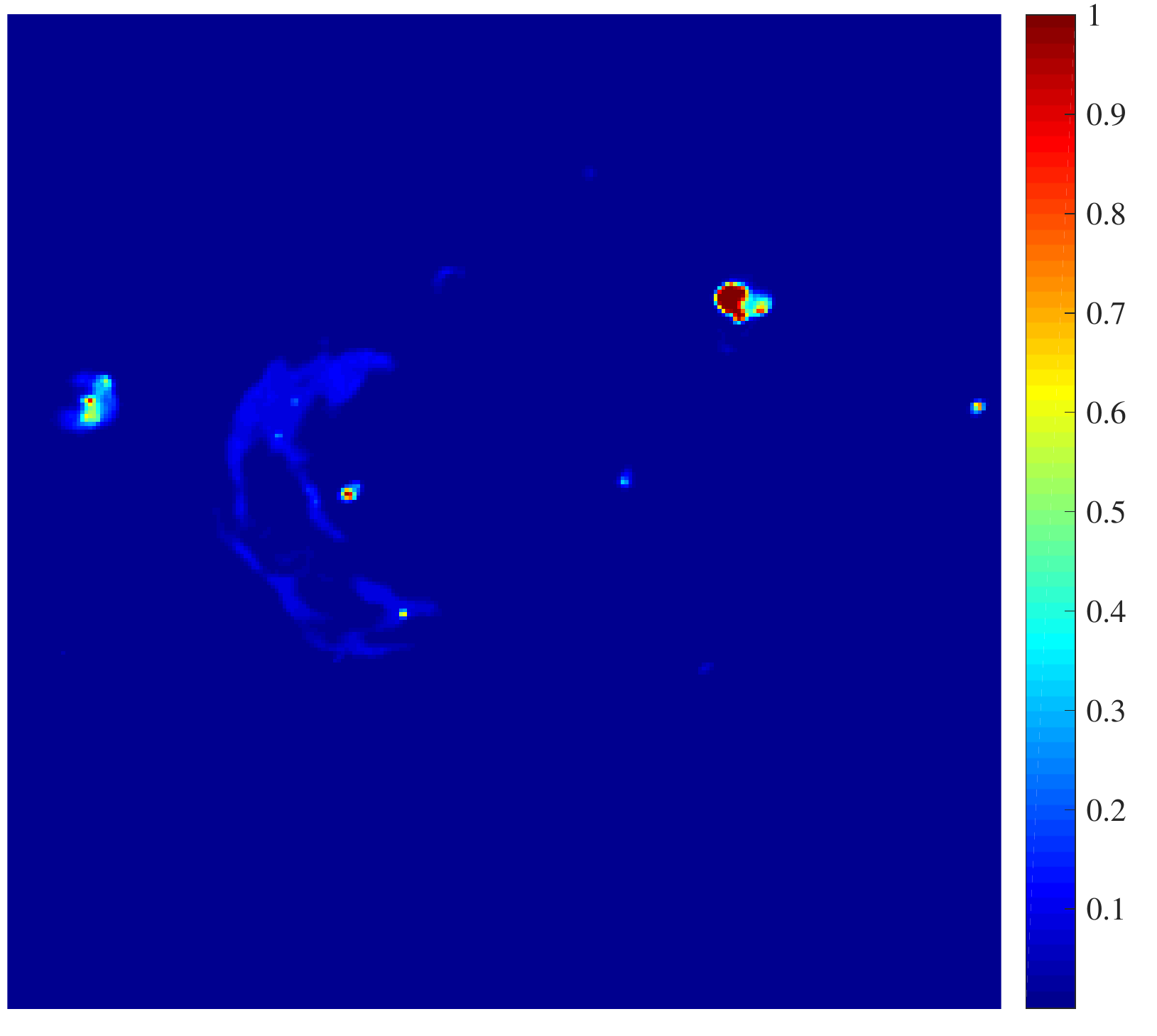}
&	\includegraphics[height=5.5cm]{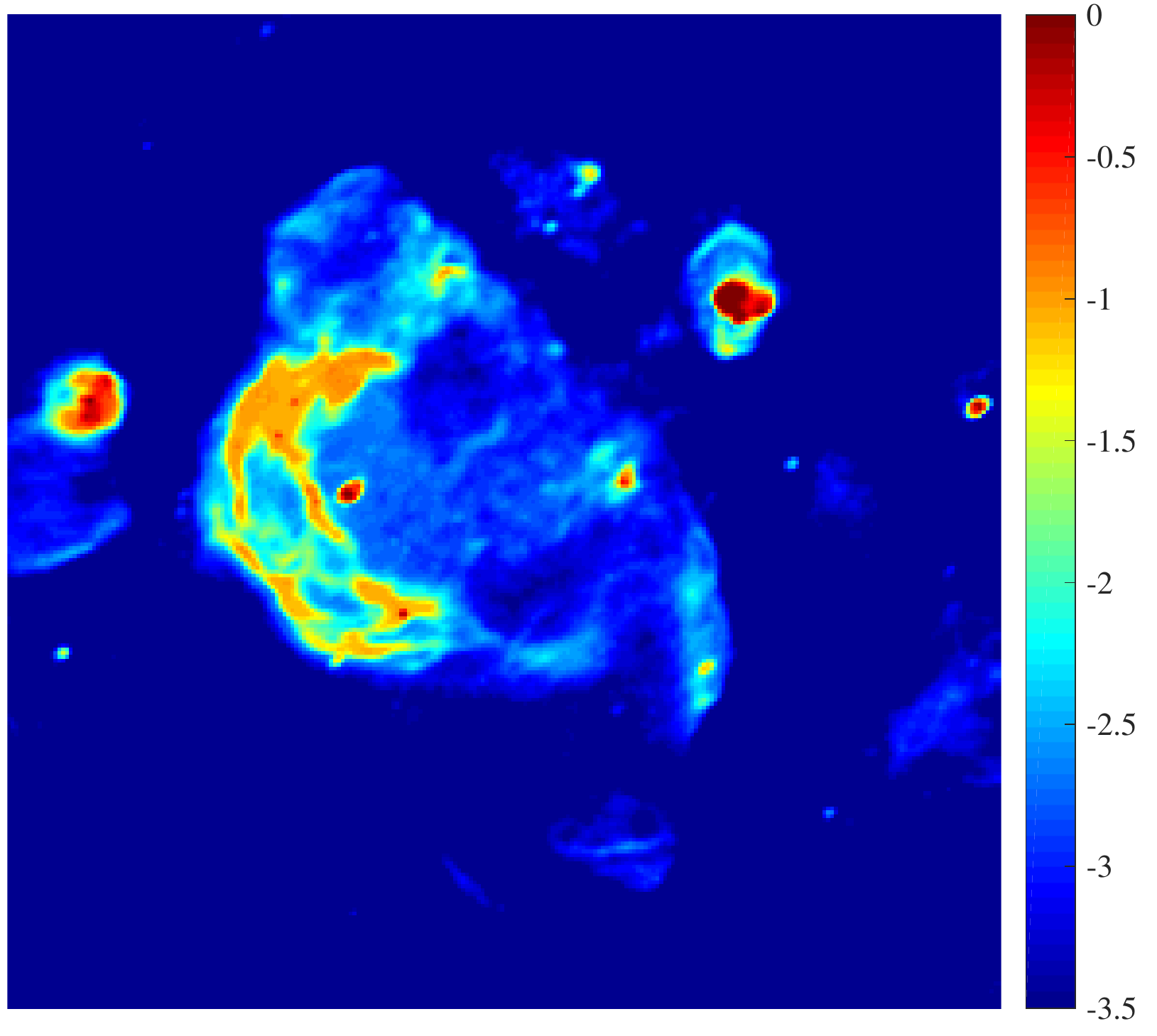}	\\
	(a)	&	(b)	
\end{tabular}\\
\begin{tabular}{@{}c@{}}
	\includegraphics[height=5.6cm]{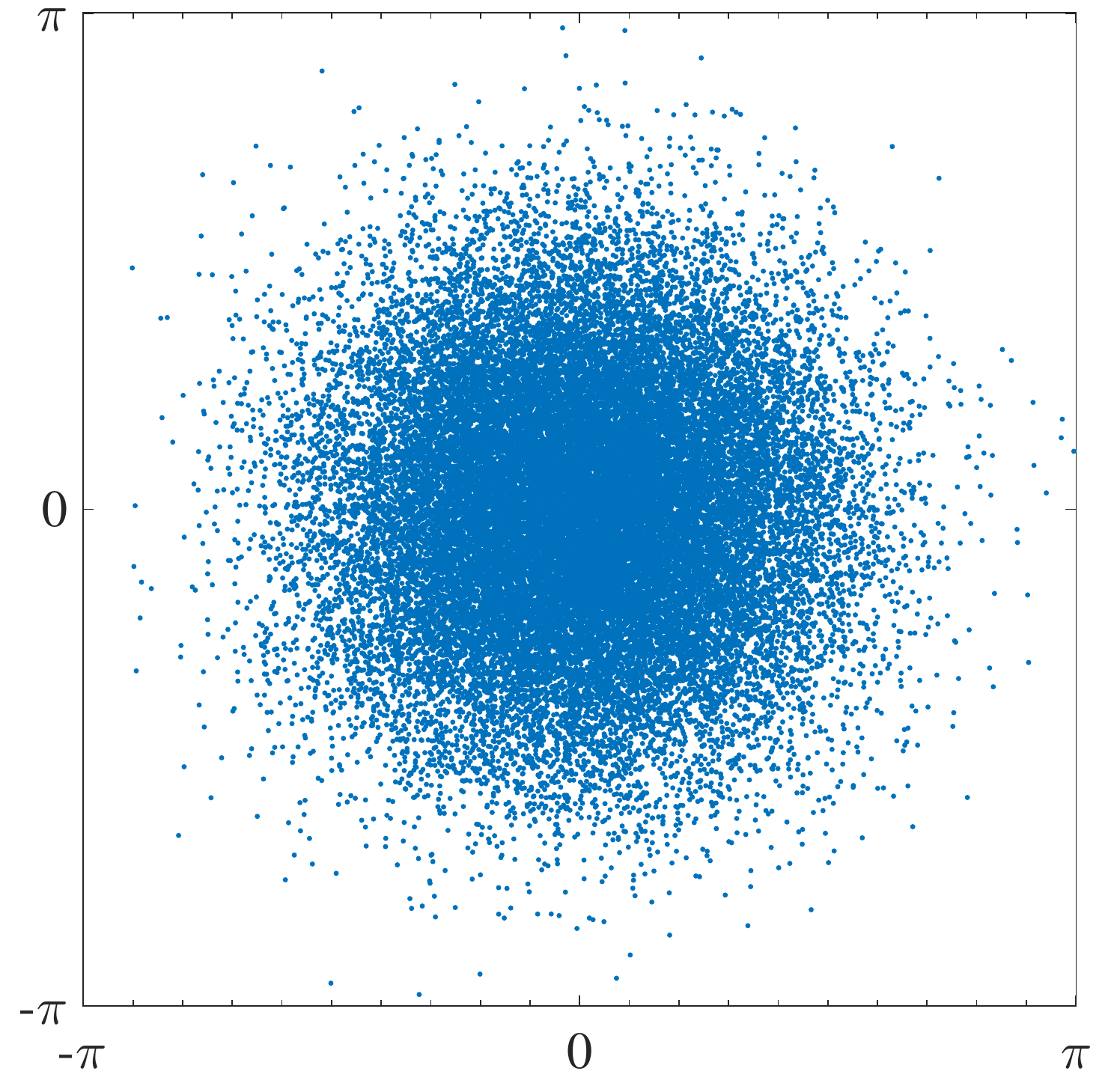}	\\
	(c)	
\end{tabular}
\end{center}
\caption{\label{Fig:RI:description} \small
Radio-astronomical imaging problem. 
(a)~Original image of W28 in linear scale. 
(b)~Original image of W28 in log scale. 
(c)~Normalized continuous Fourier space showing the frequencies selected to obtain $y$, using a random samplings with $M/N=0.5$.
}
\end{figure}

Radio astronomy aims to observe the sky at high angular resolution through an array of antennas. 
New radio telescopes, such as the future flagship Square Kilometre Array (SKA) are intended to provide images at unprecedented resolutions and sensitivities, and on a wide frequency band. Data rate estimates, for the first phase of development of the telescope only, are around few terabytes per second. The massive amounts of data to be acquired will represent a great challenge for the infrastructure and signal processing, and the methods solving the inverse problem associated with the image reconstruction need to be fast and to scale well with the data volumes and the expected image sizes (gigapixel sizes for monochromatic imaging). In this context, not only image estimation but also associated uncertainty quantification methodologies, key to the scientific interpretation of the data, must scale to extreme dimension.

Formally, we are interested in estimating the original sky brightness distribution $\overline{x} \in \R^N$ from $M$ measurements $y \in \C^M$. The measurement operator $\Phi \in \C^{M \times N}$, which in the simplest setting, consists in a non-uniform Fourier sampling operator, and $w \in \C^M$ is a realization of an additive complex i.i.d. Gaussian noise with zero mean and variance equal to $\sigma^2 \in \{0.01, 0.02, 0.03\}$, for both the real and imaginary parts of the noise.  
This model defines an ill-posed inverse problem for the recovery  of the radio sky $\overline{x}$. 
An  intensity image, representing W28 supernova with $N=256 \times 256$, is shown in Figure~\ref{Fig:RI:description}(a). Radio sky images are particularly difficult to reconstruct due to their important dynamic range. This dynamic range can be observed in the log-scaled image of W28 displayed in Figure~\ref{Fig:RI:description}(b). 
In our simulations we consider random Fourier samplings. This allows us to investigate the performance of the proposed uncertainty quantification approach with different Fourier samplings, considering several sampling ratio values $M/N \in \{0.5, 0.75, 1\}$. 
More precisely, we use Fourier samplings generated randomly through a Gaussian distribution, with zero mean and variance of 0.25 of the maximum frequency, creating a concentration of data at low frequencies. 
An example of Fourier samplings for the ratio $M/N=0.5$ is given in Figure~\ref{Fig:RI:description}(c).

In our simulations, we will perform Bayesian uncertainty quantification on three different spatially localized structures and on the background of the MAP estimate, defined mathematically in Definitions~\ref{example:structure} and \ref{example:background}, respectively. 

On the one hand, we investigate the uncertainty associated with the structures, denoted by Structure~1 and Structure~2, highlighted in red in the first columns of Figures~\ref{Fig:RI:struct1} and \ref{Fig:RI:struct2} respectively. We consider as well the structure presented in Section~\ref{ssec:Illust} for illustration of the method, namely Structure~3, highlighted in red in Figure~\ref{Fig:Illustration_high}. 
These three structures consist of compact or slightly extended sources corresponding to the definition of $\Sc$ given by Definition~\ref{example:structure}. 
This set $\Sc$ is characterized by $\Lc$, chosen such that $\Lc = \frac{1}{3}( \Lc_{3 \times 3} + \Lc_{7 \times 7} + \Lc_{11 \times 11})$, where $\Lc_{3 \times 3}$ (resp. $\Lc_{7 \times 7}$ and $\Lc_{11 \times 11}$) are built to model a 2D normalized convolution between the image (filled with zeros inside the structure) and  2D Gaussian convolution kernels of size $3 \times 3$ (resp. $7 \times 7$ and $11 \times 11$).
In addition, we choose$\tau = \operatorname{std}\big( \Mc(x^\dagger) - \Lc(\Mc(x^\dagger)) \big)$ for the set $\Sc_2$ and $b = 0$ and $\theta = \big\| \Lc \big( \Mc \big( x^\dagger \big)  \big) \big\|_2$ for the set $\Sc_3$.

On the other hand, we investigate the uncertainty associated with the background including all the weak intensity structures of the MAP estimates. 
The backgrounds of the MAP solutions obtained when considering $(M/N, \sigma^2) = (1, 0.01)$ and $(M/N, \sigma^2) = (0.5, 0.03)$ can be seen in the first column of Figure~\ref{Fig:RI:background}, where the log scale has been chosen to show values ranging from $10^{-4.2}$ to $\max_{1 \le n \le N}x^\dagger_n=1$. More precisely, the first two rows correspond to the case $(M/N, \sigma^2) = (1, 0.01)$, with the MAP estimate shown in the first row and zoomed images in the second rows. Similarly, the last two rows correspond to the case $(M/N, \sigma^2) = (0.5, 0.03)$. 
Mathematically, the set considered for the uncertainty quantification of the background is described in Example~\ref{example:background}, where $\underline{\tau}=0$ and $\overline{\tau} = \vartheta \| \Mc(x^\dagger) \|_2 / N_{\Mc}$ (for instance, $\vartheta=10^{-2}$).
In practice, for each MAP estimate $x^\dagger$, the background, represented by the operator $\Mc$ selecting its support, is determined through its complement, which is built in 2 steps. 
Firstly, we identify the structures of the image by selecting the elements of $x^\dagger$ with values larger than ${10^{-3}\times \max_{1 \le n \le N}x^\dagger_n}$. Then, the selected elements are dilated with disks of radius of size 7 pixels.

\subsubsection{Uncertainty quantification in radio astronomy}
\label{ssec:UQ_RI}


\begin{figure}[h!]
\begin{center}
\begin{tabular}{@{}c@{}c@{}c@{}}
	\includegraphics[height=3.8cm]{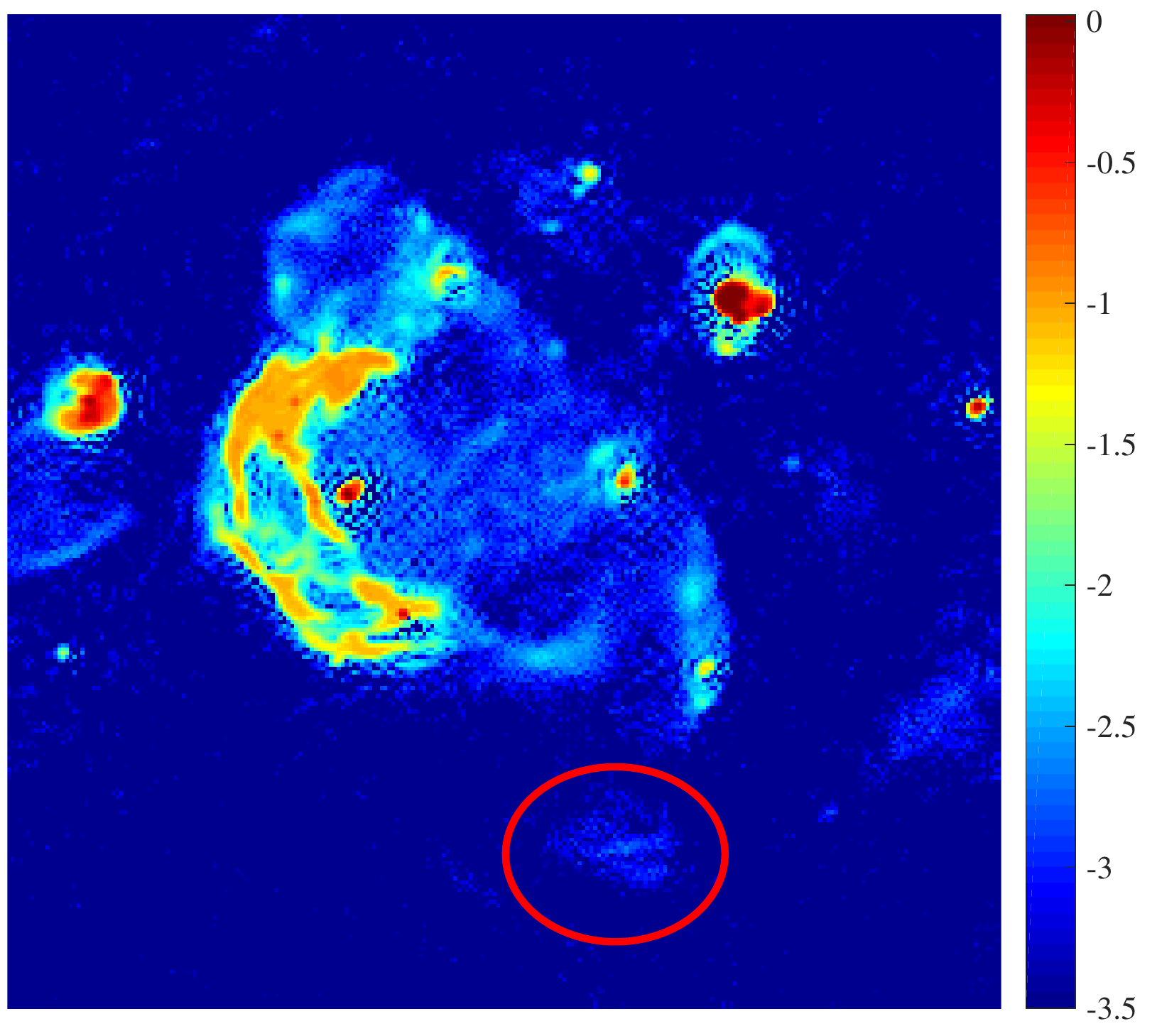}
&	\includegraphics[height=3.8cm]{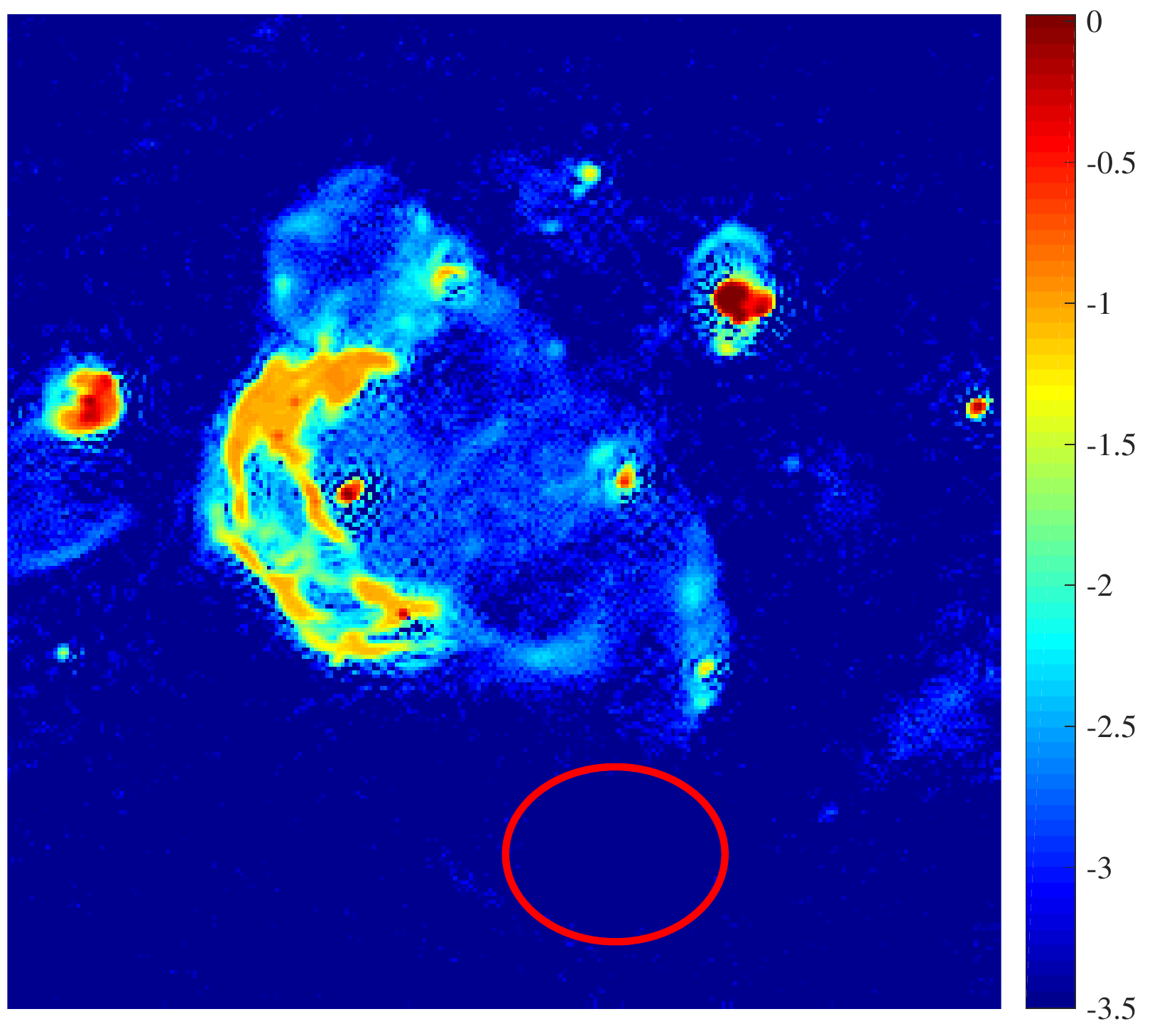}
&	\includegraphics[height=3.8cm]{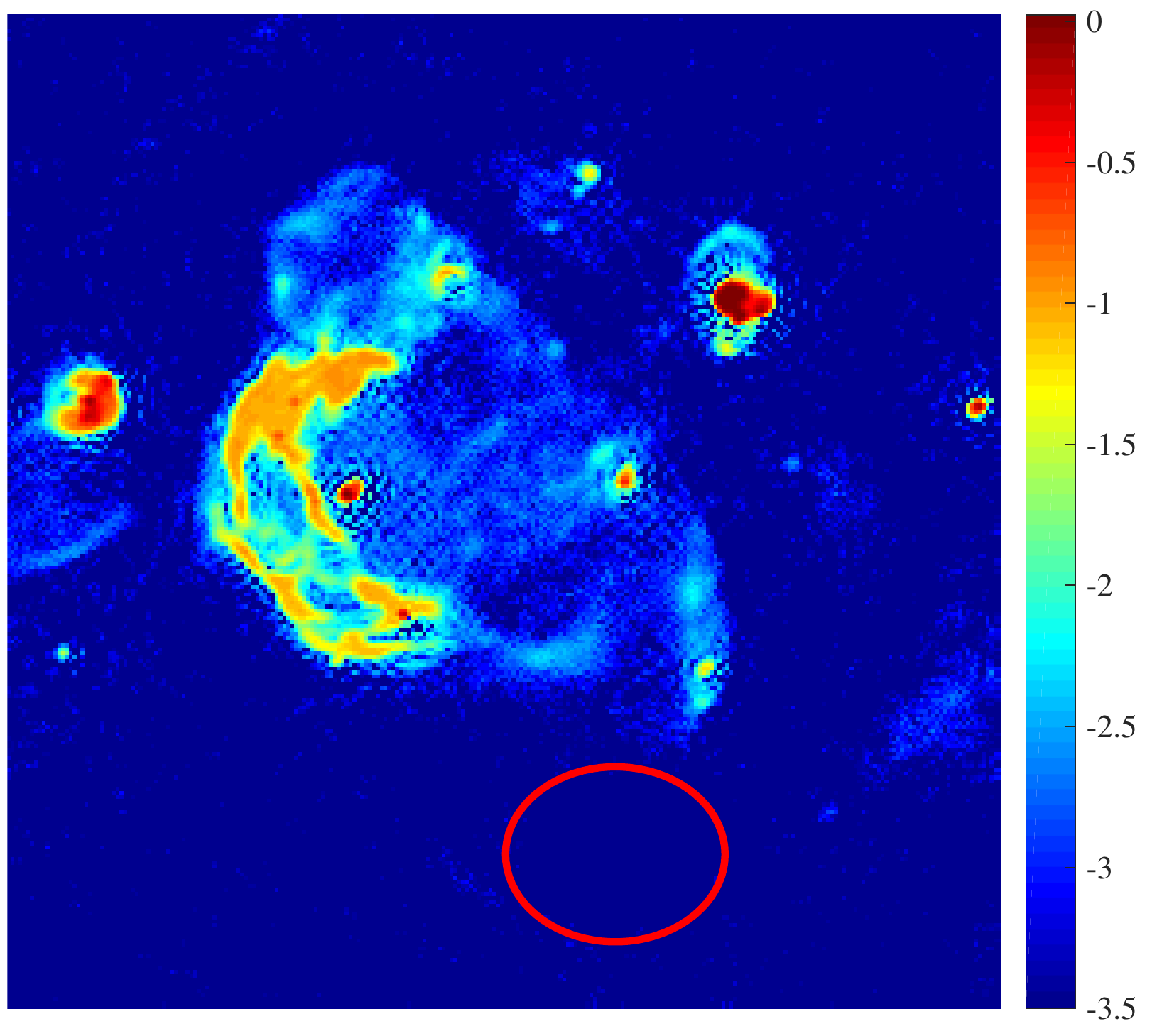}	\\
	\includegraphics[height=3.8cm]{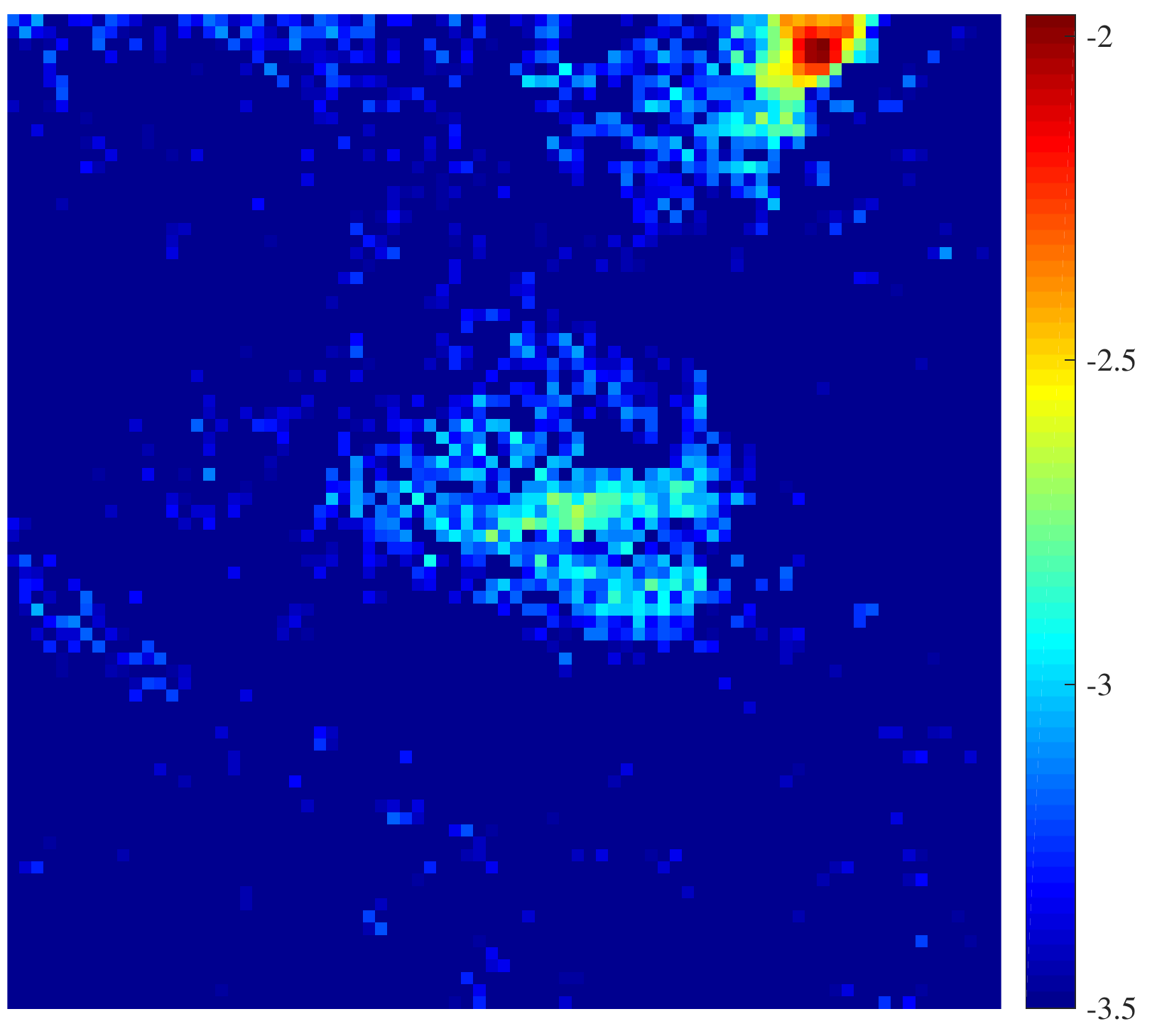}
&	\includegraphics[height=3.8cm]{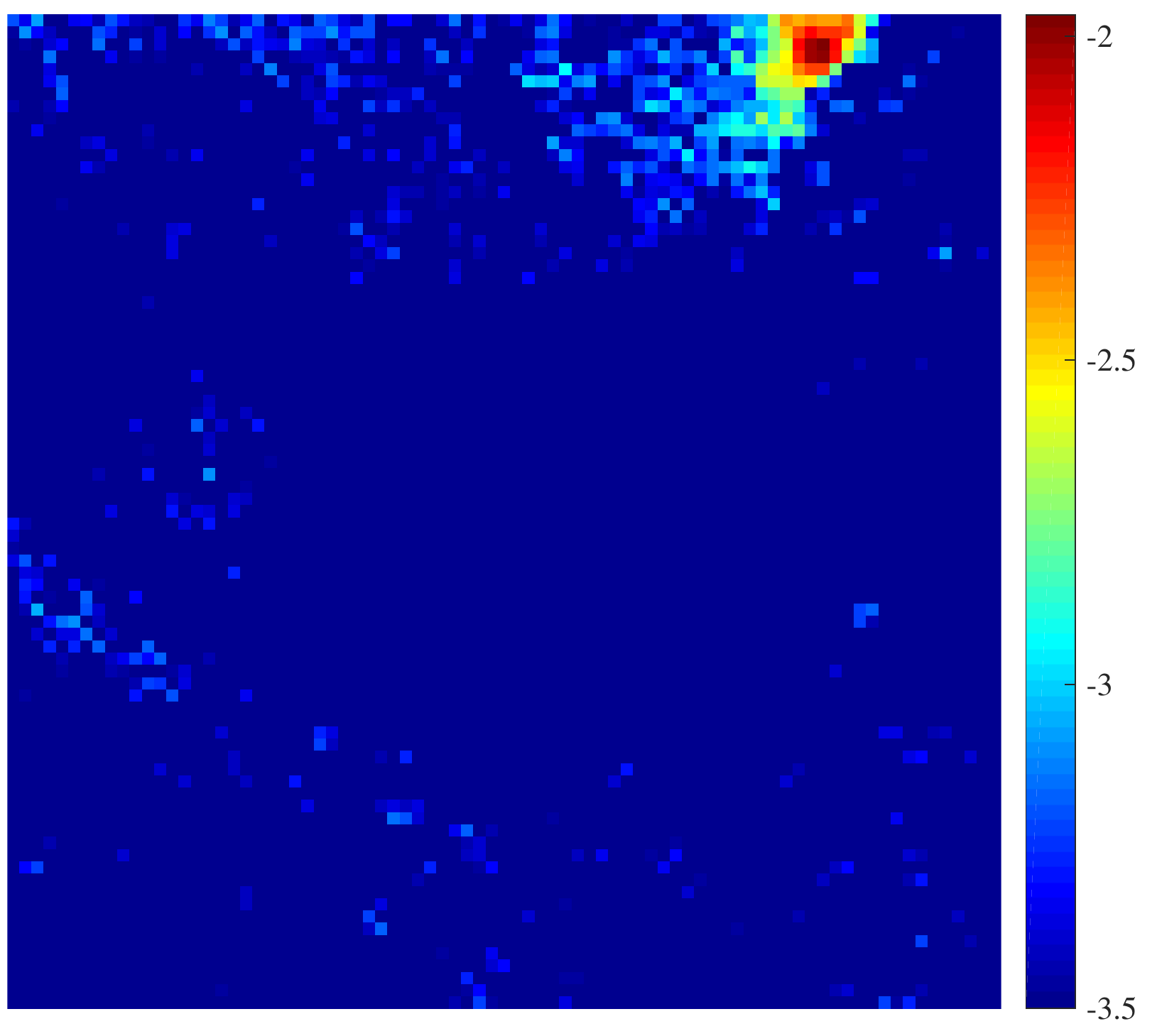}	
&	\includegraphics[height=3.8cm]{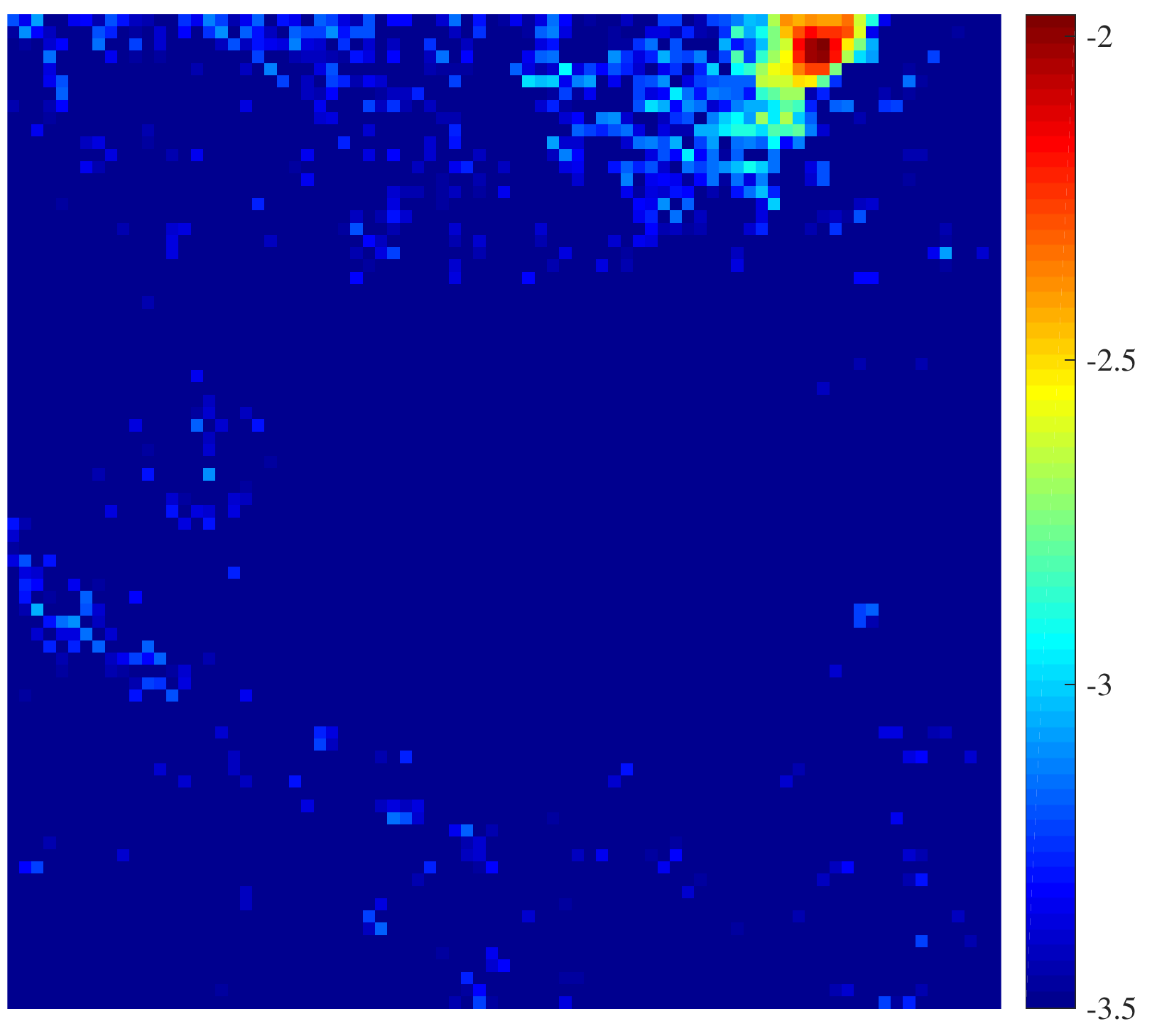}	
\end{tabular}
\end{center}
\caption{\label{Fig:RI:struct1} \small
Simulation results for the radio-astronomical imaging problem. Uncertainty quantification of Structure~1, in the case when $M/N = 0.5$ and $\sigma^2=0.03$. In this context, $\rho_\alpha = 0.07\%$ and $H_0$ cannot be rejected.
Top row: images in log scale with Structure~1 highlighted in red with, from left to right, 
$x^\dagger$, $x^\ddagger_{\widetilde{\Cc}_\alpha}$, and $x^\ddagger_{\Sc}$. 
Bottom row: zoomed images in log scale on the area of Structure~1, corresponding to the images displayed in first row. The log scale in the zoomed images is adapted to better emphasize Structure~1.
}
\end{figure}

\begin{figure}[h!]
\begin{center}
\begin{tabular}{@{}c@{}c@{}c@{}}
	\includegraphics[height=3.8cm]{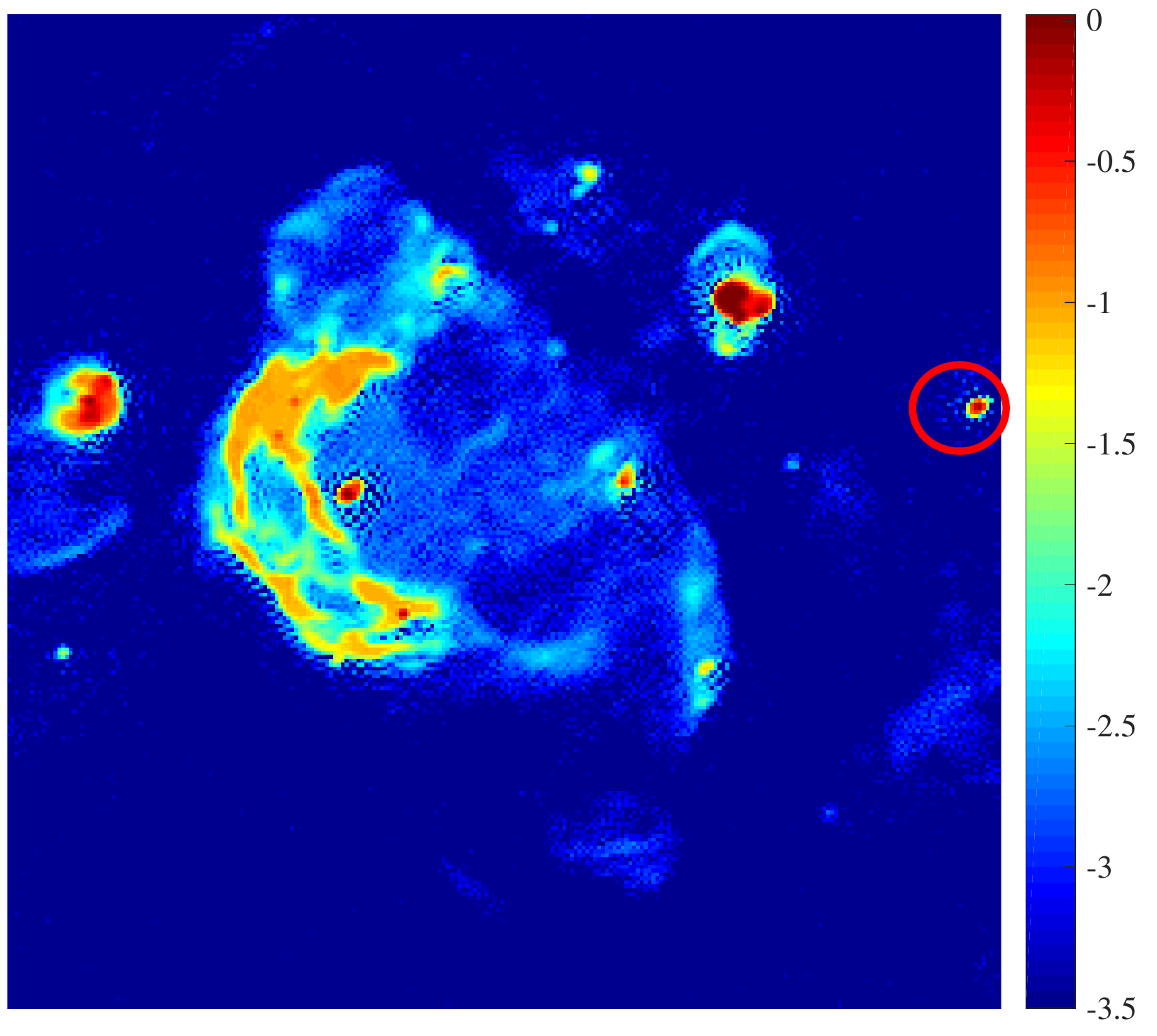}
&	\includegraphics[height=3.8cm]{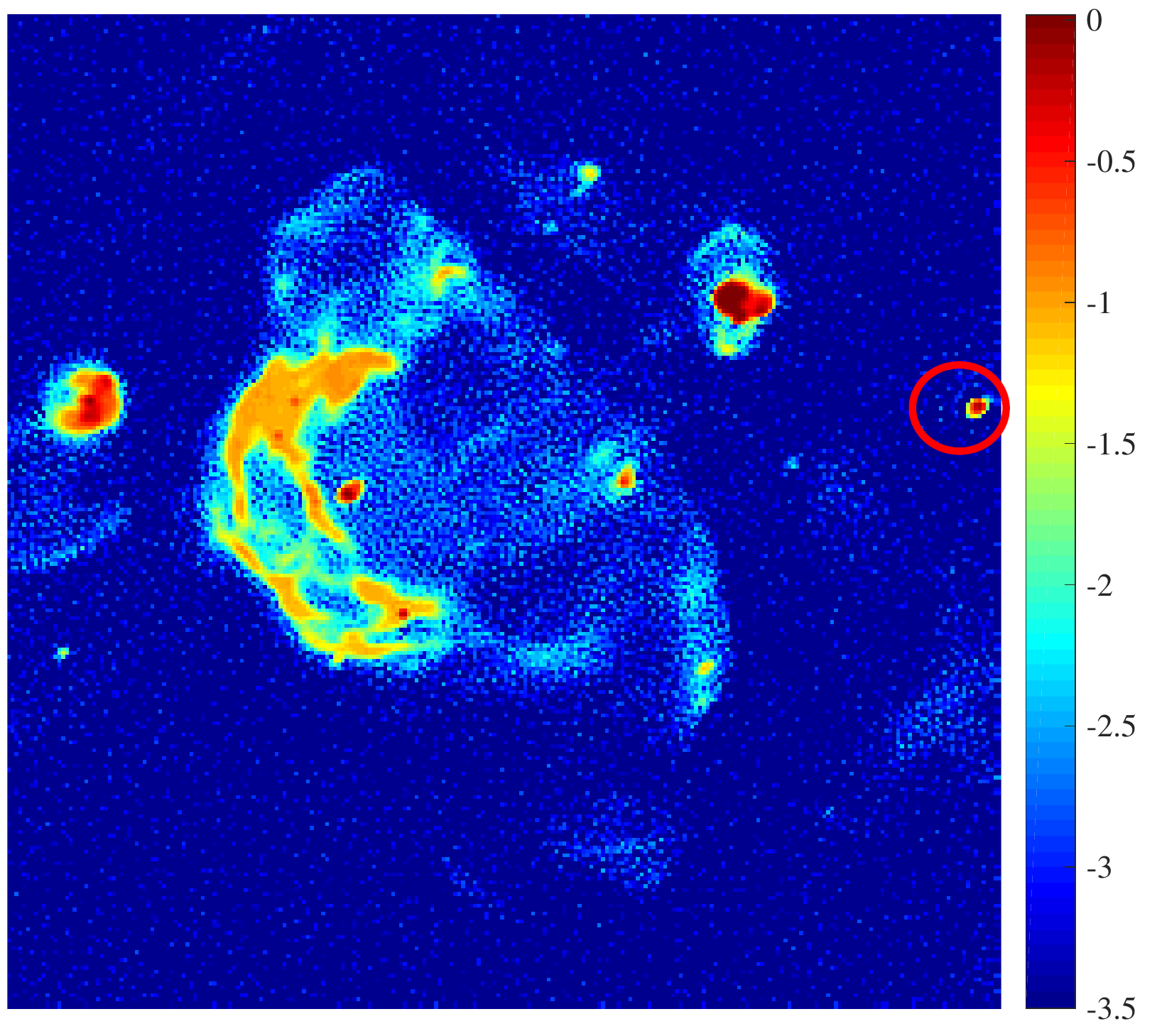}
&	\includegraphics[height=3.8cm]{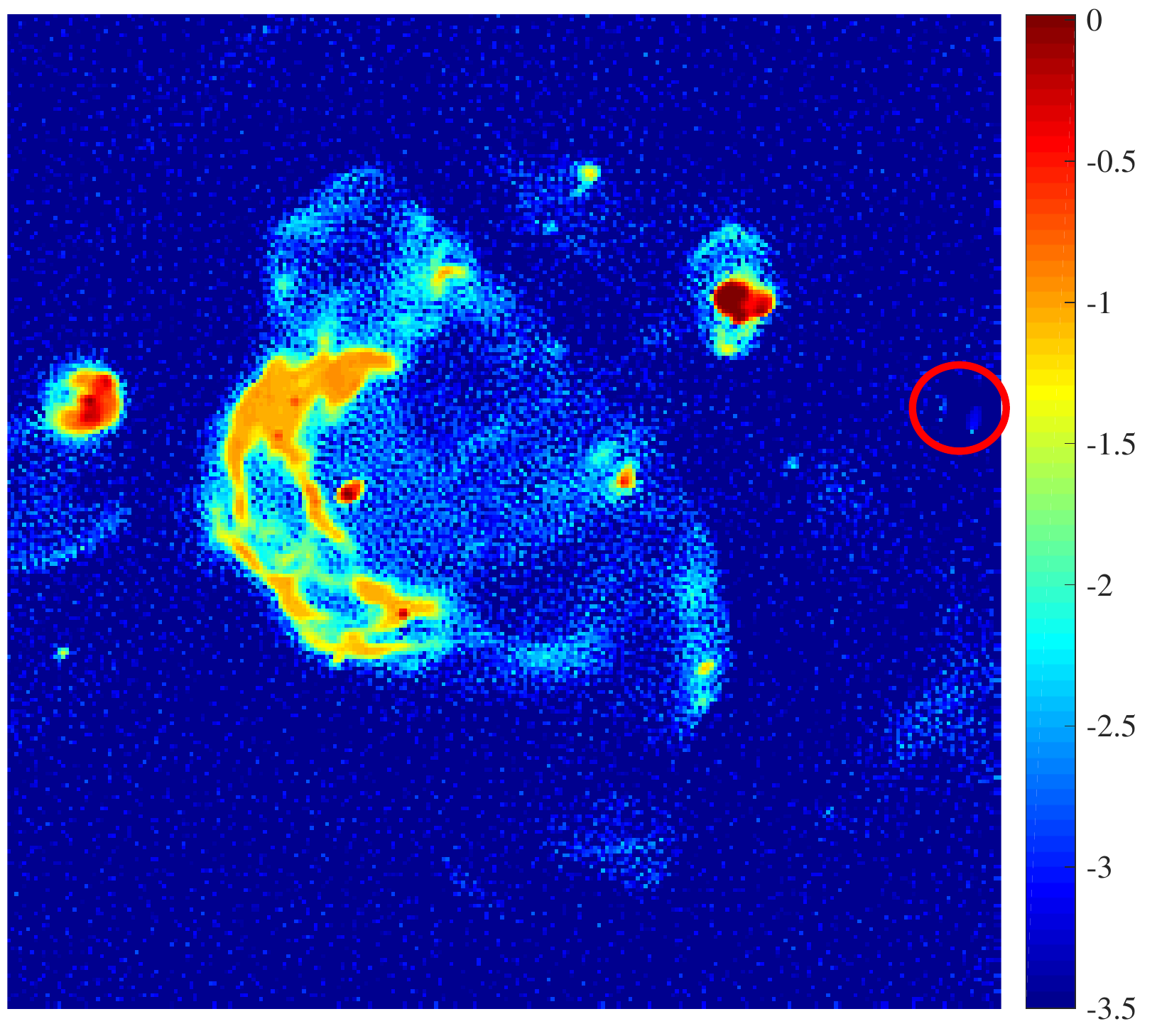}	\\
	\includegraphics[height=3.8cm]{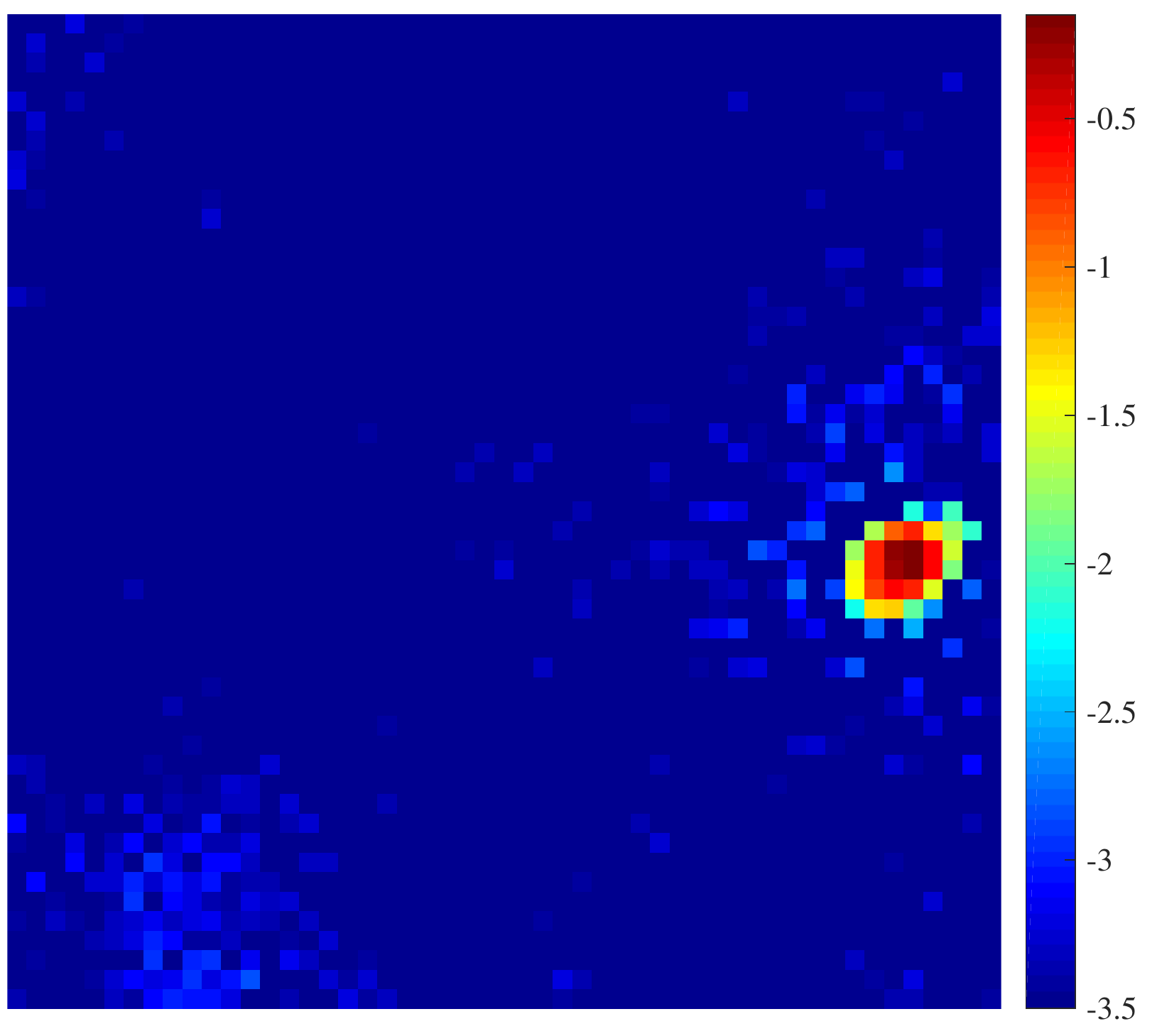}
&	\includegraphics[height=3.8cm]{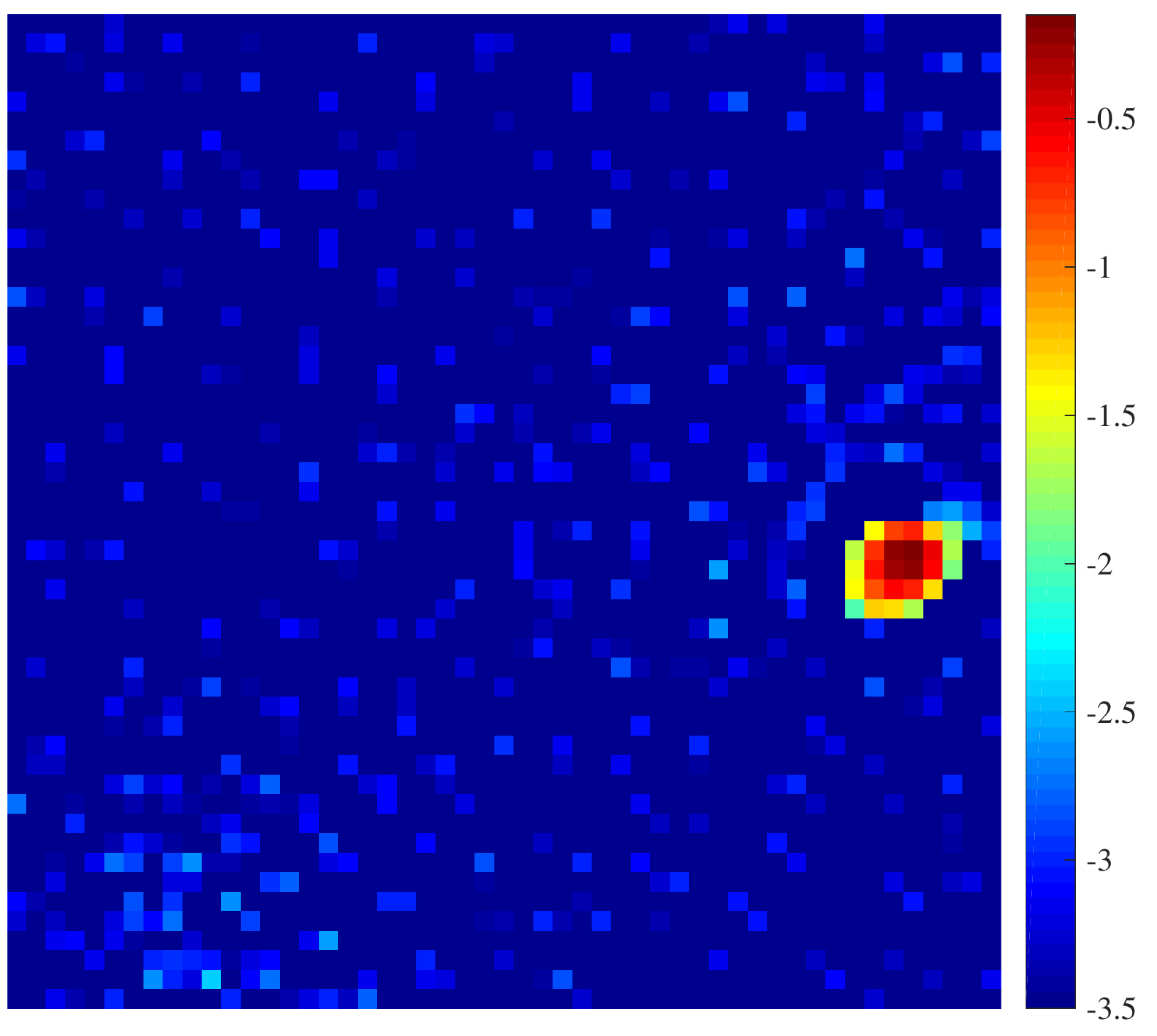}	
&	\includegraphics[height=3.8cm]{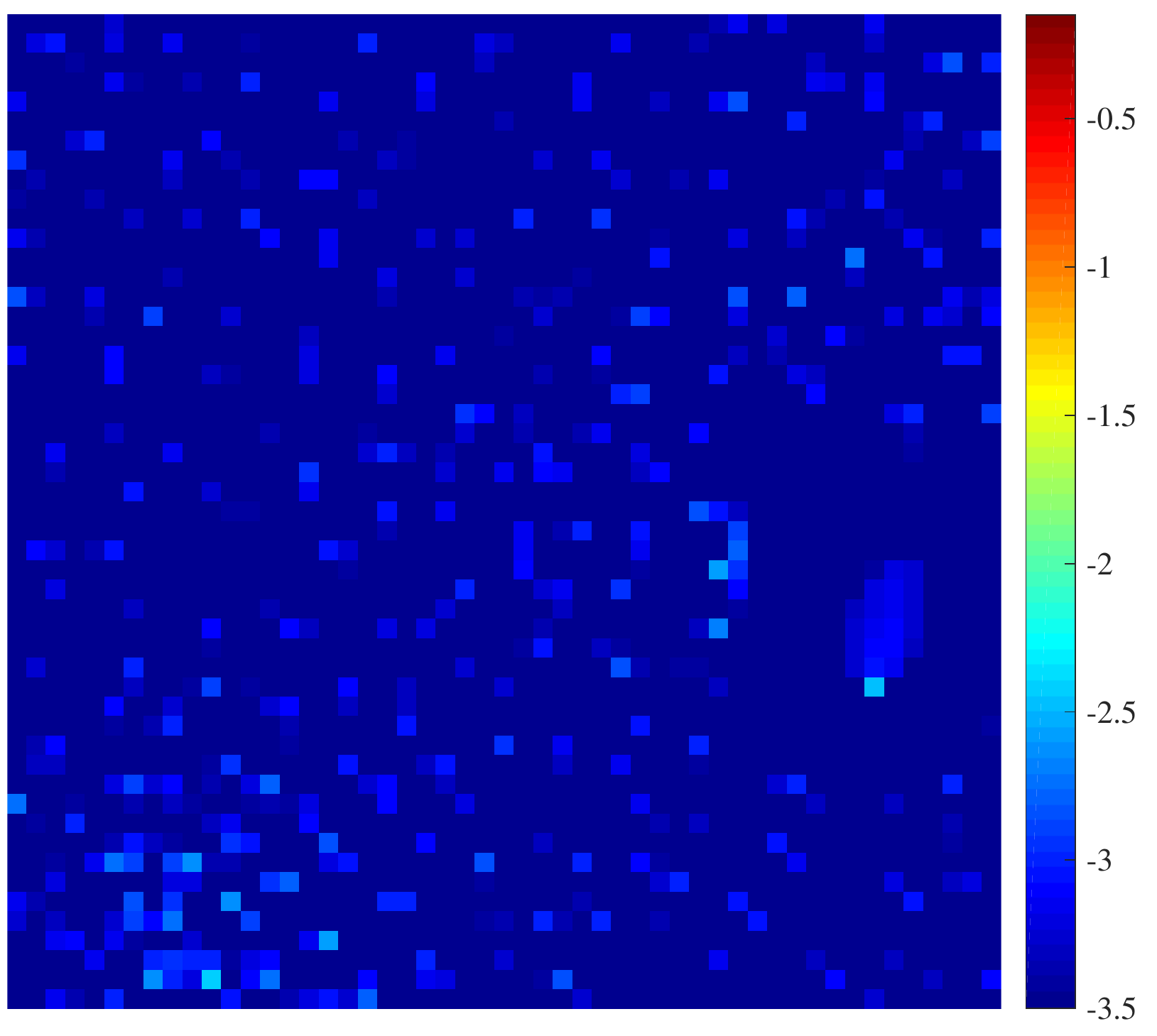}	
\end{tabular}
\end{center}
\caption{\label{Fig:RI:struct2}\small
Simulation results for the radio-astronomical imaging problem. Uncertainty quantification of Structure~2, in the case when $M/N = 1$ and $\sigma^2=0.01$. In this context, $\rho_\alpha = 96.58\%$ of the intensity's structure is confirmed at $99\%$, and $H_0$ is rejected with significance $1\%$.
Top row: images in log scale with Structure~2 highlighted in red with, from left to right, 
$x^\dagger$, 
$x^\ddagger_{\widetilde{\Cc}_\alpha}$, 
and $x^\ddagger_{\Sc}$. 
Bottom row: zoomed images in log scale on the area of Structure~2, corresponding to the images displayed in first row. The log scale in the zoomed images is adapted to better emphasize Structure~2.
}
\end{figure}

\begin{figure}[h!]
\begin{center}
\begin{tabular}{@{}c@{}c@{}c@{}}
	\includegraphics[height=3.8cm]{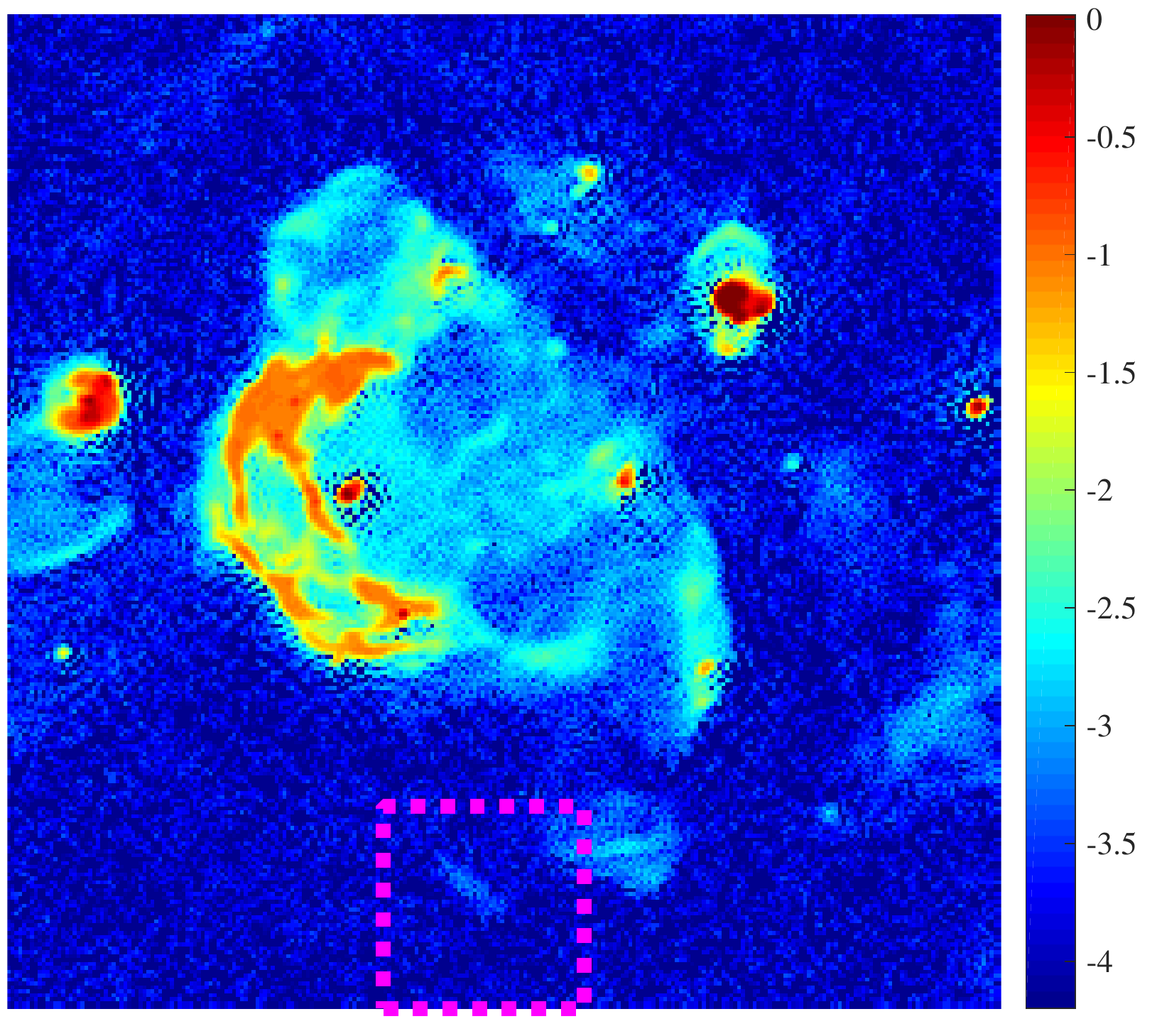}
&	\includegraphics[height=3.8cm]{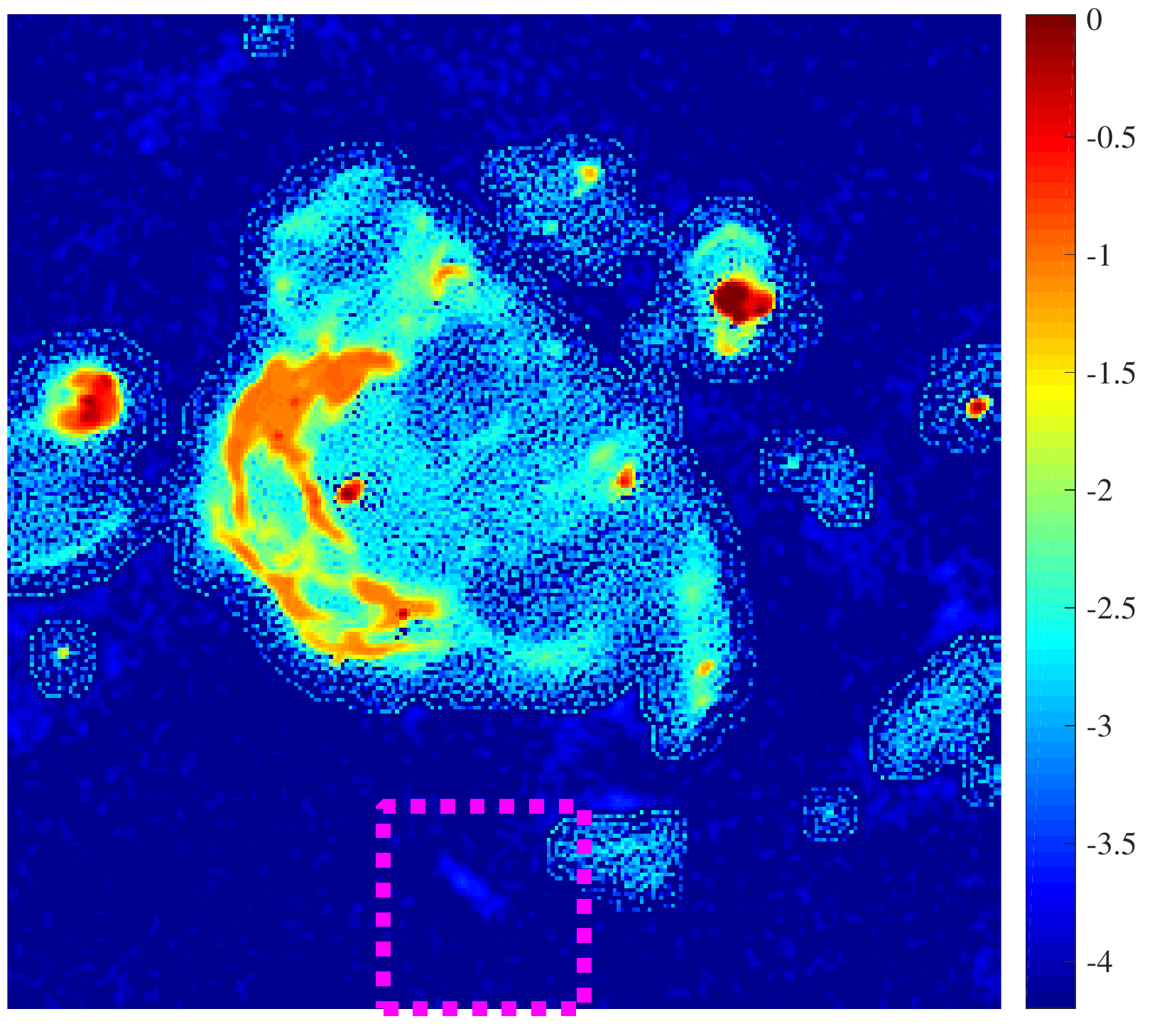}
&	\includegraphics[height=3.8cm]{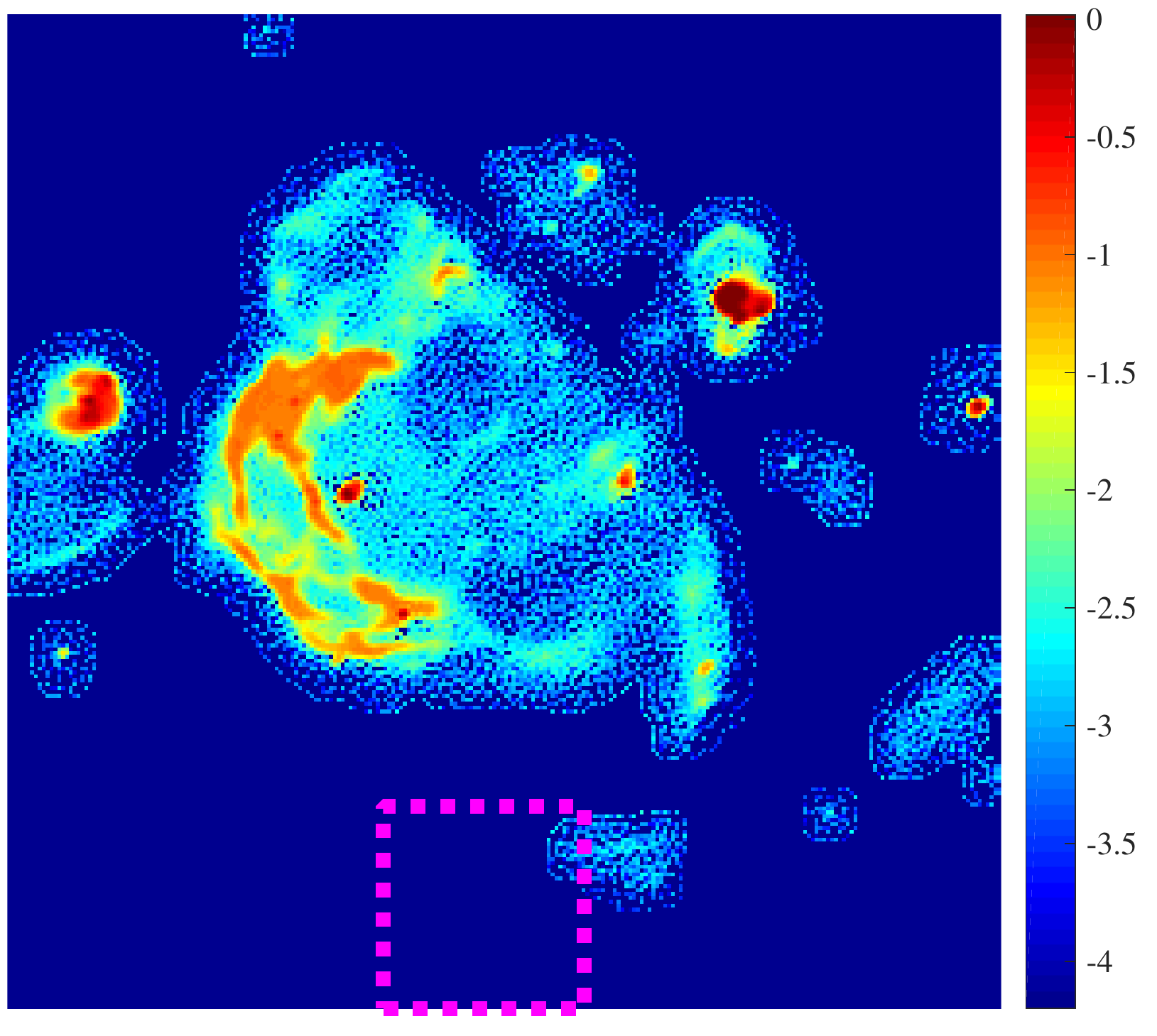}	\\
	\includegraphics[height=3.8cm]{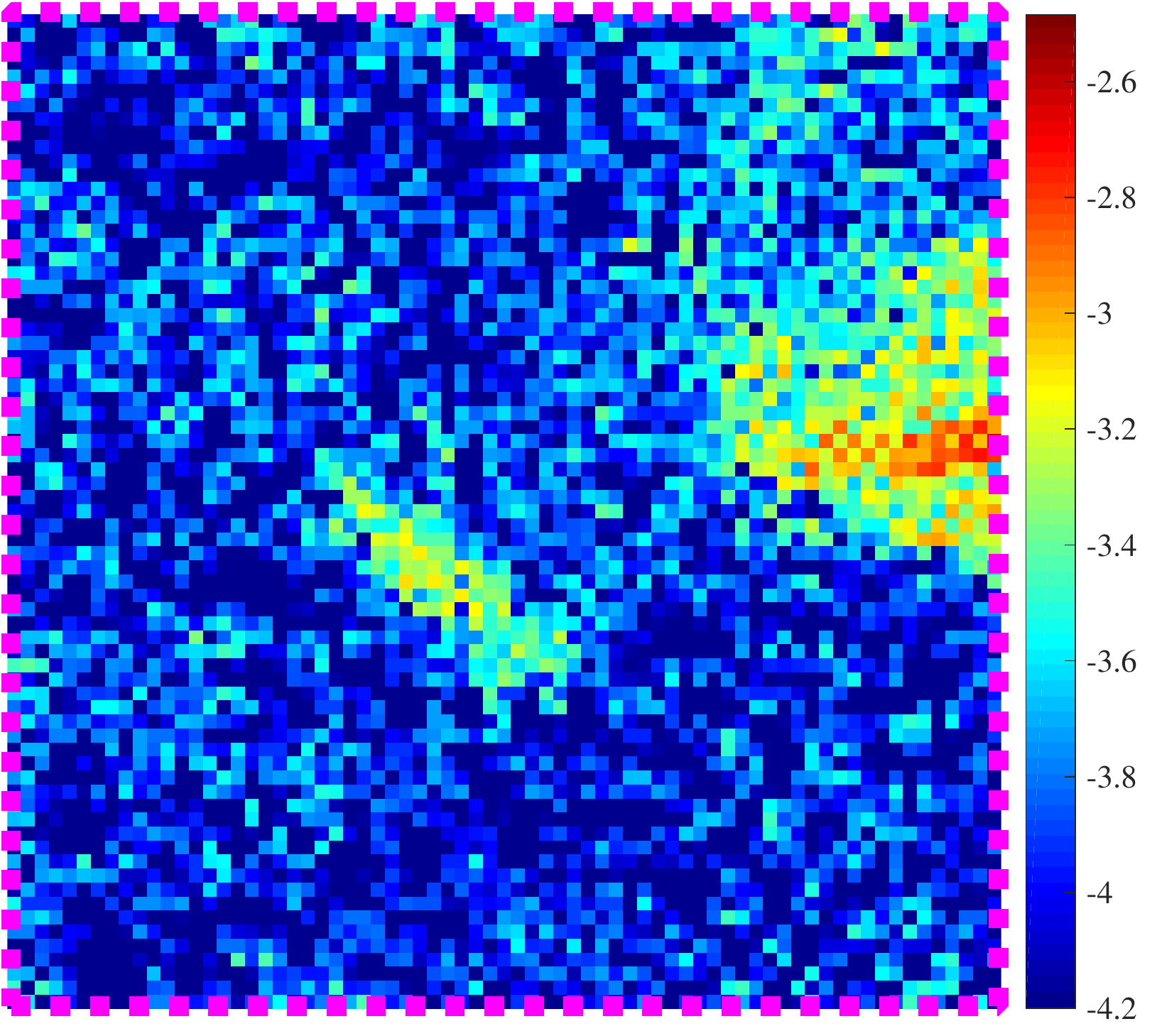}
&	\includegraphics[height=3.8cm]{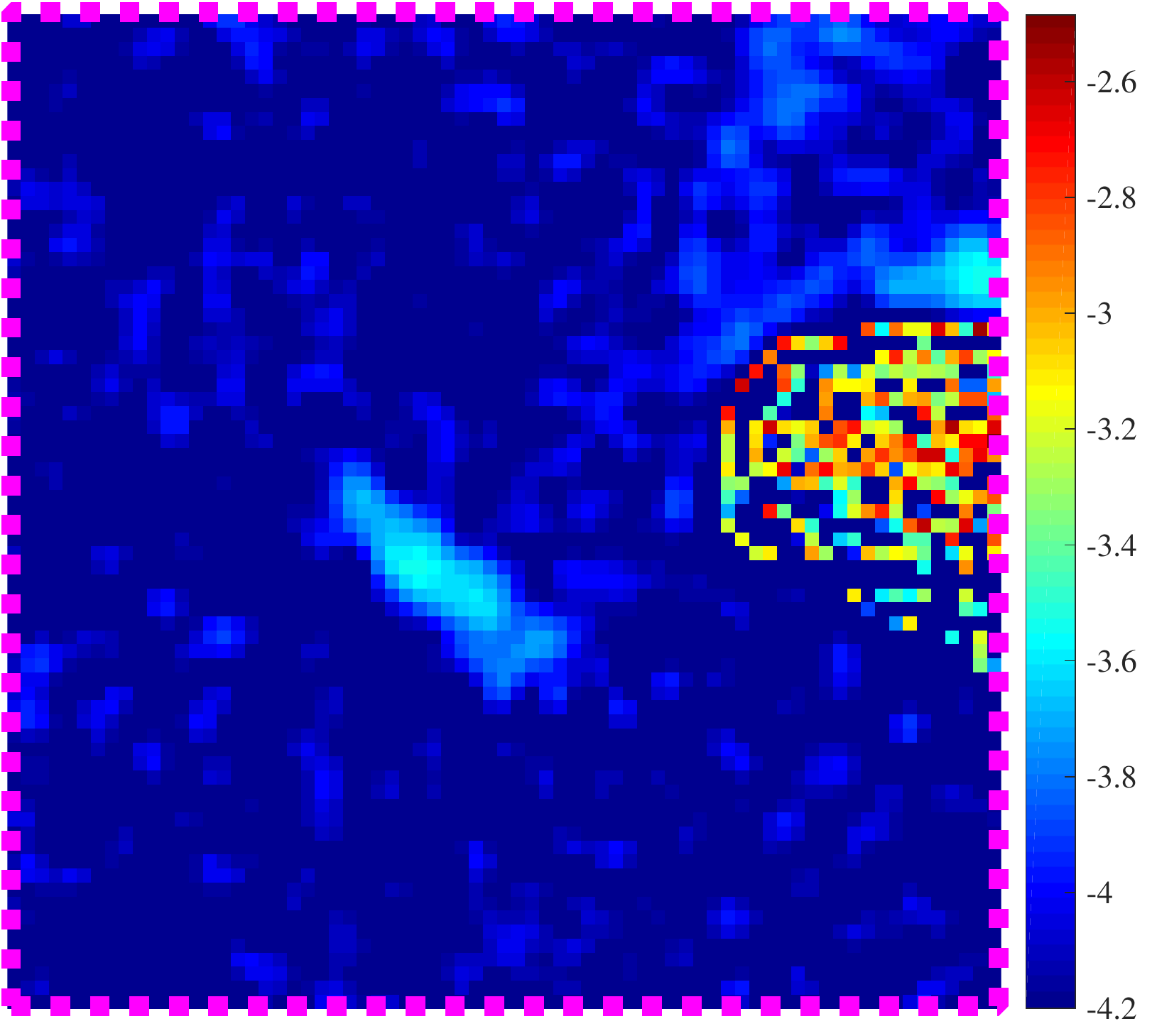}	
&	\includegraphics[height=3.8cm]{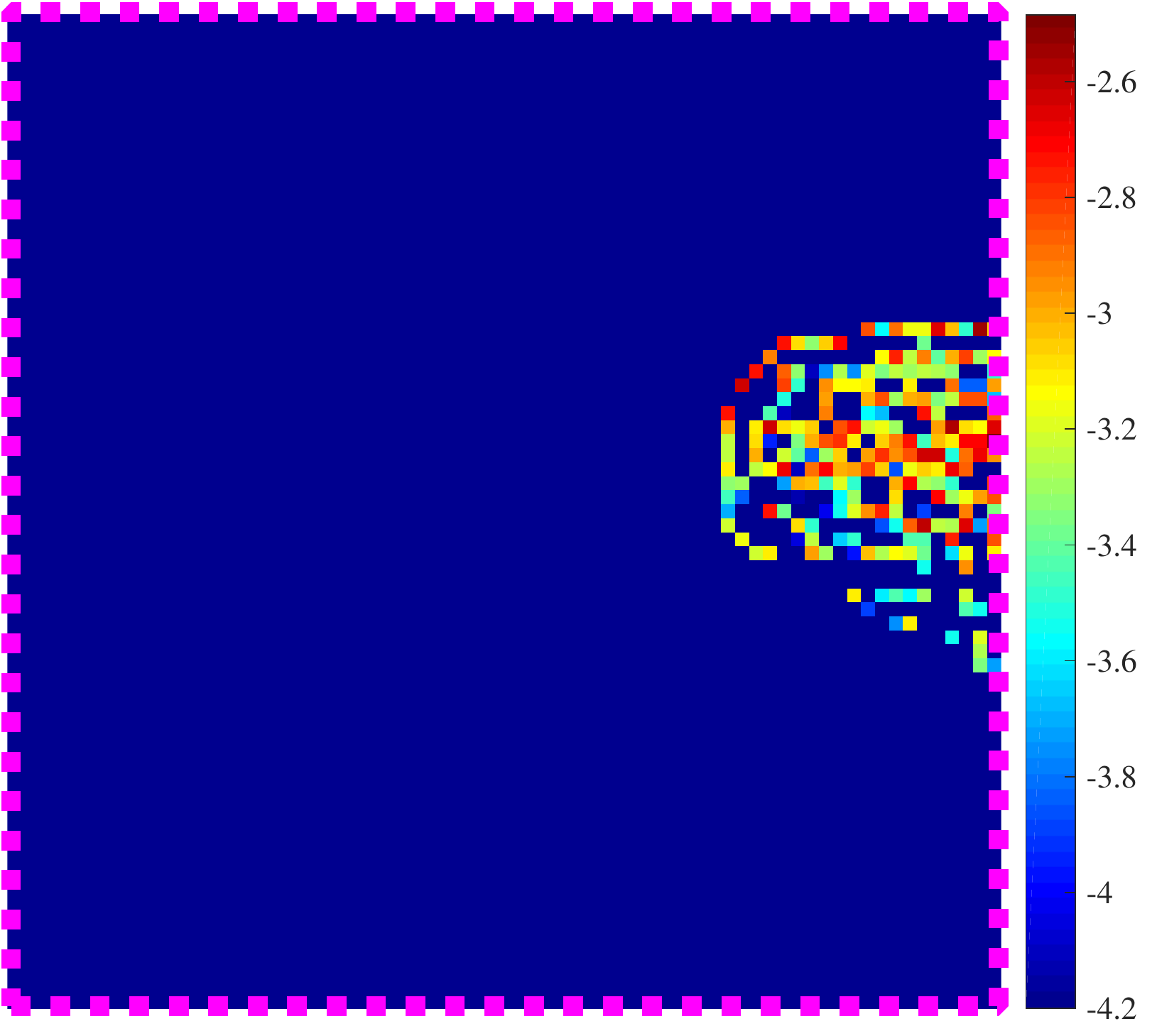}	\\[0.2cm]
	\includegraphics[height=3.8cm]{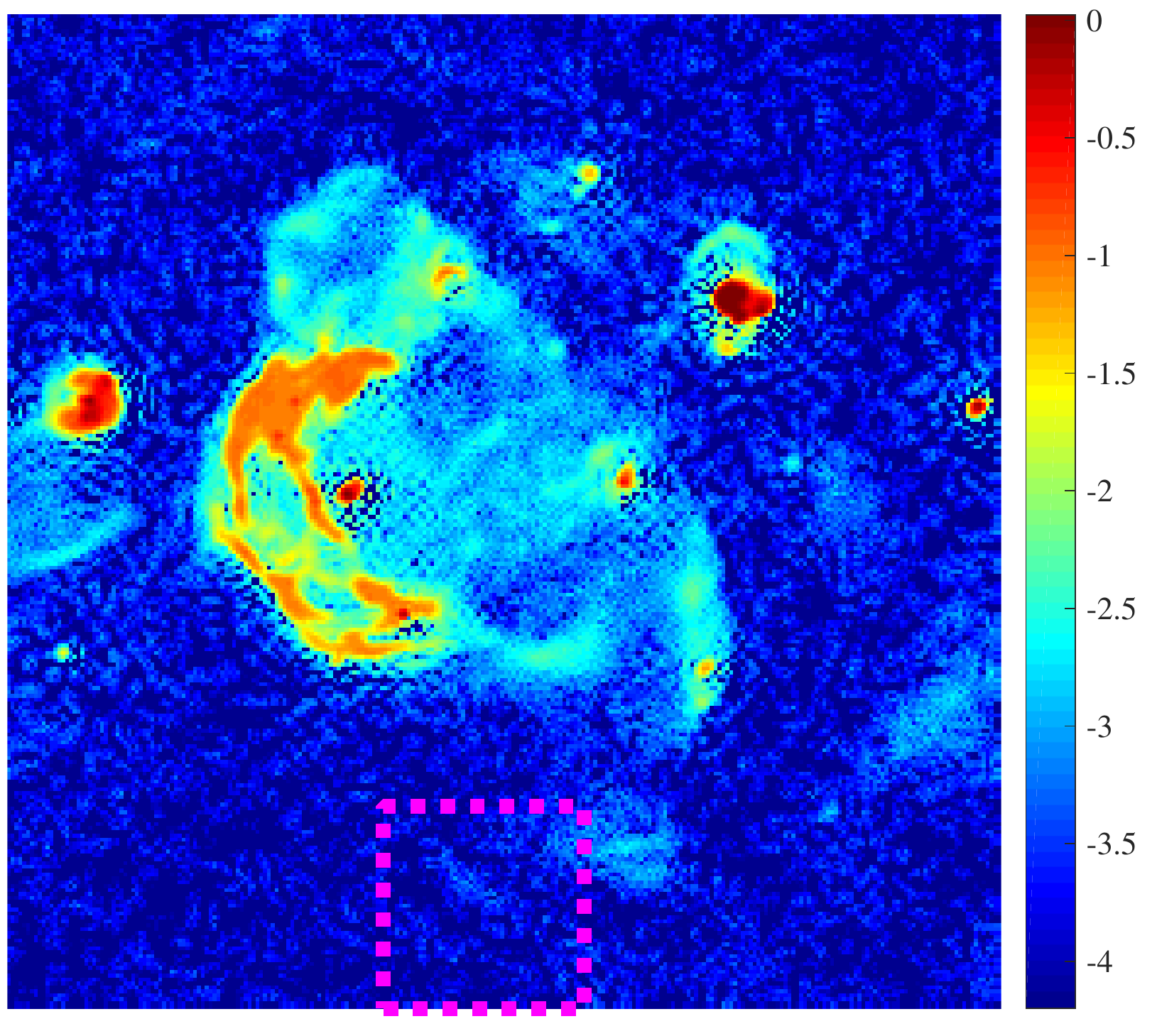}
&	\includegraphics[height=3.8cm]{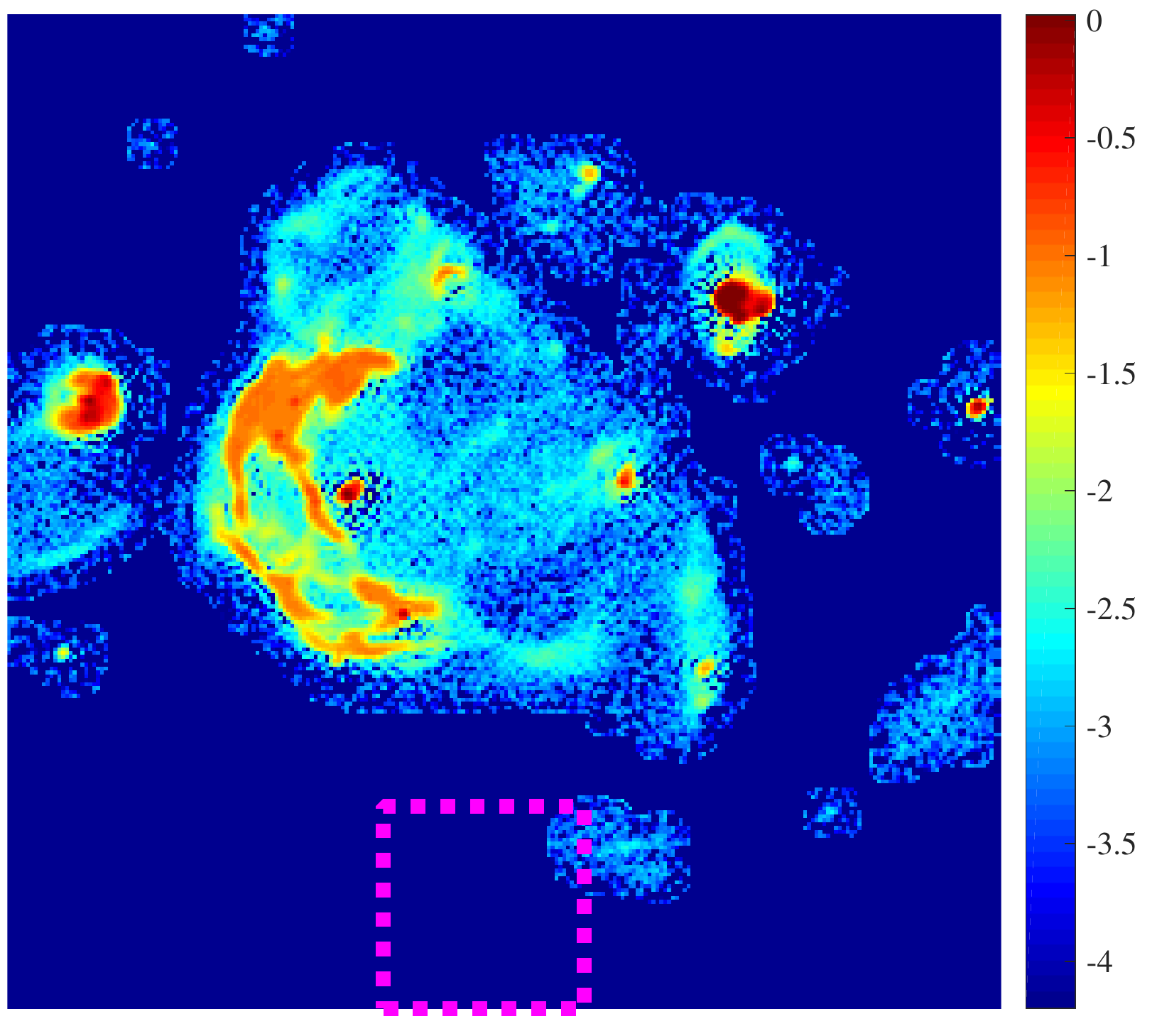}
&	\includegraphics[height=3.8cm]{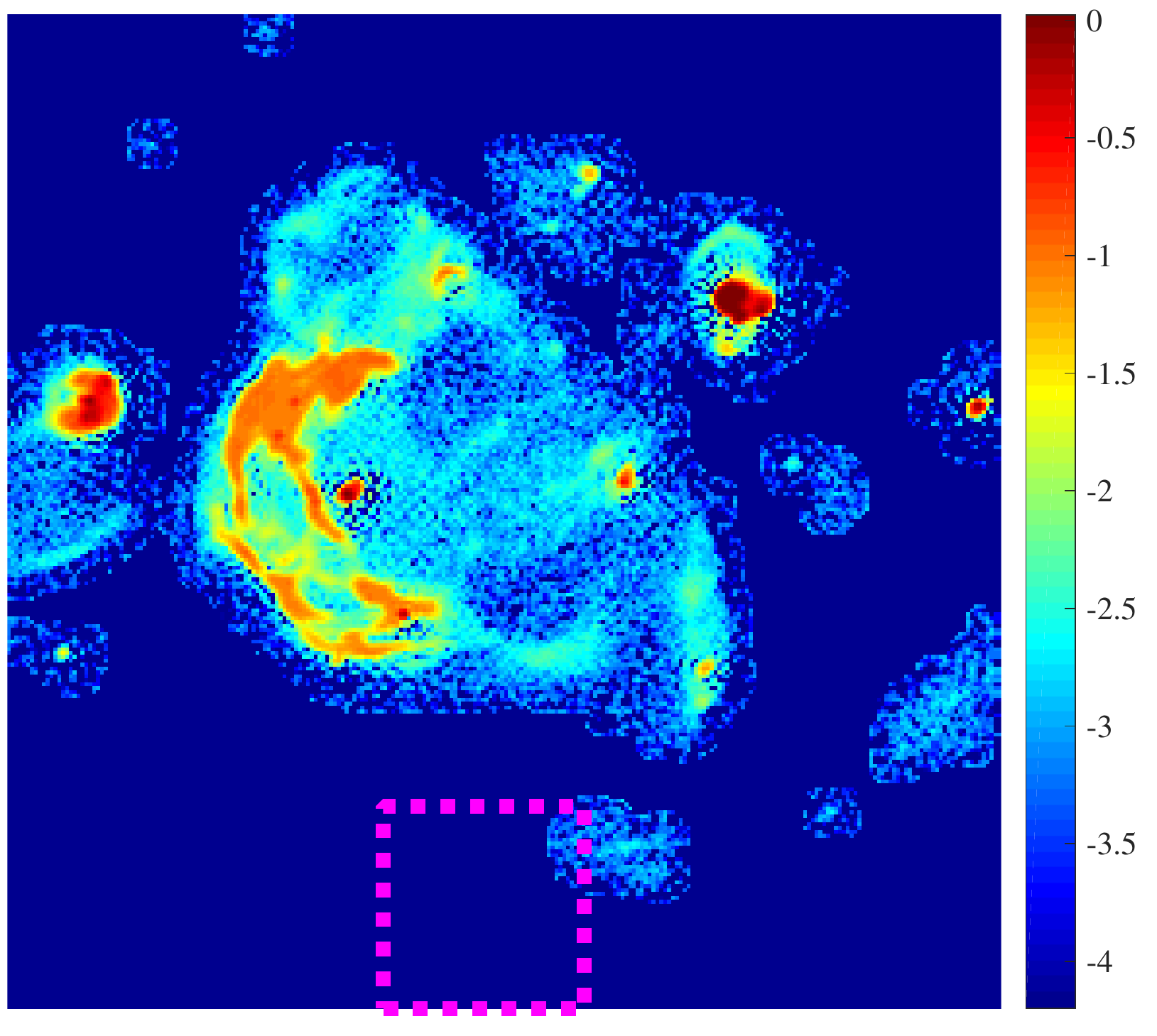}	\\
	\includegraphics[height=3.8cm]{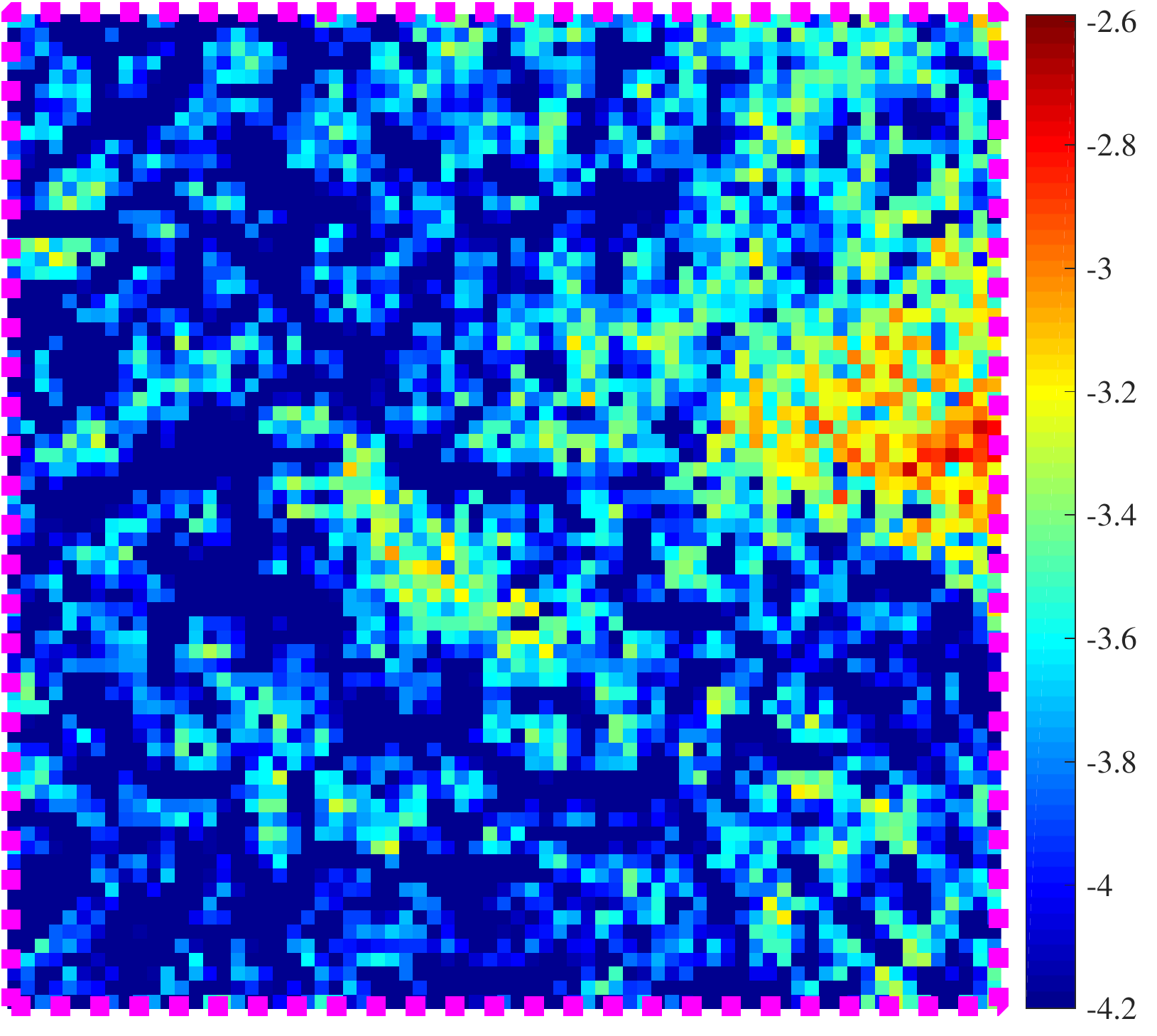}
&	\includegraphics[height=3.8cm]{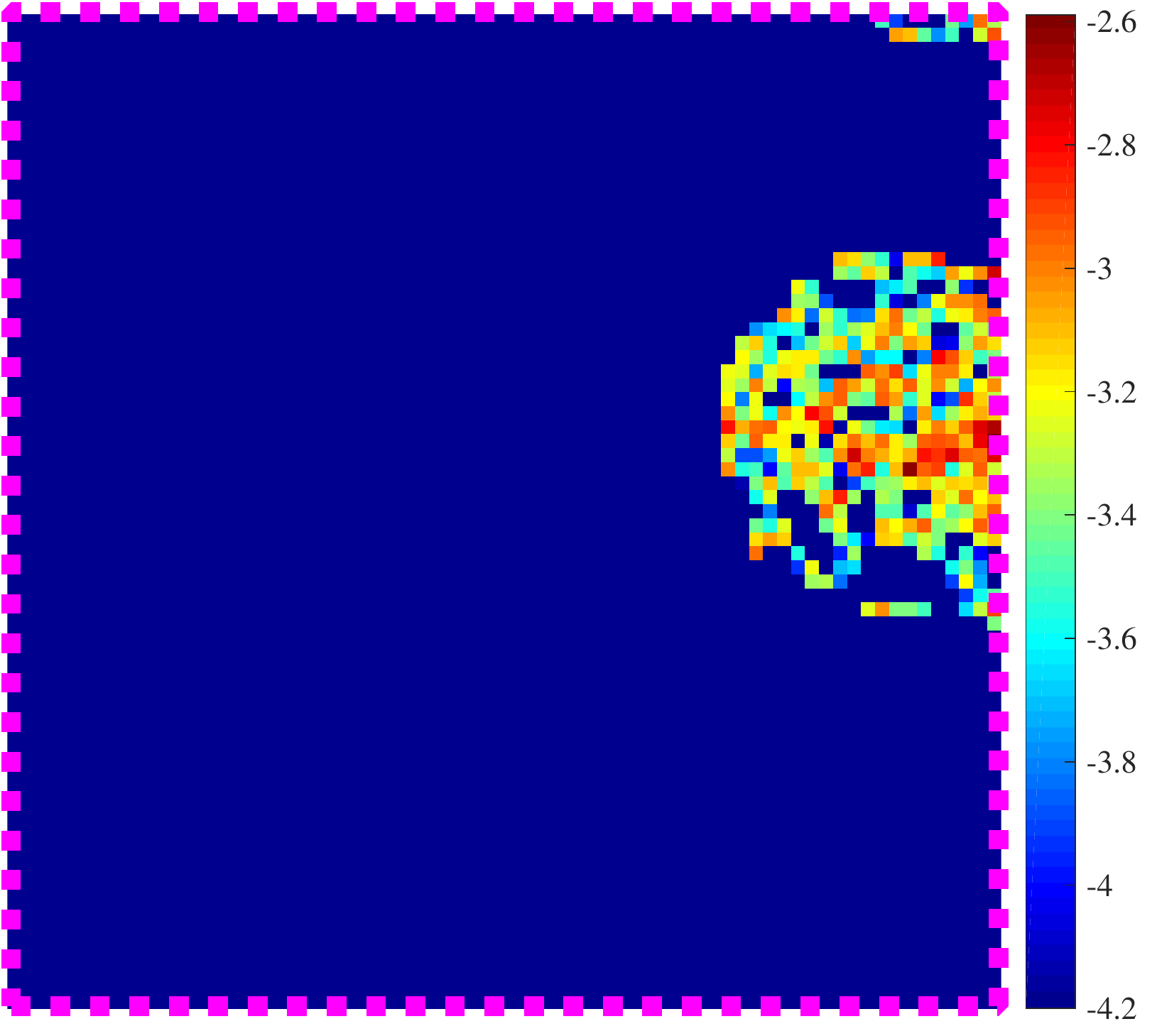}	
&	\includegraphics[height=3.8cm]{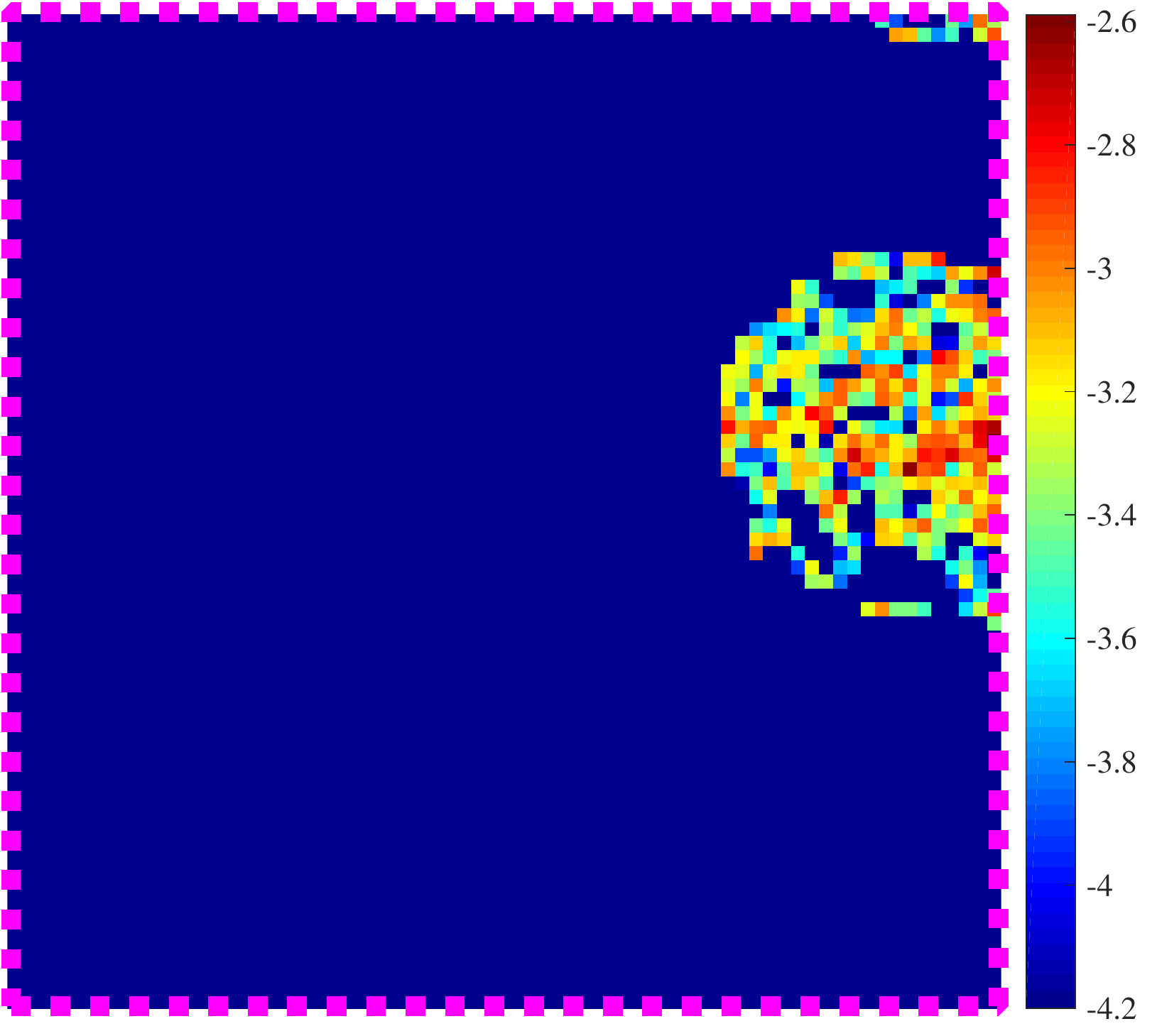}	
\end{tabular}
\end{center}


\caption{\label{Fig:RI:background} \small
Simulation results for the radio-astronomical imaging problem. Uncertainty quantification of the background of the MAP estimate. 
The first two rows correspond to the case when $M/N = 1$ and $\sigma^2=0.01$. In this context, $\rho_\alpha = 40.07\%$ of the intensity's structure is confirmed at $99\%$, and $H_0$ is rejected with significance $1\%$.
The last two rows correspond to the case when $M/N = 0.5$ and $\sigma^2=0.03$. In this context, $\rho_\alpha = 1.38\%$ and $H_0$ cannot be rejected.
In rows 1 and 3 are shown the images in log scale with a highlighted area corresponding to the zoomed images displayed in rows 2 and 4, respectively. From left to right: 
$x^\dagger$, 
$x^\ddagger_{\widetilde{\Cc}_\alpha}$, 
and $x^\ddagger_{\Sc}$. 
The log scale in the zoomed images is adapted to better emphasize the areas of interest.
}
\end{figure}

In this section we present our simulation results for the radio-astronomical imaging problem described in the previous section. 
For visual comparisons, we show the images obtained with the proposed uncertainty quantification approach, applied to the Structures~1 and 2, in Figs.~\ref{Fig:RI:struct1} and \ref{Fig:RI:struct2}, respectively. 

The top-row of Fig.~\ref{Fig:RI:struct1} shows, from left to right, the MAP estimate $x^\dagger$ obtained with $(M/N, \sigma^2) = (0.5, 0.03)$, and the two resulting images from Algorithm~\ref{algo:POCS_gen}: $x^\ddagger_{\widetilde{\Cc}_\alpha}$ and $x^\ddagger_{\Sc}$. In these images, Structure~1 is highlighted in red. 
The bottom-row of Fig.~\ref{Fig:RI:struct1} shows the images $x^\dagger$, $x^\ddagger_{\widetilde{\Cc}_\alpha}$ and $x^\ddagger_{\Sc}$, zoomed in the area of Structure~1. 
For this choice of $(M/N, \sigma^2)$, we have $\rho_\alpha = 0.07\% \approx 0\%$, and we conclude that $\widetilde{\Cc}_\alpha \cap \Sc \neq \emp$. Therefore, $H_0$ cannot be rejected (recall that failing to reject $H_0$ indicates that the structure considered is potentially not real, e.g. a reconstruction artefact).

Similarly, Structure~2 is highlighted in red in the top-row of Fig.~\ref{Fig:RI:struct2}, representing, from left to right, the MAP estimate $x^\dagger$ obtained with $(M/N, \sigma^2) = (1, 0.01)$, and the two corresponding images generated by Algorithm~\ref{algo:POCS_gen}: $x^\ddagger_{\widetilde{\Cc}_\alpha}$ and $x^\ddagger_{\Sc}$. 
The bottom-row of Fig.~\ref{Fig:RI:struct2} shows the images $x^\dagger$, $x^\ddagger_{\widetilde{\Cc}_\alpha}$ and $x^\ddagger_{\Sc}$, zoomed in the area of Structure~2. For this example, the structure's confirmed intensity percentage is equal to $\rho_\alpha = 96.58\%$. Consequently, we conclude that $\widetilde{\Cc}_\alpha \cap \Sc = \emp$, and $H_0$ is rejected with significance $\alpha = 1\%$ (recall that rejecting $H_0$ provides evidence to support that the structure considered is real, not an artefact). 

A complete description of the uncertainty quantification of Structure~3, in the cases when $(M/N, \sigma^2) = (0.5,0.01)$ and  $(M/N, \sigma^2) = (1, 0.01)$, is provided in Section~\ref{ssec:Illust}.

Results related to background removal are presented in Fig.~\ref{Fig:RI:background}. The top-row of this figure shows, from left to right, the MAP estimate $x^\dagger$ obtained with $(M/N, \sigma^2) = (1, 0.01)$, and the two images obtained using the proposed approach: $x^\ddagger_{\widetilde{\Cc}_\alpha}$ and $x^\ddagger_{\Sc}$. 
The second row of Fig.~\ref{Fig:RI:background} shows the images $x^\dagger$, $x^\ddagger_{\widetilde{\Cc}_\alpha}$ and $x^\ddagger_{\Sc}$, zoomed in the pink area. In this case, the structure's confirmed intensity percentage is equal to $\rho_\alpha = 40.07\%$. We can deduce then that $\widetilde{\Cc}_\alpha \cap \Sc = \emp$, and we conclude that $H_0$ is rejected with significance $\alpha = 1\%$. 
A second example is provided in Fig.~\ref{Fig:RI:background}, considering a smaller ratio $M/N$ and a higher noise level $\sigma^2$. We give in the third row of Fig.~\ref{Fig:RI:background}, from left to right, the MAP estimate $x^\dagger$ obtained with $(M/N, \sigma^2) = (1, 0.01)$, $x^\ddagger_{\widetilde{\Cc}_\alpha}$ and $x^\ddagger_{\Sc}$. The corresponding images, zoomed in the pink area, are provided in the fourth row of Fig.~\ref{Fig:RI:background}. For this second case, we have $\rho_\alpha = 1.38\% \approx 0\%$. Consequently, we conclude that $\widetilde{\Cc}_\alpha \cap \Sc \neq \emp$, and that $H_0$ cannot be rejected.

\begin{table}
\begin{center}
\begin{tabular}{c c}
{\renewcommand{\arraystretch}{1.5}
\begin{tabular}{|c|r|r|r|r|}
\multicolumn{5}{c}{}	\\
\cline{3-5}
\multicolumn{2}{c}{}
&	\multicolumn{3}{|c|}{$\sigma^2$}	\\
\cline{3-5}
\multicolumn{2}{c|}{}
&	0.01	&	0.02	&	0.03	\\
\hline
\multirow{3}{*}{$\dfrac{M}{N}$}
&	0.5
	&	\phantom{0}0.79	&	\phantom{0}0.24	&	\phantom{0}0.07	\\
\cline{2-5}
&	0.75
	&	1.26	&	0.41	&	0.34	\\
\cline{2-5}
&	1
	&	2.24	&	0.83	&	0.59	\\
\hline
\multicolumn{5}{c}{Structure~1}	
\end{tabular}}
&\hspace*{0.2cm}
{\renewcommand{\arraystretch}{1.5}
\begin{tabular}{|c|r|r|r|r|}
\multicolumn{5}{c}{}	\\
\cline{3-5}
\multicolumn{2}{c}{}
&	\multicolumn{3}{|c|}{$\sigma^2$}	\\
\cline{3-5}
\multicolumn{2}{c|}{}
&	0.01	&	0.02	&	0.03	\\
\hline
\multirow{3}{*}{$\dfrac{M}{N}$}
&	0.5
	&	\phantom{0}2.52	&	\phantom{0}0.57	&	\phantom{0}0.31	\\
\cline{2-5}
&	0.75
	&	11.1	&	1.33	&	0.64	\\
\cline{2-5}
&	1
	&	18.76	&	2.54	&	0.97	\\
\hline
\multicolumn{5}{c}{Structure~3}	
\end{tabular}}
\\
{\renewcommand{\arraystretch}{1.5}
\begin{tabular}{|c|r|r|r|r|}
\multicolumn{5}{c}{}	\\
\cline{3-5}
\multicolumn{2}{c}{}
&	\multicolumn{3}{|c|}{$\sigma^2$}	\\
\cline{3-5}
\multicolumn{2}{c|}{}
&	0.01	&	0.02	&	0.03	\\
\hline
\multirow{3}{*}{$\dfrac{M}{N}$}
&	0.5
	&	95.46	&	94.35	&	93.51	\\
\cline{2-5}
&	0.75
	&	96.2\phantom{0}	&	95.23	&	94.52	\\
\cline{2-5}
&	1
	&	96.58	&	95.72	&	95.15	\\
\hline
\multicolumn{5}{c}{Structure~2}	
\end{tabular}}
&\hspace*{0.2cm}
{\renewcommand{\arraystretch}{1.5}
\begin{tabular}{|c|r|r|r|r|}
\multicolumn{5}{c}{}	\\
\cline{3-5}
\multicolumn{2}{c}{}
&	\multicolumn{3}{|c|}{$\sigma^2$}	\\
\cline{3-5}
\multicolumn{2}{c|}{}
&	0.01	&	0.02	&	0.03	\\
\hline
\multirow{3}{*}{$\dfrac{M}{N}$}
&	0.5
	&	16.42	&	\phantom{0}2.93	&	\phantom{0}1.38	\\
\cline{2-5}
&	0.75
	&	33.85	&	17.32	&	6.84	\\
\cline{2-5}
&	1
	&	40.07	&	25.09	&	14.11	\\
\hline
\multicolumn{5}{c}{Background}	
\end{tabular}}
\end{tabular}
\end{center}
\caption{\label{Tab:RI:results} \small
Values of $\rho_\alpha$ in percentage ($\%$) for the four structures of interest in the radio-astronomical imaging problem.}
\end{table}

For all the three structures and the background, uncertainty quantification has been performed as well for other values of $M/N$ and $\sigma^2$. The values of $\rho_\alpha$ obtained for the considered cases are reported in Table~\ref{Tab:RI:results}. 
For all the experiments, it can be observed that $\rho_\alpha$ increases when $M/N$ increases or $\sigma^2$ decreases. In other words, larger is the number of measurements and higher is $\rho_\alpha$. On the contrary, higher is the noise level and lower is $\rho_\alpha$. This observation can be intuitively understood. Indeed, the MAP estimate is of lower quality when the observation data are not accurate (few noisy measurements). Consequently, in this case the uncertainty is higher. 

For Structure~1, the values of $\rho_\alpha$ range from $0.07\%$ for $(M/N,\sigma^2) = (0.5, 0.03)$ to $2.24\%$ for $(M/N, \sigma^2)=(1, 0.01)$. 
In all the considered cases, the value of $\rho_\alpha$ is low. As explained in Section~\ref{Ssec:simul:evalH0}, $\rho_\alpha$ cannot be equal to 0 due to the chosen stopping criteria. However, since in the worst case, the structure's confirmed intensity percentage is equal to $\rho_\alpha = 2.24\% \approx 0\%$, for all the considered cases we can conclude that $\widetilde{\Cc}_\alpha \cap \Sc \neq \emp$. As a consequence, we conclude that $H_0$ cannot be rejected and that Structure~1 is highly uncertain.

For Structure~2, the structure's confirmed intensity percentage is at least equal to $\rho_\alpha = 93.51\%$, corresponding to $(M/N, \sigma^2) = (0.5, 0.03)$. More precisely, the values of $\rho_\alpha$ are between $93.51\%$ for $(M/N,\sigma^2) = (0.5, 0.03)$ to $96.58\%$ for $(M/N, \sigma^2)=(1, 0.01)$. Thus, depending of the considered $(M/N,\sigma^2)$, we can conclude that between $93.51\%$ and $96.58\%$ of Structure~1 is confirmed at $99\%$. Consequently, for all the considered cases in this experiment, $\widetilde{\Cc}_\alpha \cap \Sc = \emp$ and we conclude that $H_0$ is rejected with significance $1\%$.

For Structure~3, the values of $\rho_\alpha$ range from $0.31\%$ to $18.76\%$, for $(M/N, \sigma^2) = (0.03, 0.5)$ and $(M/N, \sigma^2) = (0.01, 1)$ respectively. For this structure, different conclusions can be drawn. 
For $\sigma^2=0.03$ we have $\rho_\alpha \approx 0\%$ for all the considered under-sampling ratios, and we conclude that $\widetilde{\Cc}_\alpha \cap \Sc \neq \emp$ and that $H_0$ cannot be rejected. 
The other observations depend on the tolerance fixed by the user. For instance, if we consider that $\widetilde{\Cc}_\alpha \cap \Sc \neq \emp$ when $\rho_\alpha < 3\% $, the only cases when the structure is confirmed are $(M/N, \sigma^2) = (0.01, 1)$ and $(M/N, \sigma^2) = (0.01, 0.75)$, where the structure's confirmed intensity percentages are $\rho_\alpha = 18.76\%$ and $11.1\%$, respectively.

Concerning the uncertainty quantification of the background, the values of $\rho_\alpha$ range from $1.38\%$ to $40.07\%$, for $(M/N, \sigma^2) = (0.03, 0.5)$ and $(M/N, \sigma^2) = (0.01, 1)$ respectively. As for Structure~3, the conclusion for the different cases presented in Table~\ref{Tab:RI:results} depend on the tolerance fixed by the user. 
As previously, considering that $\widetilde{\Cc}_\alpha \cap \Sc \neq \emp$ when $\rho_\alpha < 3\% $, the only cases satisfying this condition are $(M/N, \sigma^2) = (0.03,0.5)$ and $(M/N, \sigma^2) = (0.02,0.5)$, with $\rho_\alpha = 1.38\% \approx 0\%$ and $\rho_\alpha = 2.93\% \approx 0\%$ respectively.

\begin{figure}[t]
\begin{center}
\begin{tabular}{@{}c@{}c@{}}
	\includegraphics[height=5.5cm]{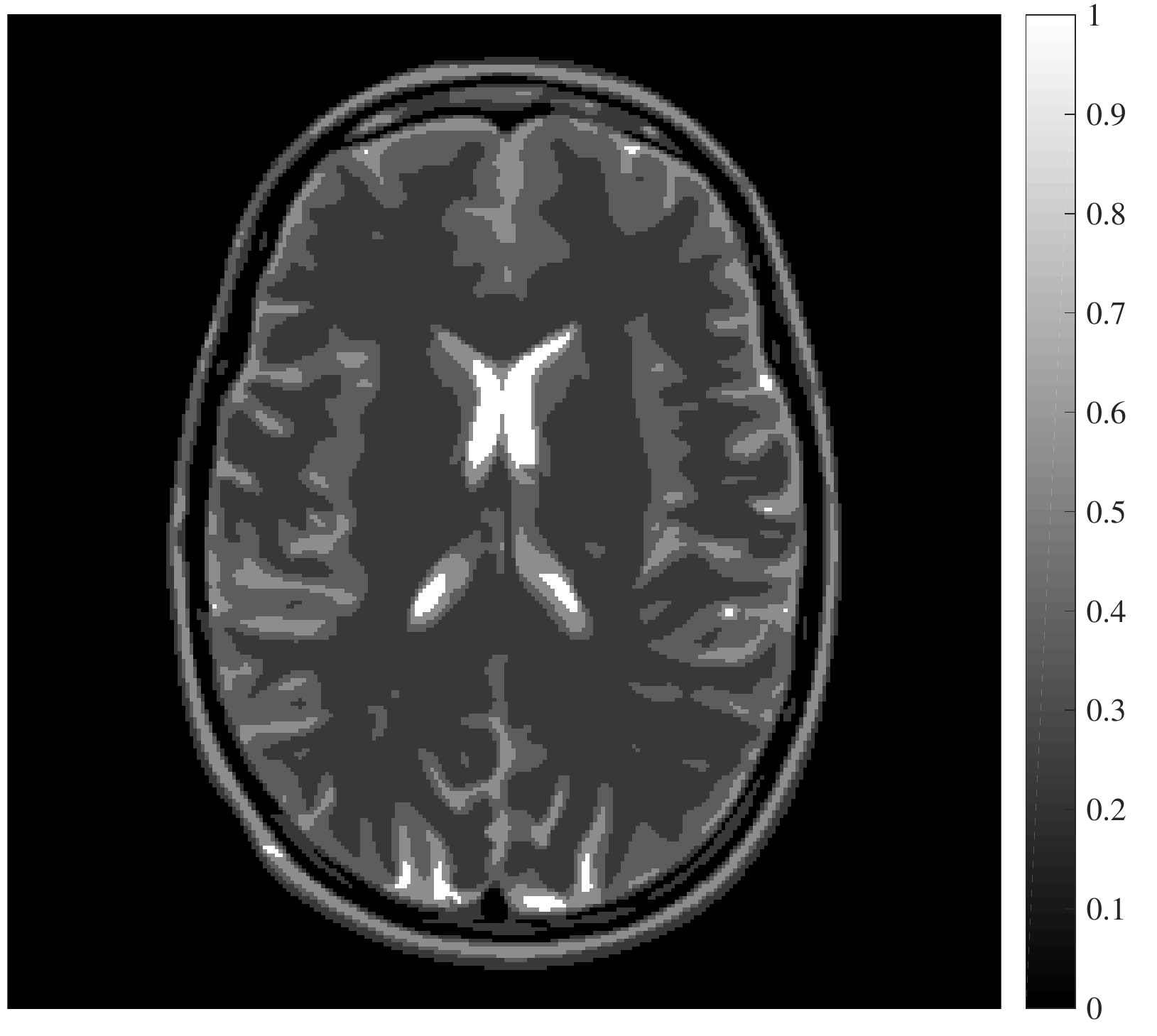}
&	\includegraphics[height=5.5cm]{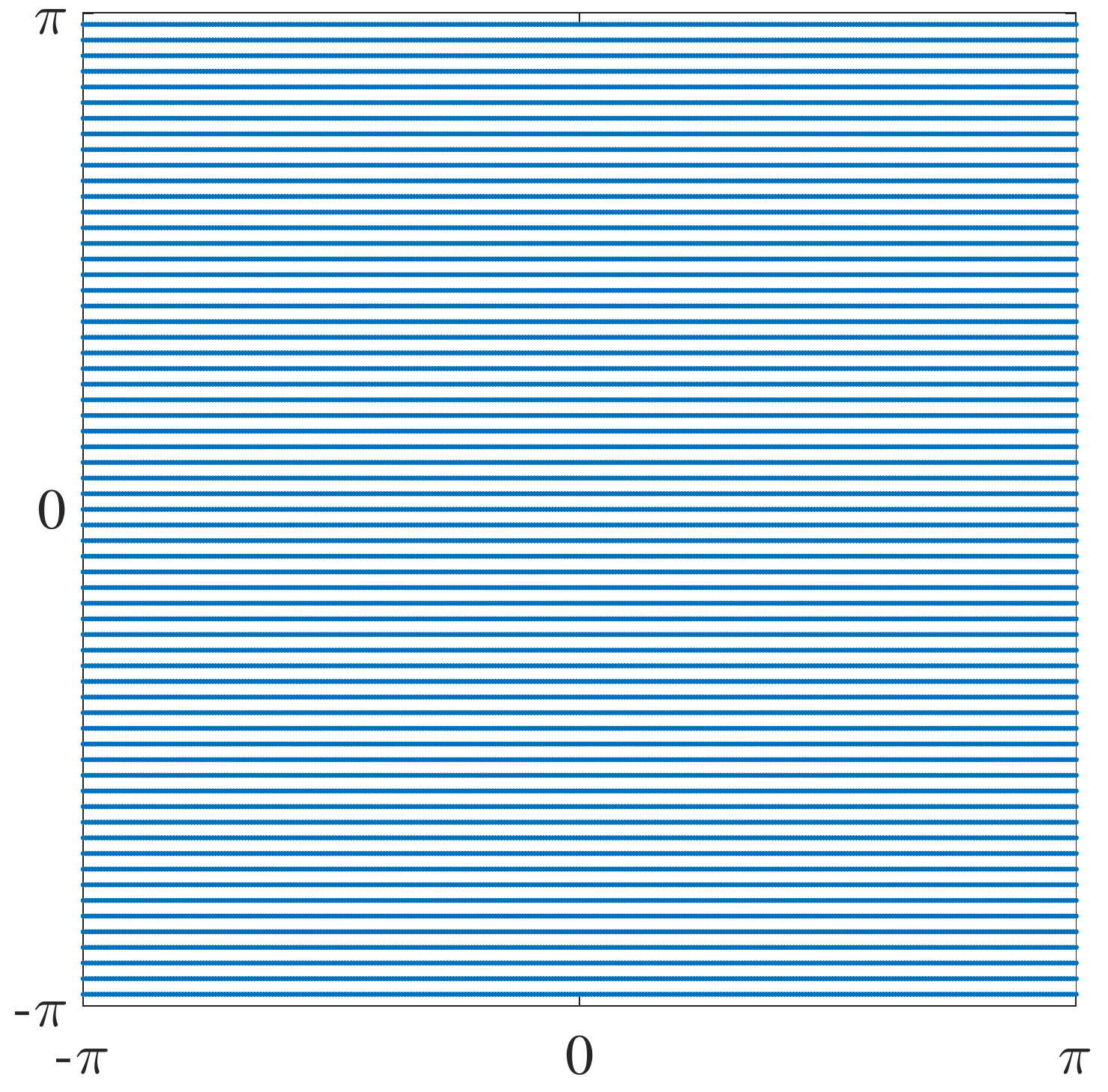}	\\
	(a)	&	(b)		
\end{tabular}
\end{center}
\caption{\label{Fig:MRI:description} \small
Magnetic resonance imaging problem. 
(a)~Original simulated image of a brain. 
(c)~Normalized continuous Fourier space ($k$-space) showing the frequencies selected to obtain $y$, using Cartesian trajectories.
}
\end{figure}

\subsection{Magnetic resonance imaging}
\label{ssec:MRmed}

\subsubsection{Problem description}
\label{Sssec:MRI}

Magnetic resonance imaging is a non-invasive non-ionising medical imaging technique that finds its superiority in the flexibility of its contrast mechanisms. It comes in various modalities ranging from high resolution structural imaging aiming at mapping detailed tissue structures, or high angular resolution diffusion imaging mapping the structural neuronal connectivity by probing molecular diffusion in each voxel of the brain, to dynamic imaging mapping for example the heart dynamics through time. Data acquisition is intrinsically long, sometimes prohibitively, as it relies on sequential measurement of Fourier samples of the image under scrutiny, which can again be of gigapixel dimension. Fast high-resolution imaging constitutes a deep challenge for medical research, which can be addressed by the combination of two acceleration strategies: firstly, the use of multiple acquisition coils, and secondly the acquisition of an incomplete Fourier coverage. This approach gives rise to a large-scale ill-posed inverse problems for the recovery of structural, diffusion of dynamic images under scrutiny. Once more, not only image estimation but also associated uncertainty quantification methodologies, key to the diagnosis and subsequent treatment of potential pathologies, must scale unprecedented dimension.

In this context, an unknown image $\overline{x} \in \R^N$ is observed simultaneously through $n_c$ receiver coils. An example of a simulated brain image, with $N=256 \times 256$, is shown in Figure~\ref{Fig:MRI:description}(a), generated from the magnetic resonance imaging toolbox available at \url{http://bigwww.epfl.ch/algorithms/mri-reconstruction/}. 
Each coil, indexed by $c \in \{1, \ldots , n_c\}$, acquires noisy incomplete Fourier measurements $y_c \in \C^{\widetilde{M}}$ of an image consisting of a multiplication of the unknown image under scrutiny and the spatial sensitivity profile of the coil. More formally, for each receiver coil $c \in \{1, \ldots , n_c\}$, the observation measurements are given by $y_c = \Phi_c x + w_c$, 
where $\Phi_c \in \C^{\widetilde{M} \times N}$ represents the the Fourier sampling operator and $w_c \in \C^{\widetilde{M}}$ is a realization of an additive complex i.i.d. Gaussian noise with zero mean and variance equal to $\sigma^2=0.01$, for both the real and imaginary parts of the noise. 
The global measurements $y \in \C^{M}$ corresponds then to the concatenation of all the coil observations $(y_c)_{1 \le c \le n_c}$, with $M = n_c \widetilde{M}$. 
In our simulations, we will consider measurements acquired from $n_c=4$ receiver coils. 
In magnetic resonance imaging, the Fourier domain (also called $k$-space) can be sampled following different trajectories. In our simulations we use two different undersampling strategies. Firstly, we use the same random sampling as for radio astronomy imaging, described in Section~\ref{Sssec:RI}, considering several sampling ratio values $\widetilde{M}/N \in \{0.1, 0.2\}$. 
Secondly, we use a more realistic random sampling generated using the magnetic resonance imaging toolbox available at \url{http://bigwww.epfl.ch/algorithms/mri-reconstruction/}, consisting of the continuous Fourier Cartesian trajectories displayed in Figure~\ref{Fig:MRI:description}(b). This Fourier sampling selects $\widetilde{M} = 21248$ frequencies, corresponding to under-sampling factors along frequency and phase encoding direction equal to $1/1.3$ and $4$, respectively.

In both the considered simulation settings, we focus on spatially localized structures  corresponding to the definition of $\Sc$ given by Definition~\ref{example:structure}. 
This set $\Sc$ is characterized by $\Lc$, built as in described in Section~\ref{Sssec:RI}, to model a smoothing operator using 2D Gaussian convolution kernels of sizes $3 \times 3$, $7 \times 7$ and $11 \times 11$, and we choose $\tau = \operatorname{std}\big( \Mc(x^\dagger) - \Lc(\Mc(x^\dagger)) \big)$. 
In addition, to define set $\Sc_3$, we choose $b = \big\| \Lc \big( \Mc \big( x^\dagger \big)  \big) \big\|_2$ and $\theta = \vartheta \big\| \Lc \big( \Mc ( x^\dagger )  \big) \big\|_2$, with $\vartheta = 10^{-2}$.

\subsubsection{Uncertainty quantification in magnetic resonance: random sampling}
\label{ssec:UQ_MR:gauss}

\begin{figure}[t]
\begin{center}
\begin{tabular}{@{}c@{}c@{}c@{}}
	\includegraphics[height=3.8cm]{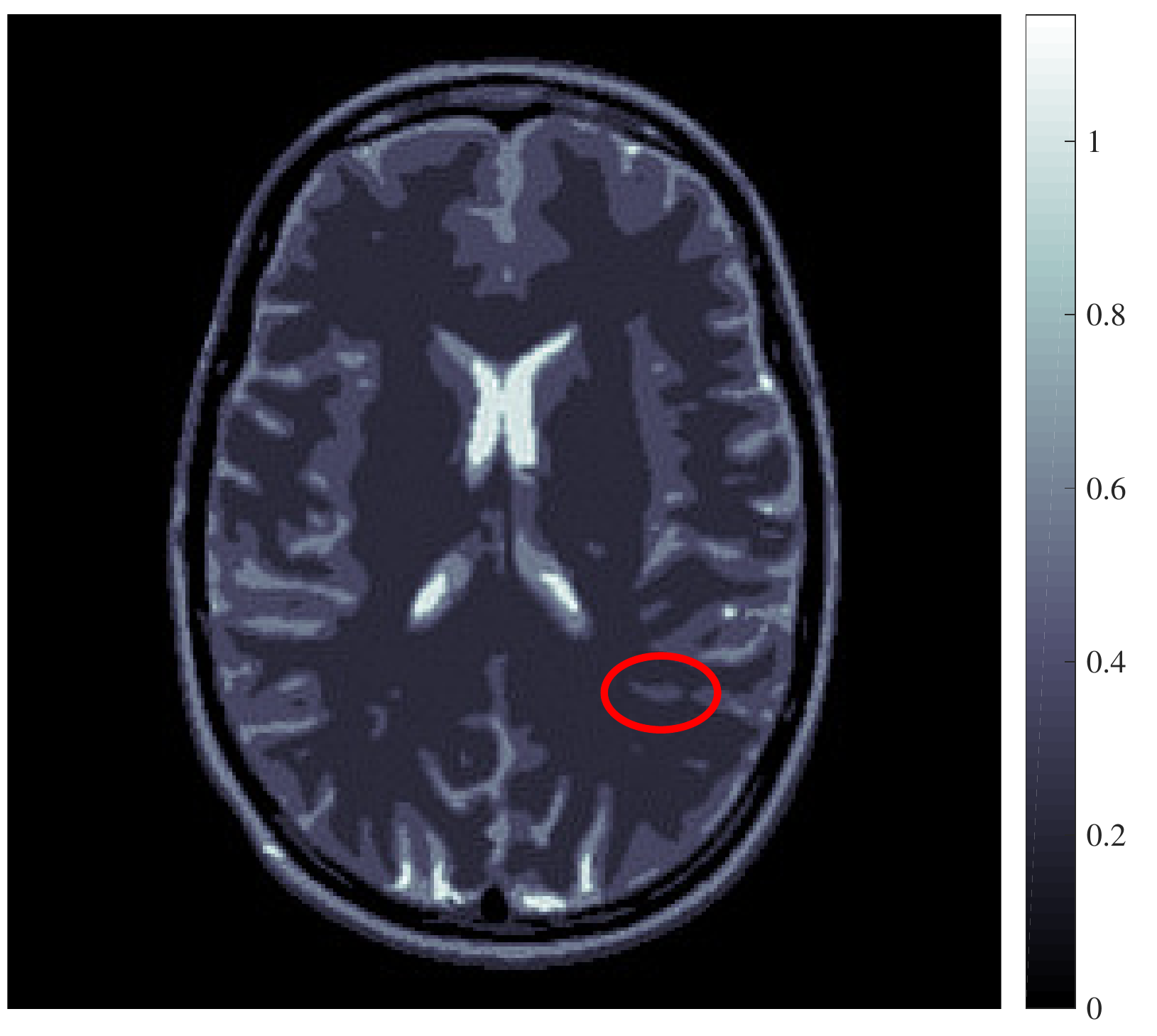}
&	\includegraphics[height=3.8cm]{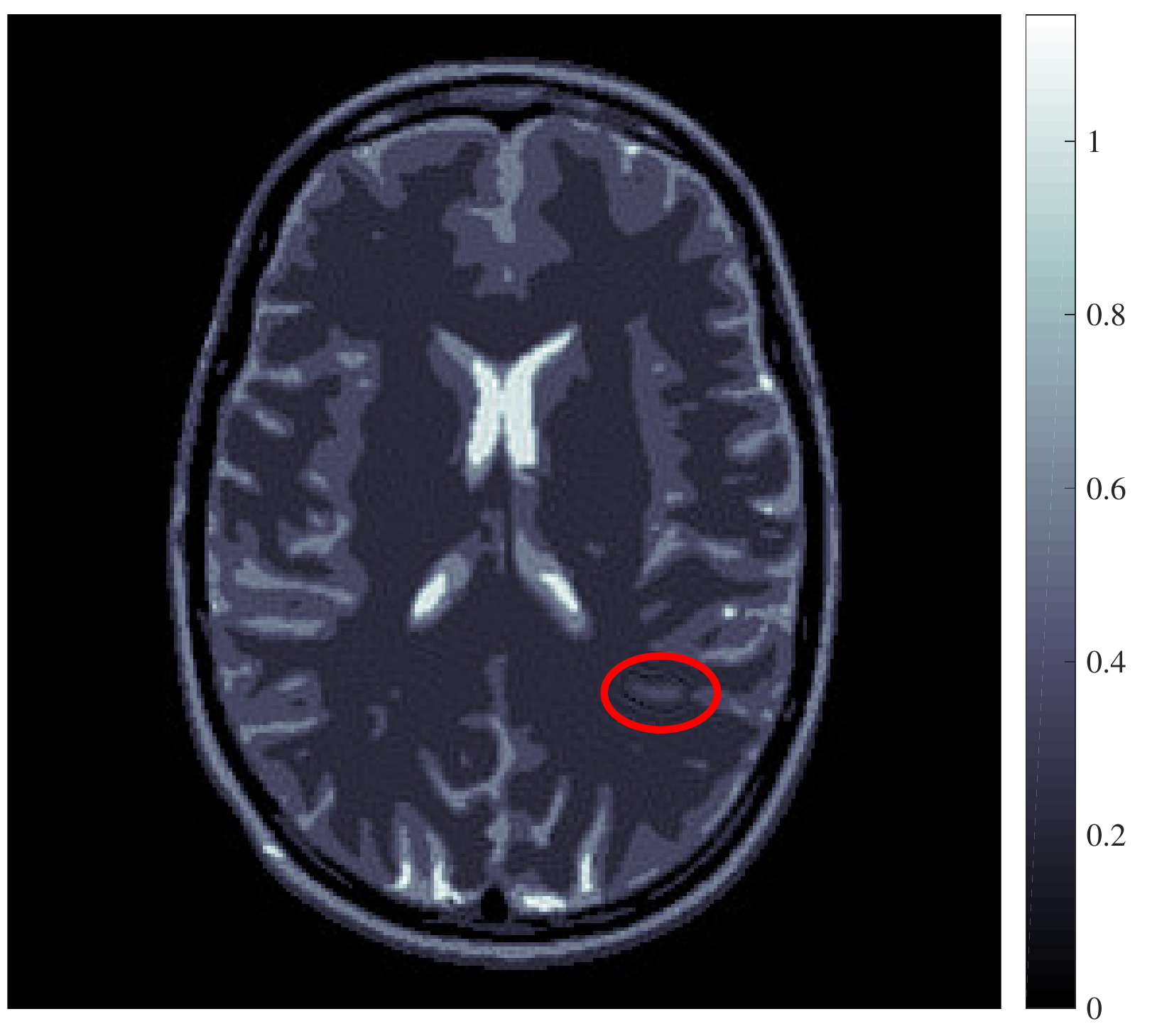}
&	\includegraphics[height=3.8cm]{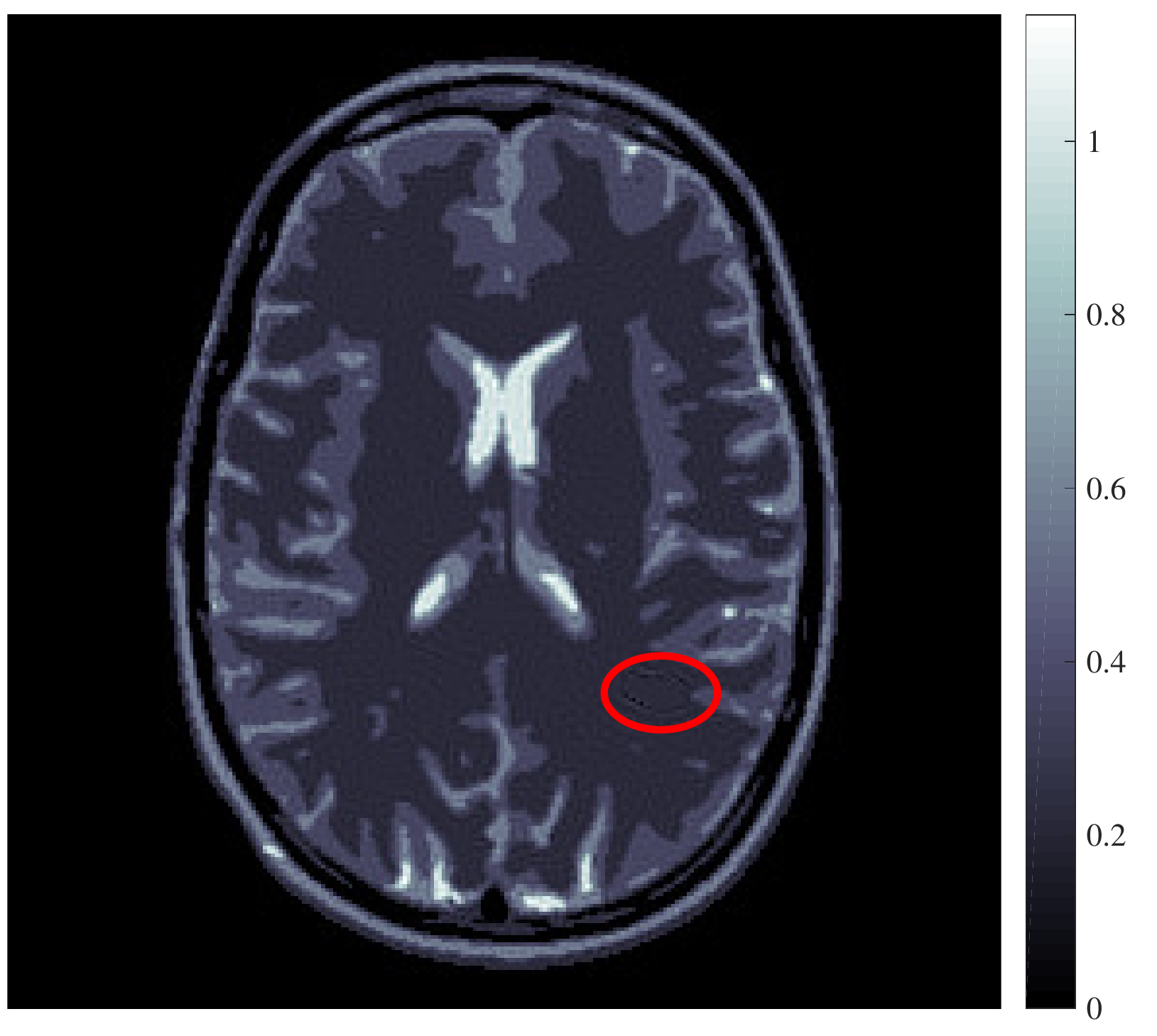}	\\
	\includegraphics[height=3.8cm]{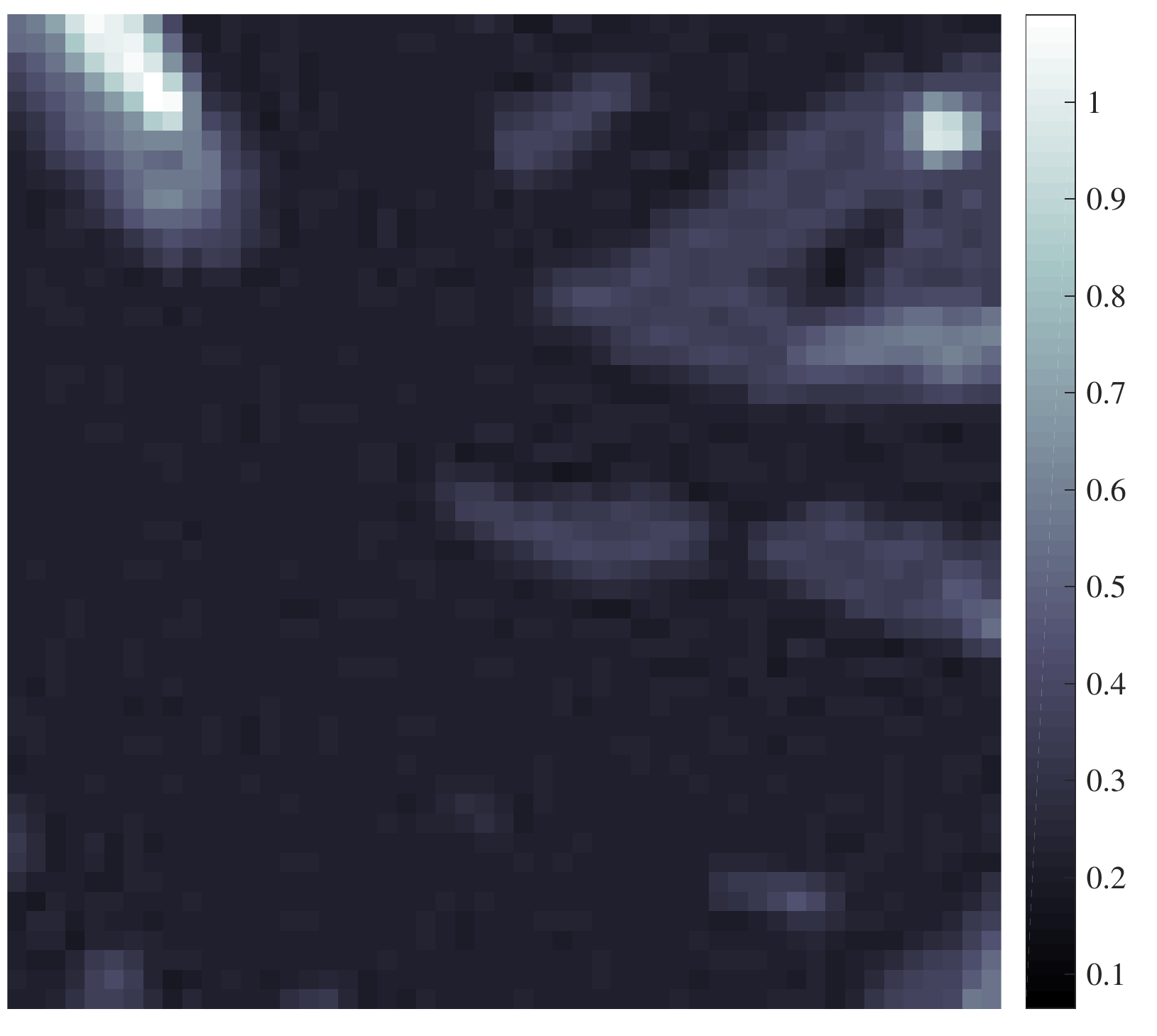}
&	\includegraphics[height=3.8cm]{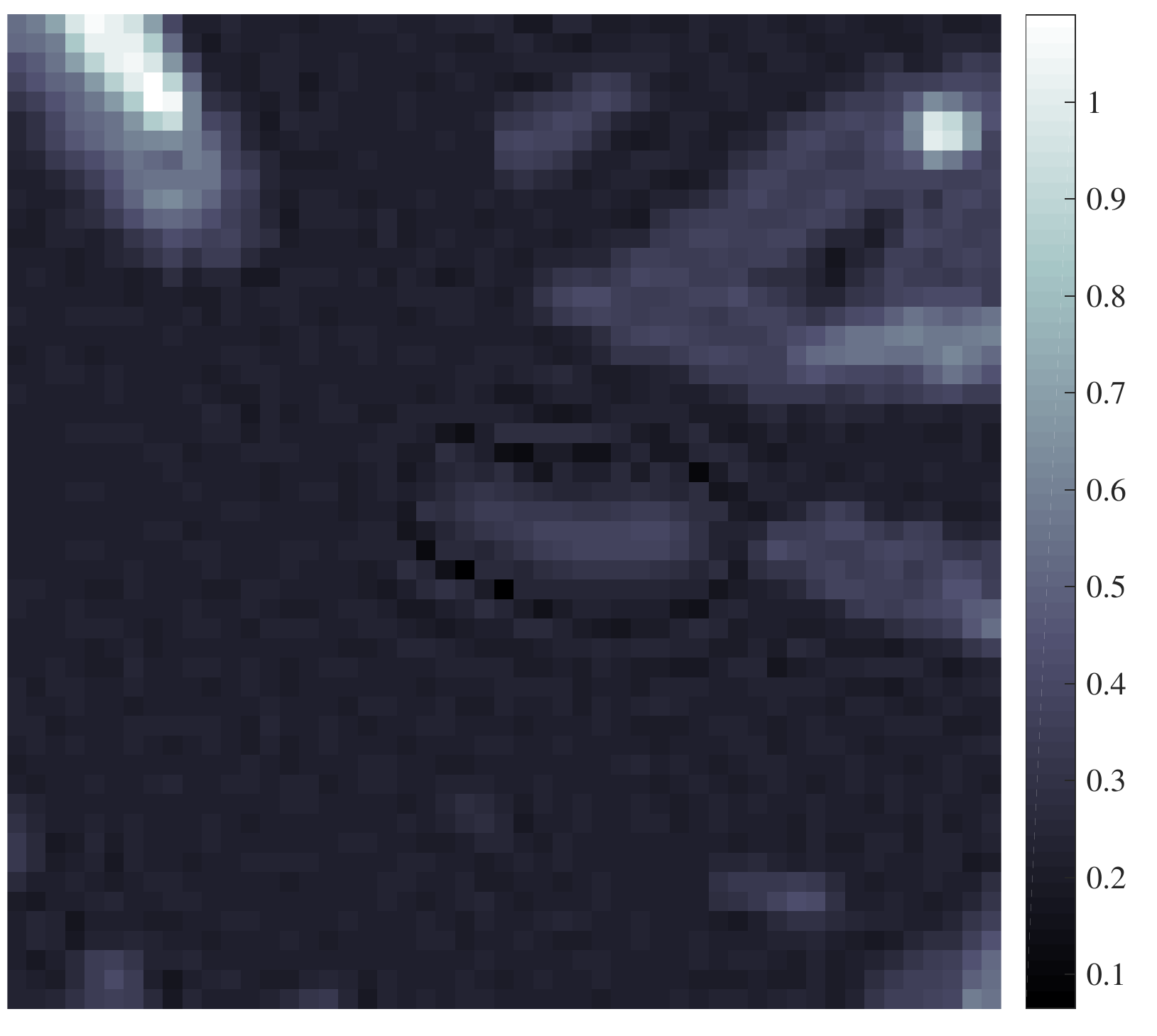}	
&	\includegraphics[height=3.8cm]{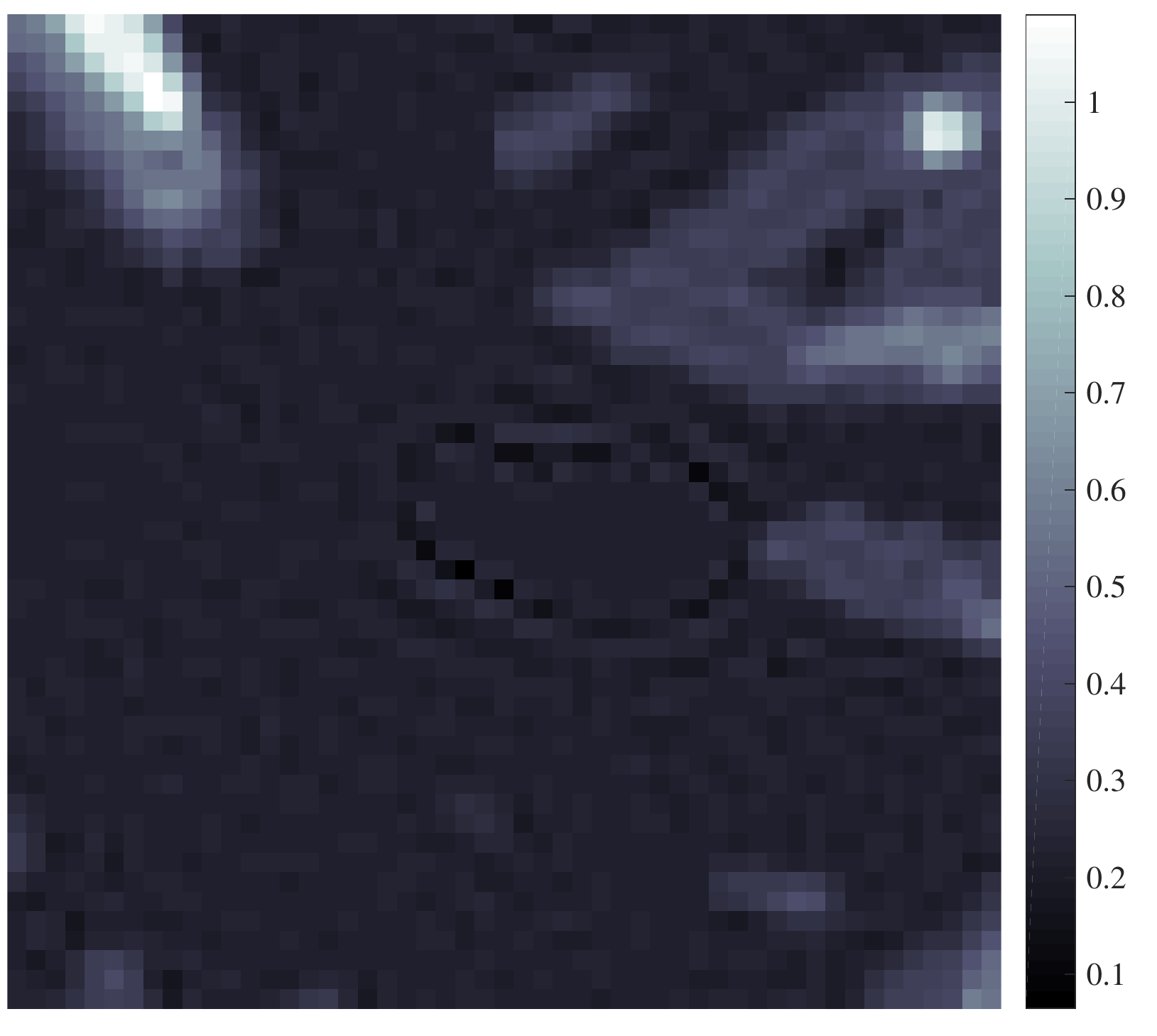}	
\end{tabular}
\end{center}

\vspace*{-0.3cm}

\caption{\label{Fig:MRI:struct1} \small
Simulation results for the magnetic resonance imaging problem with random sampling. Uncertainty quantification of Structure~1, in the case when $M/N = 0.2$ and $\sigma^2=0.01$. In this context, $\rho_\alpha = 96.86\%$ of the intensity's structure is confirmed at $99\%$, and $H_0$ is rejected with significance $1\%$.
Top row: images in linear scale with Structure~1 highlighted in red with, from left to right: 
$x^\dagger$, 
$x^\ddagger_{\widetilde{\Cc}_\alpha}$, 
and $x^\ddagger_{\Sc}$. 
Bottom row: zoomed images in linear scale on the area of Structure~1, corresponding to the images displayed in first row. The scale in the zoomed images is adapted to better emphasize Structure~1.
}
\end{figure}

\begin{figure}[t]
\begin{center}
\begin{tabular}{@{}c@{}c@{}c@{}}
	\includegraphics[height=3.8cm]{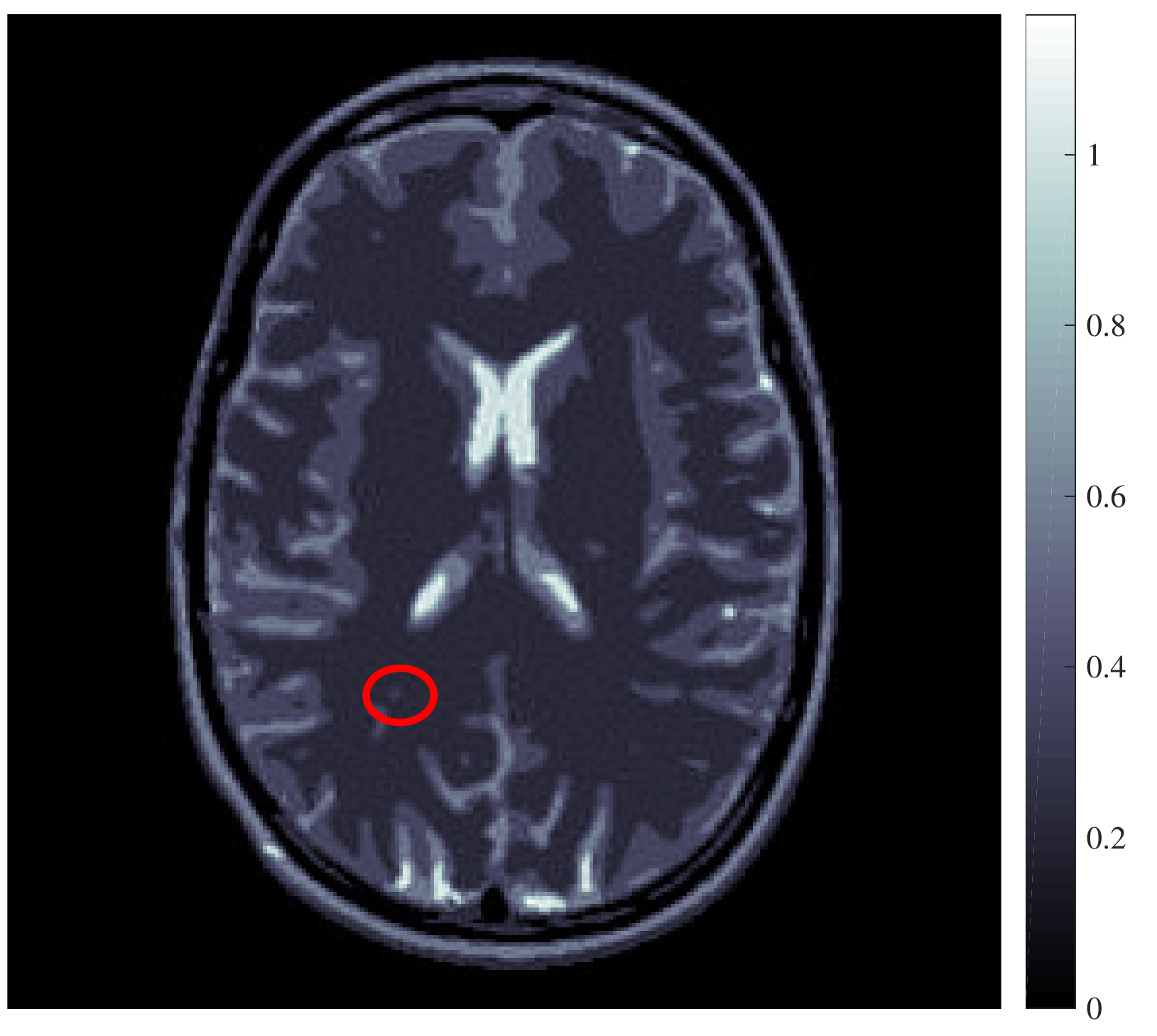}
&	\includegraphics[height=3.8cm]{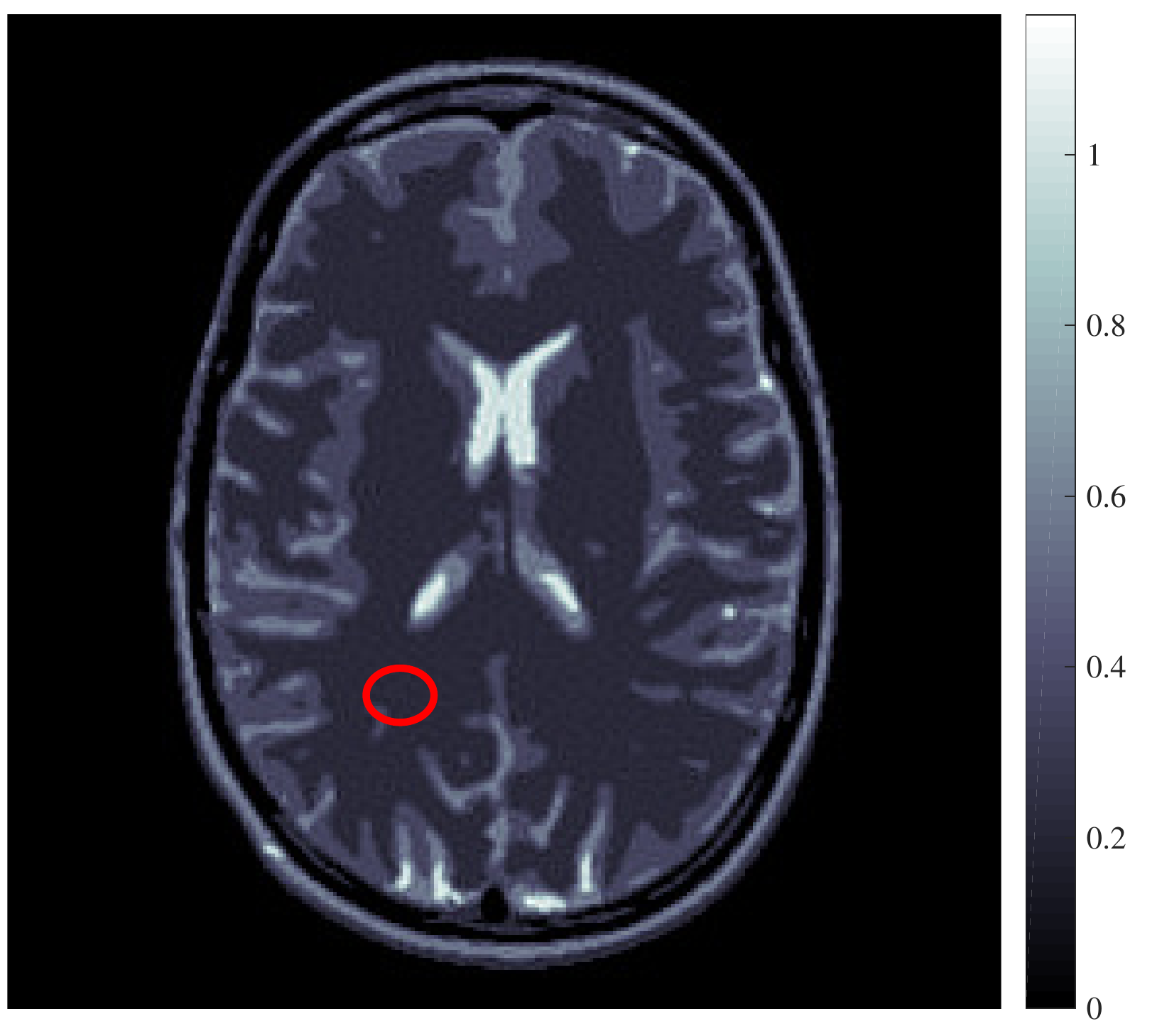}
&	\includegraphics[height=3.8cm]{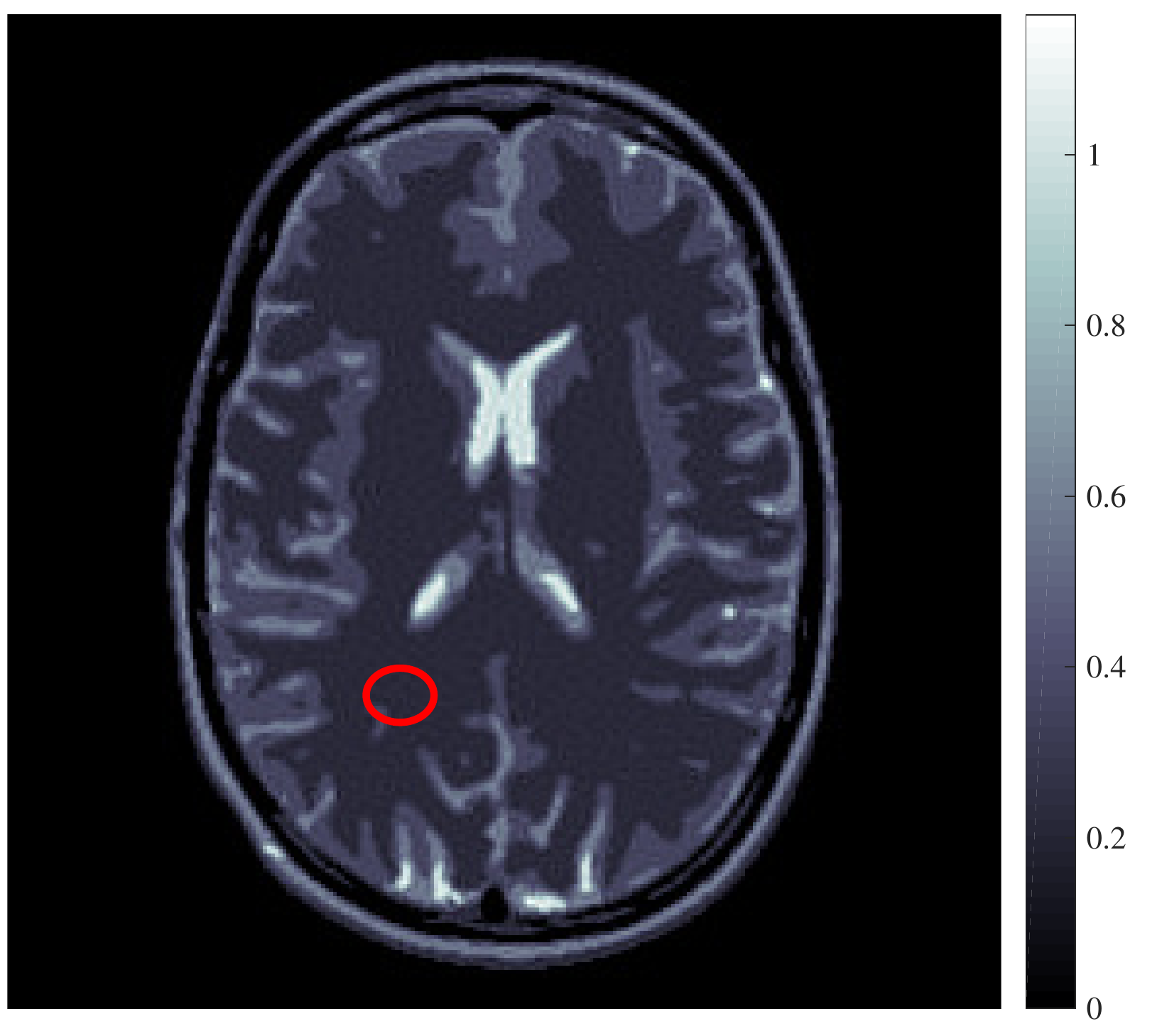}	\\
	\includegraphics[height=3.8cm]{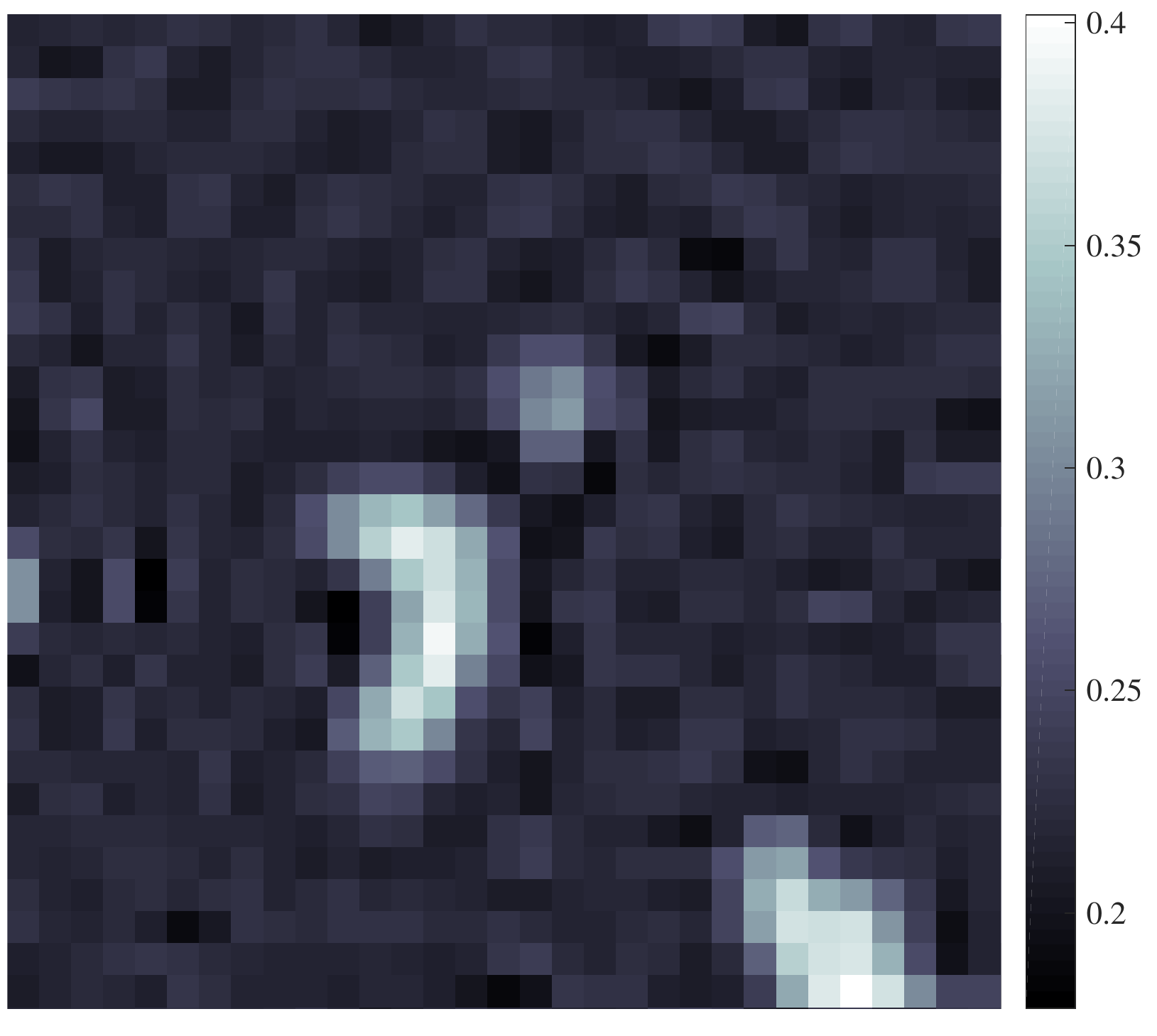}
&	\includegraphics[height=3.8cm]{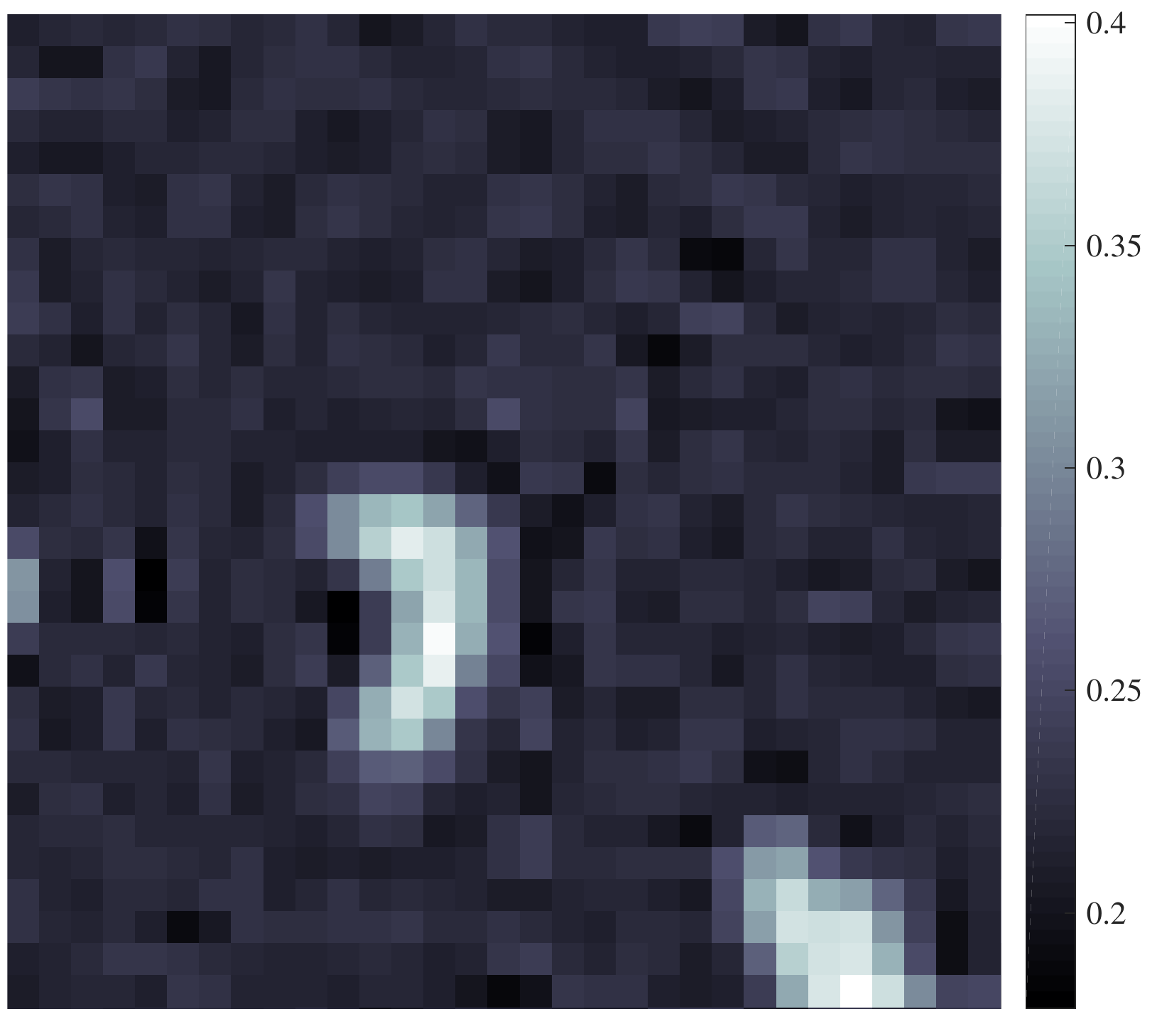}	
&	\includegraphics[height=3.8cm]{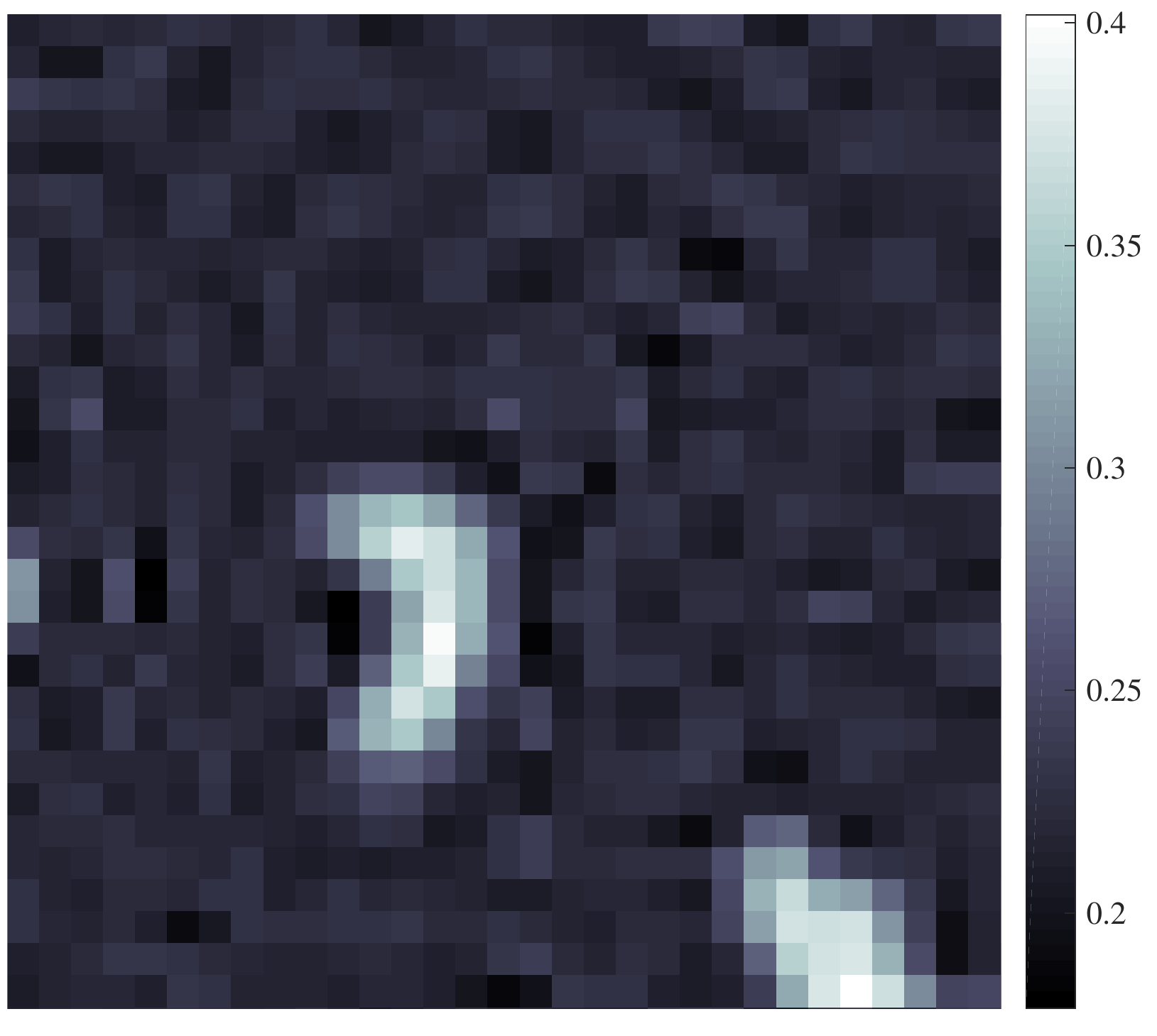}	
\end{tabular}
\end{center}

\vspace*{-0.3cm}

\caption{\label{Fig:MRI:struct2} \small
Simulation results for the magnetic resonance imaging problem with random sampling. Uncertainty quantification of Structure~2, in the case when $M/N = 0.1$ and $\sigma^2=0.03$. In this context, $\rho_\alpha = 0.62\%$ and $H_0$ cannot be rejected.
Top row: images in linear scale with Structure~2 highlighted in red with, from left to right: 
$x^\dagger$, 
$x^\ddagger_{\widetilde{\Cc}_\alpha}$, 
and $x^\ddagger_{\Sc}$. 
Bottom row: zoomed images in linear scale on the area of Structure~2, corresponding to the images displayed in first row. The scale in the zoomed images is adapted to better emphasize Structure~2.
}
\end{figure}

In this section, we present the results obtained for the simulations on the magnetic resonance imaging problem, considering a random sampling, for $\widetilde{M}/N \in \{0.1, 0.2\}$ and $\sigma^2 \in \{0.01, 0.02, 0.03\}$. Note that, since we consider four receiver coils, in total we have $M = 4 \widetilde{M}$ measurements. 
We aim to quantify the uncertainty of the two structures, namely Structure~1 and Structure~2, highlighted in red in Figures~\ref{Fig:MRI:struct1} and \ref{Fig:MRI:struct2}, respectively.

Fig.~\ref{Fig:MRI:struct1} presents the experimental results obtained considering \linebreak $(\widetilde{M}/N, \sigma^2) = (0.2, 0.01)$. In the top-row of Fig.~\ref{Fig:MRI:struct1}, we show, from left to right, the MAP estimate $x^\dagger$ and the results from Algorithm~\ref{algo:POCS_gen}, $x^\ddagger_{\widetilde{\Cc}_\alpha}$ and $x^\ddagger_{\Sc}$. In these images, Structure~1 is highlighted in red. The corresponding images, zoomed in the area of Structure~1, are displayed in the bottom-row of Fig.~\ref{Fig:MRI:struct1}.
For this example, the structure's confirmed intensity percentage is equal to $\rho_\alpha =96.86\%$. Therefore, we conclude that $\widetilde{\Cc}_\alpha \cap \Sc = \emp$, and consequently that $H_0$ is rejected with significance $\alpha=1\%$. 

In Fig.~\ref{Fig:MRI:struct2} are presented the simulation results obtained by considering \linebreak $(\widetilde{M}/N, \sigma^2) = (0.1, 0.03)$. Similarly to Fig.~\ref{Fig:MRI:struct1}, the first row shows the images $x^\dagger$, $x^\ddagger_{\widetilde{\Cc}_\alpha}$ and $x^\ddagger_{\Sc}$, and the second row shows the associated zoomed images for the area of Structure~2. For this experiment, we have $\rho_\alpha = 0.62\% \approx 0 \%$ and we can conclude that $\widetilde{\Cc}_\alpha \cap \Sc \neq \emp$. Consequently, $H_0$ cannot be rejected, and Structure~2 is highly uncertain.

The values of $\rho_\alpha$, in percentage, for the two structures of interests, for the different experimental settings, are provided in Table~\ref{Tab:MRI:results_Gauss}. 
According to Table~\ref{Tab:MRI:results_Gauss}, between $59.09\%$ and $96.86\%$ of Structure~1 is confirmed at $99\%$, depending on the values of $\widetilde{M}/N$ and $\sigma^2$. Therefore, for Structure~1, for all the considered values of $(\widetilde{M}/N, \sigma^2)$, $\widetilde{\Cc}_\alpha \cap \Sc = \emp$ and $H_0$ is rejected. 
Concerning Structure~2, $\rho_\alpha$ ranges between $0.62\%$ and $11.31\%$, for $\big(\widetilde{M}/N, \sigma^2\big) = (0.1, 0.03)$ and $\big(\widetilde{M}/N, \sigma^2\big) = (0.2, 0.01)$, respectively. In particular, higher is the ratio $\widetilde{M}/N$ and higher is $\rho_\alpha$. At the opposite, smaller is $\sigma^2$ and higher is $\rho_\alpha$. For this structure, the conclusion is different depending on the choice of $\big(\widetilde{M}/N, \sigma^2\big)$. For instance, let consider that$\widetilde{\Cc}_\alpha \cap \Sc \neq \emp$ when $\rho_\alpha < 3\%$. In this context, for $\big(\widetilde{M}/N, \sigma^2\big) = (0.2, 0.01)$ (resp. $\big(\widetilde{M}/N, \sigma^2\big) = (0.2, 0.02)$), the null hypothesis $H_0$ is rejected, and $11.31\%$ (resp. $3.18\%$) of Structure~2 is confirmed at $99\%$. For all the other choices of $\big(\widetilde{M}/N, \sigma^2\big) $, the null hypothesis $H_0$ cannot be rejected.

\begin{table}
\begin{center}
\begin{tabular}{c c}
{\renewcommand{\arraystretch}{1.5}
\begin{tabular}{|c|r|r|r|r|}
\multicolumn{5}{c}{}	\\
\cline{3-5}
\multicolumn{2}{c}{}
&	\multicolumn{3}{|c|}{$\sigma^2$}	\\
\cline{3-5}
\multicolumn{2}{c|}{}
&	0.01	&	0.02	&	0.03	\\
\hline
\multirow{2}{*}{$\dfrac{\widetilde{M}}{N}$}
&	0.1
	&	80.17	&	64.95	&	59.09	\\
\cline{2-5}
&	0.2
	&	96.86	&	74.77	&	70.80	\\
\hline
\multicolumn{5}{c}{Structure~1}	
\end{tabular}}
&\hspace*{0.2cm}
{\renewcommand{\arraystretch}{1.5}
\begin{tabular}{|c|r|r|r|r|}
\multicolumn{5}{c}{}	\\
\cline{3-5}
\multicolumn{2}{c}{}
&	\multicolumn{3}{|c|}{$\sigma^2$}	\\
\cline{3-5}
\multicolumn{2}{c|}{}
&	0.01	&	0.02	&	0.03	\\
\hline
\multirow{2}{*}{$\dfrac{\widetilde{M}}{N}$}
&	0.1
	&	\phantom{0}2.49	&	\phantom{0}1.01	&	\phantom{0}0.62	\\
\cline{2-5}
&	0.2
	&	11.31	&	\phantom{0}3.18	&	\phantom{0}2.09	\\
\hline
\multicolumn{5}{c}{Structure~2}	
\end{tabular}}
\end{tabular}
\end{center}
\caption{\label{Tab:MRI:results_Gauss} \small
Values of $\rho_\alpha$ in percentage ($\%$) for the two structures of interest in the magnetic resonance imaging problem with random sampling.}
\end{table}

\subsubsection{Uncertainty quantification in magnetic resonance: Cartesian trajectories}
\label{ssec:UQ_MR:cartesian}

In this section are presented the simulation results obtained for the magnetic resonance imaging problem, considering the Cartesian trajectories given in Figure~\ref{Fig:MRI:description}(b). Due to the particular under-sampling obtained from these trajectories, the MAP estimate presents artefacts non-existing in the original image $\overline{x}$ (see Figure~\ref{Fig:MRI:description}(a)). The MAP estimate is shown on the first column of Figure~\ref{Fig:MRI:cartesian}, where two of the artefact are highlighted in red. Zoomed images are also provided (first column, rows 2 and 4) on the areas of these artefacts. 
We define these two artefact as structures using Definition~\ref{example:structure}, and we investigate their uncertainty. The results are displayed in Figure~\ref{Fig:MRI:cartesian}. 

The two first rows correspond to the uncertainty quantification results for the first artefact, at the center of the brain. The first row gives, from left to right, the MAP estimate $x^\dagger$ and the two results from the alternating projections, $x^\ddagger_{\widetilde{\Cc}_\alpha}$ and $x^\ddagger_{\Sc}$. For this simulation, we obtain $\rho_\alpha = 0.02\% \approx 0\%$. This result can be visually verified by observing that $x^\ddagger_{\widetilde{\Cc}_\alpha} \approx x^\ddagger_{\Sc}$. Consequently, we conclude that $H_0$ cannot be rejected, and that this first structure is highly uncertain, which is consistent with it being an artefact. 

The same observations can be done for the second artefact, at the top of the brain, shown in the last two rows of Figure~\ref{Fig:MRI:cartesian}. In this case we have $\rho_\alpha = 0.01\% \approx 0\%$. In this case, $H_0$ cannot be rejected, and we conclude that the structure defined by this second artefact is not confirmed.

\begin{figure}[h!]
\begin{center}
\begin{tabular}{@{}c@{}c@{}c@{}c@{}}
	\includegraphics[height=3.8cm]{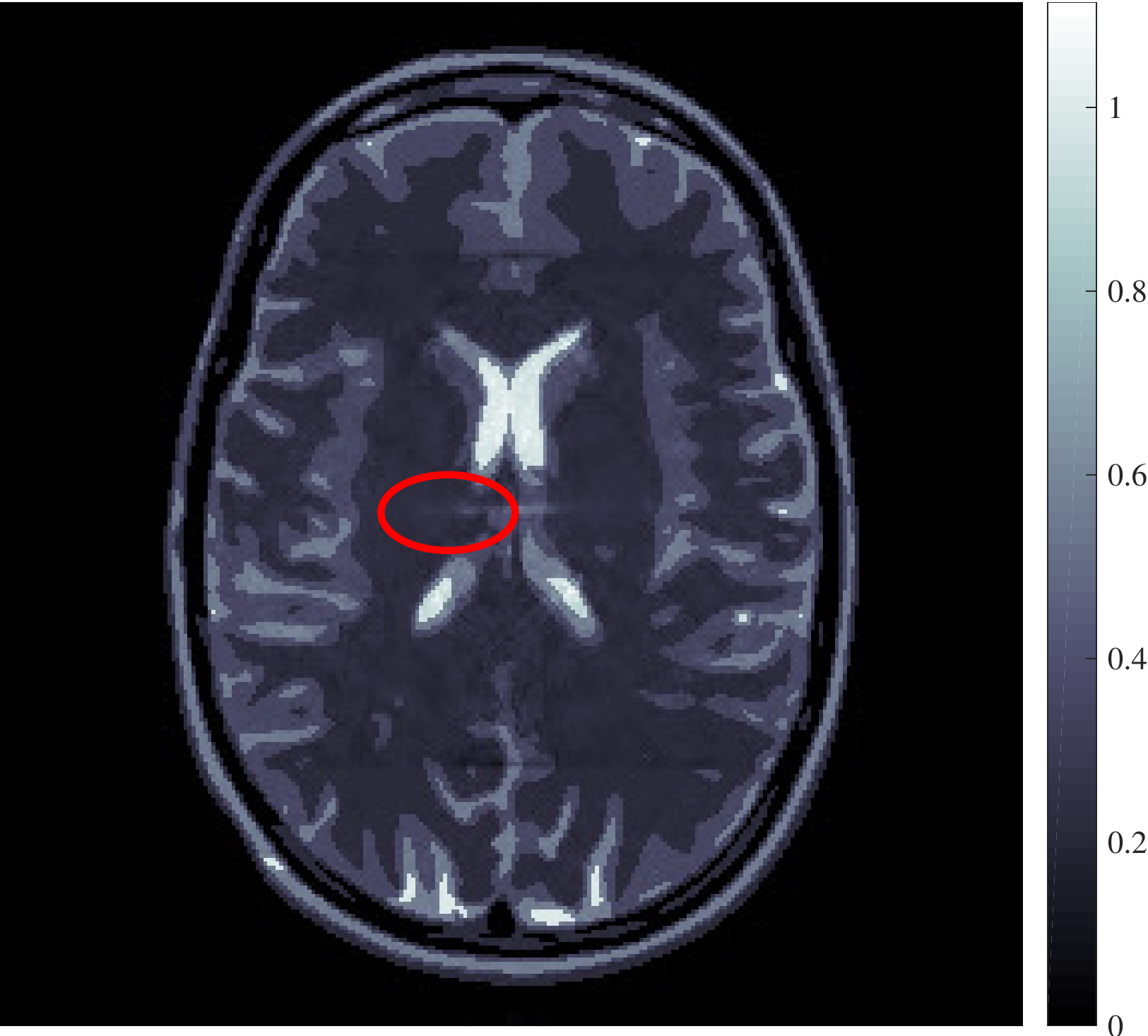}
&	\includegraphics[height=3.8cm]{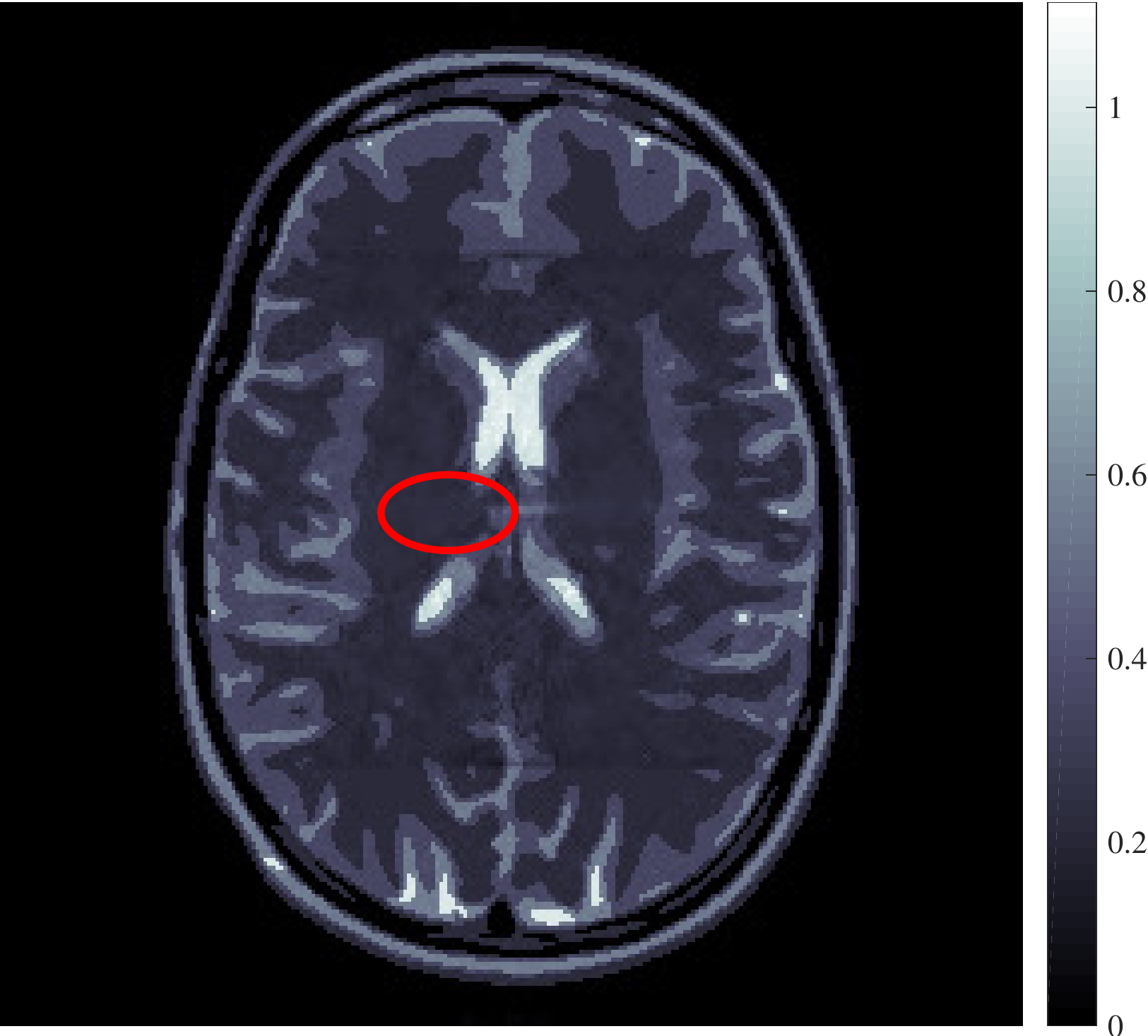}
&	\includegraphics[height=3.8cm]{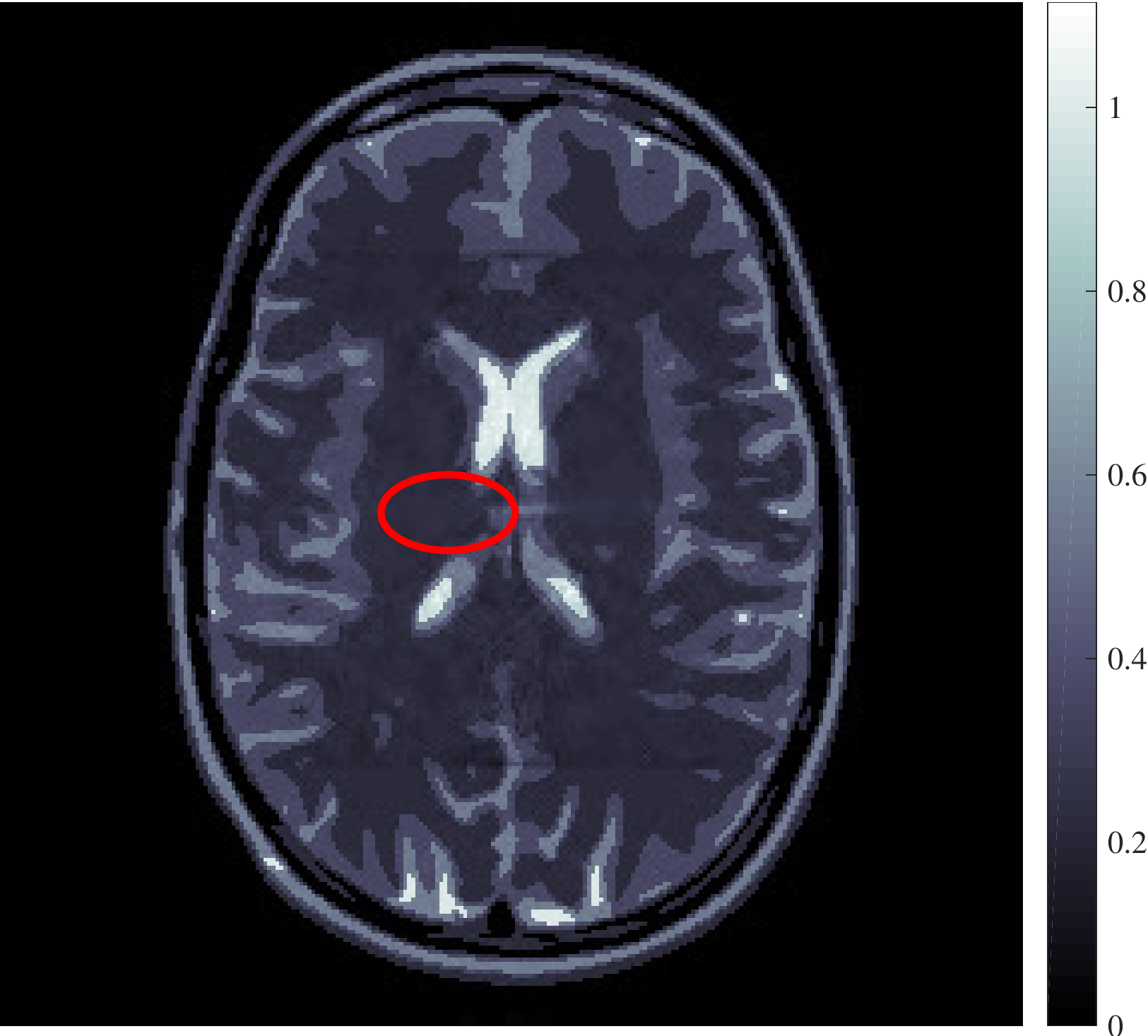}	\\
	\includegraphics[height=3.8cm]{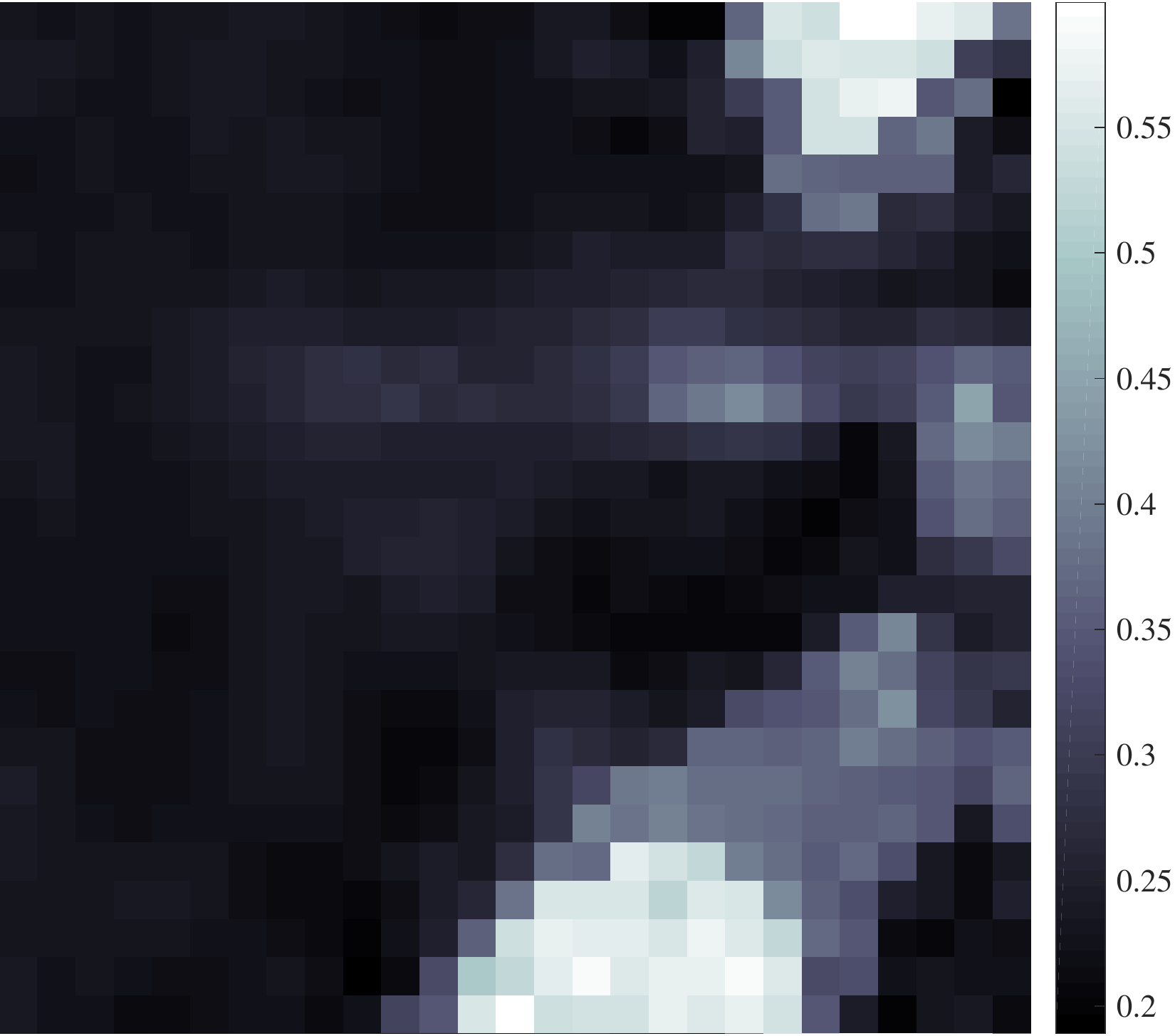}
&	\includegraphics[height=3.8cm]{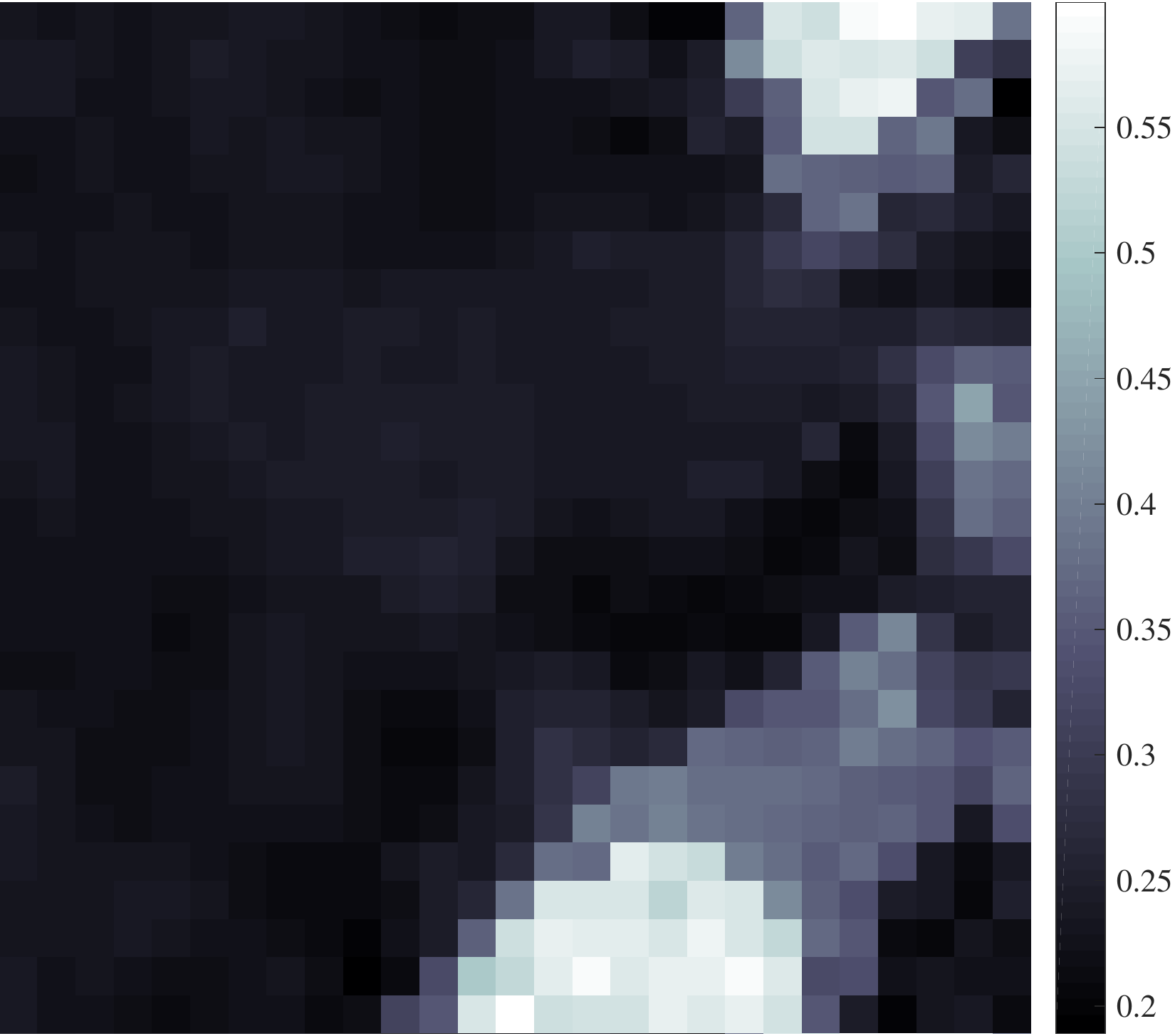}	
&	\includegraphics[height=3.8cm]{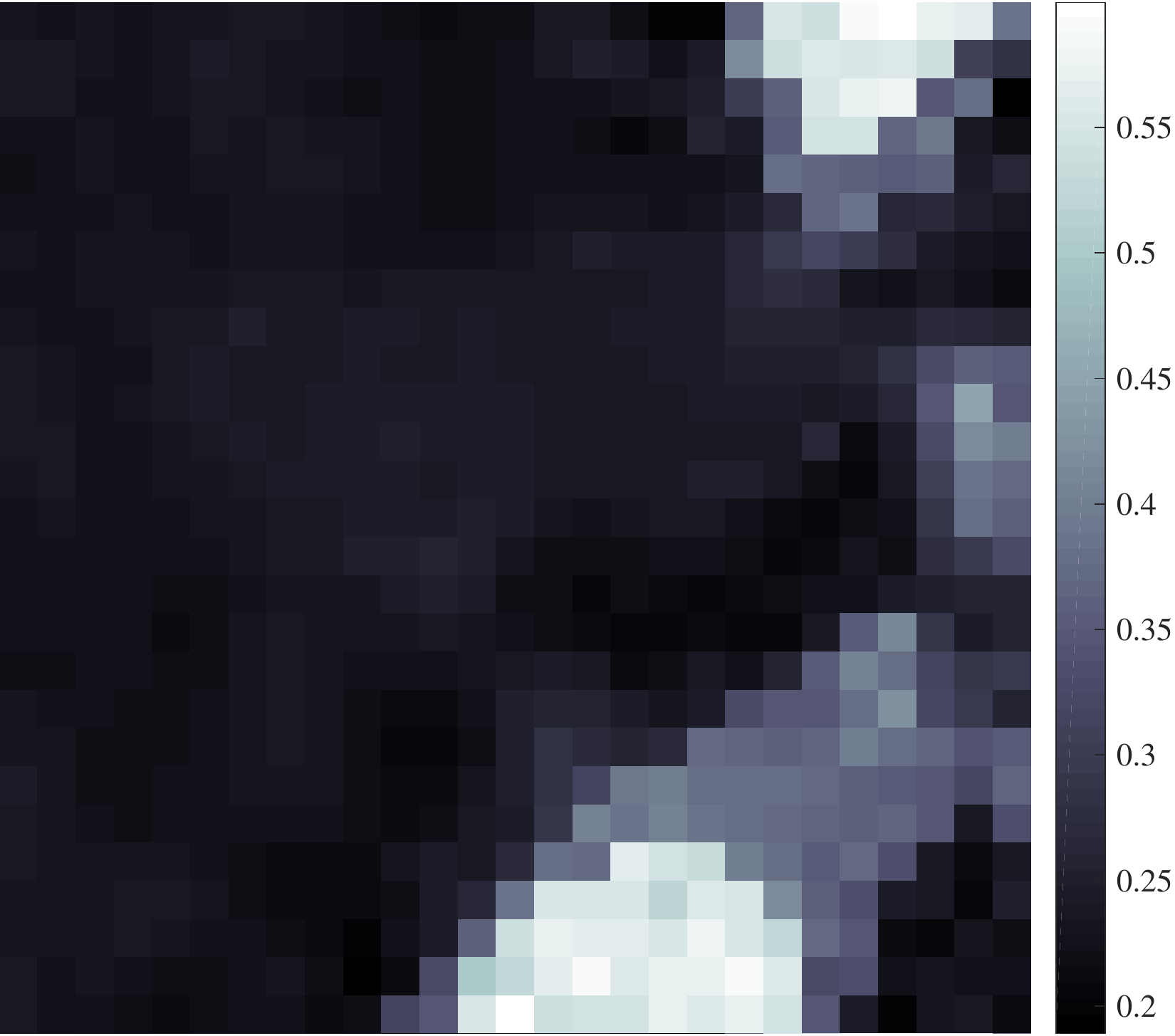}	\\[0.2cm]
	\includegraphics[height=3.8cm]{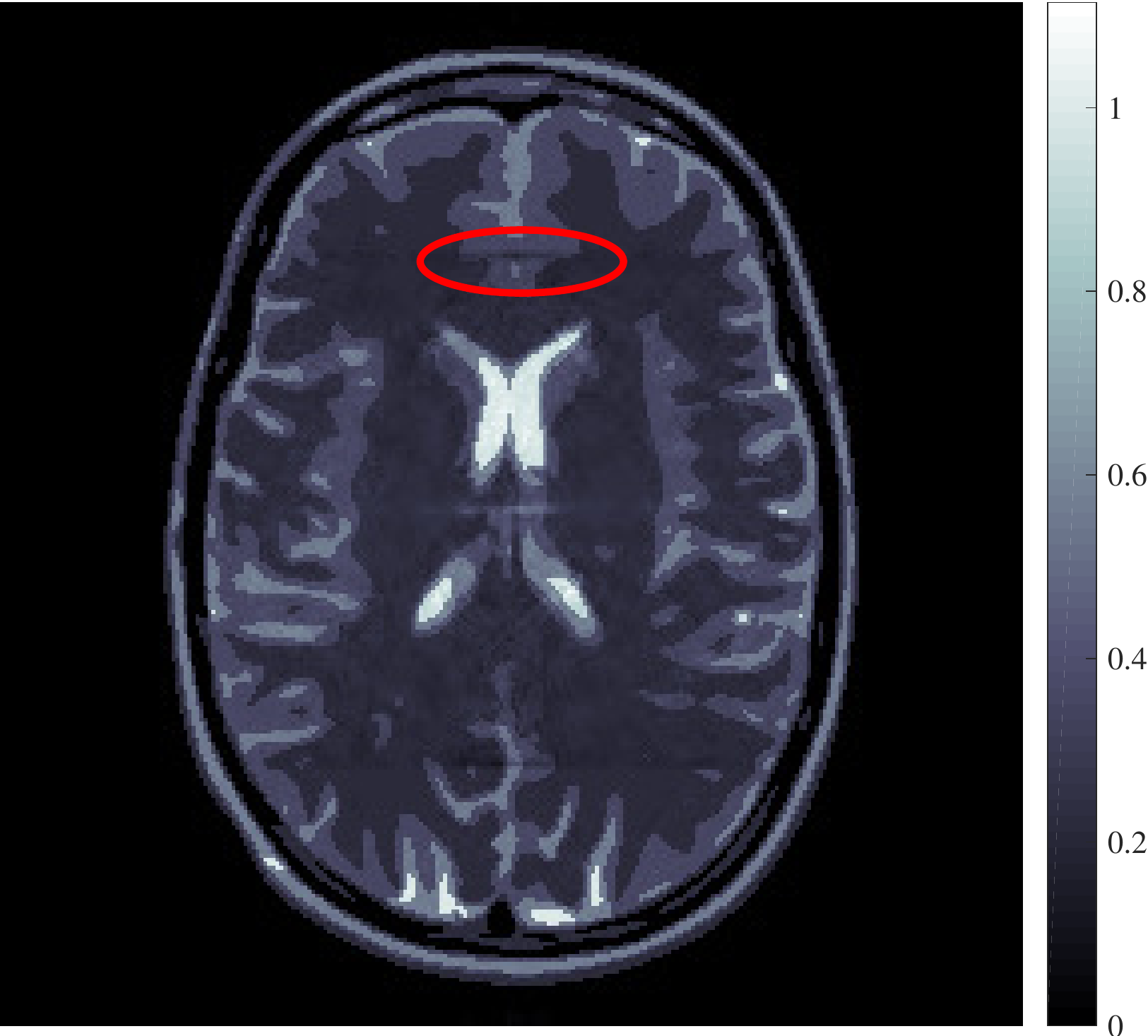}
&	\includegraphics[height=3.8cm]{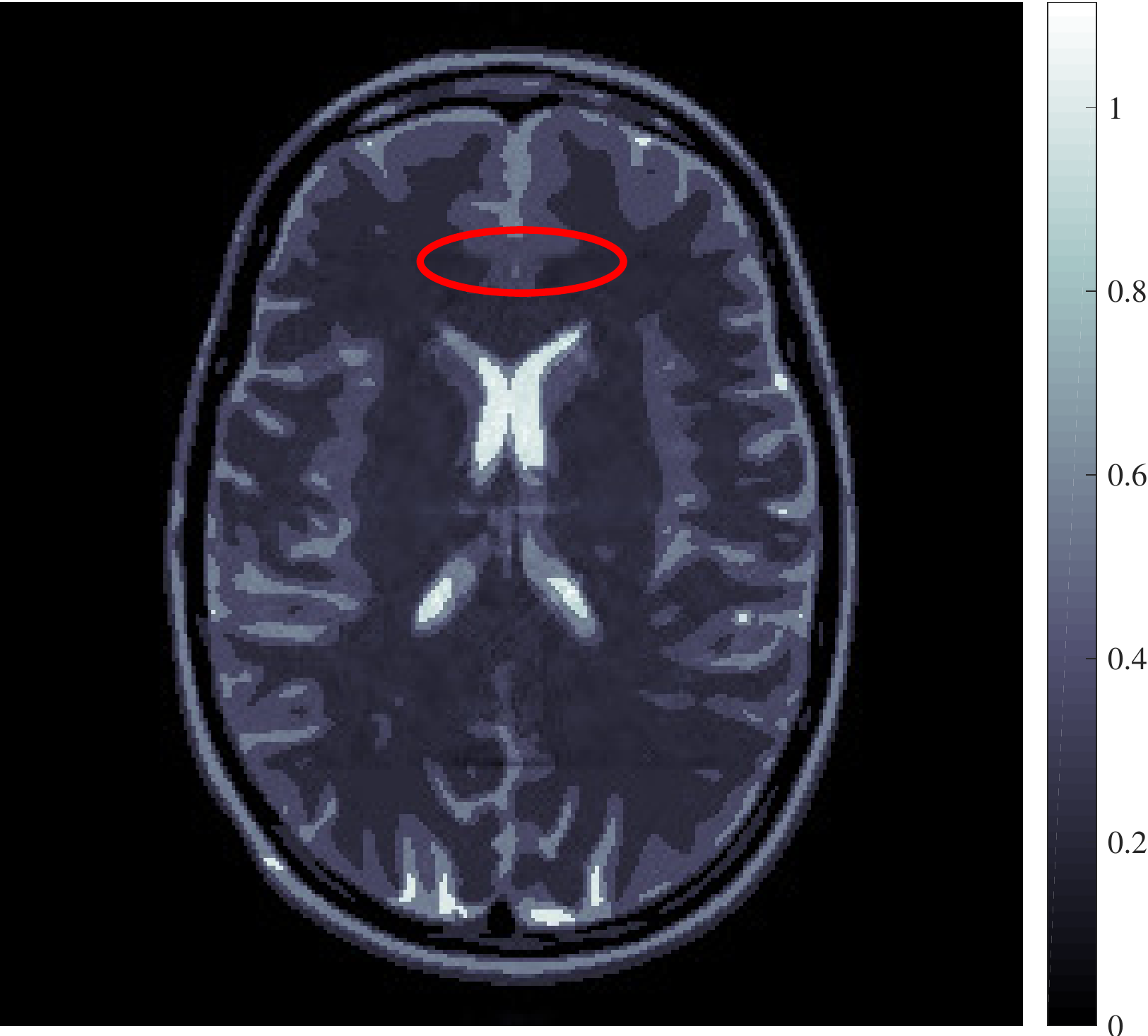}
&	\includegraphics[height=3.8cm]{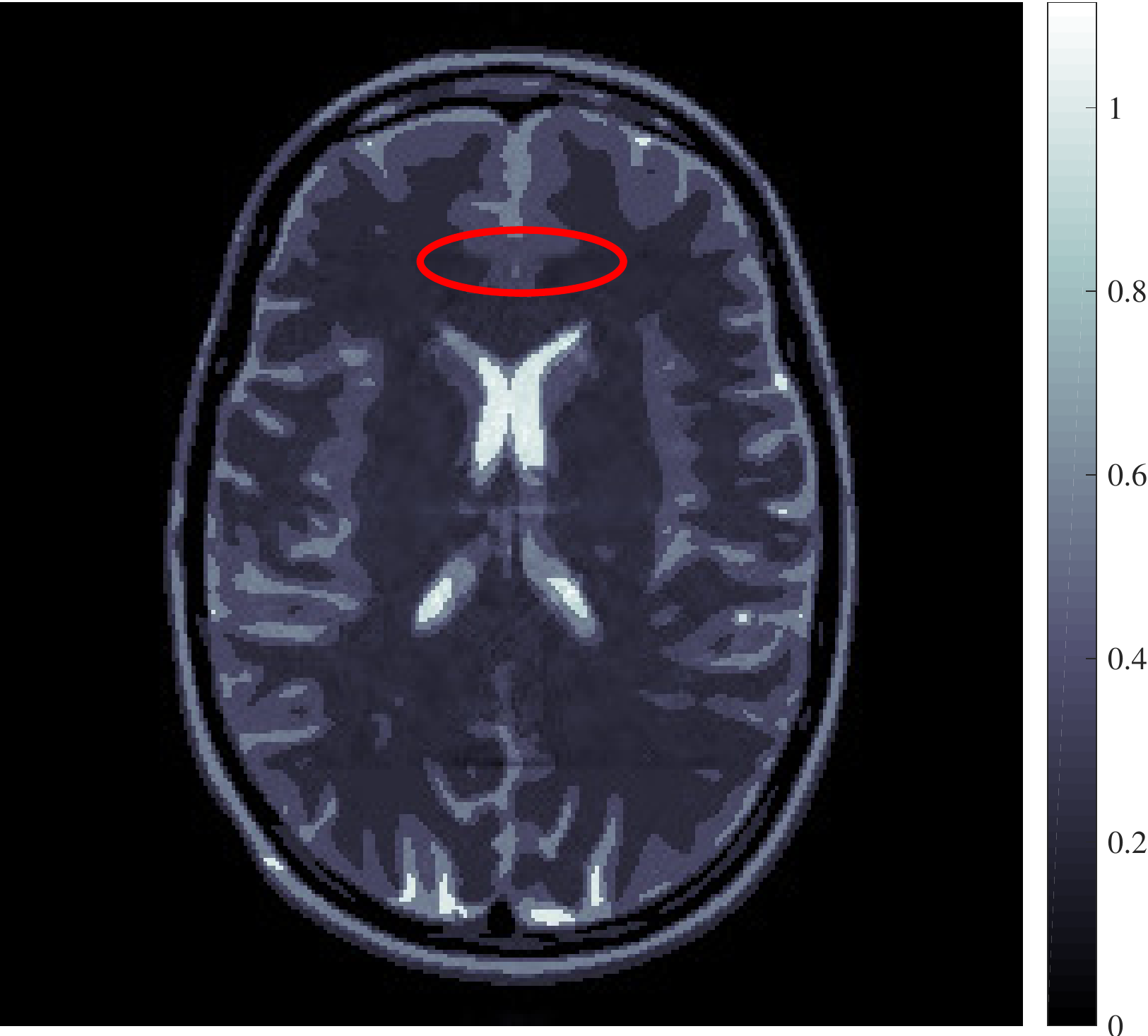}	\\
	\includegraphics[height=3.8cm]{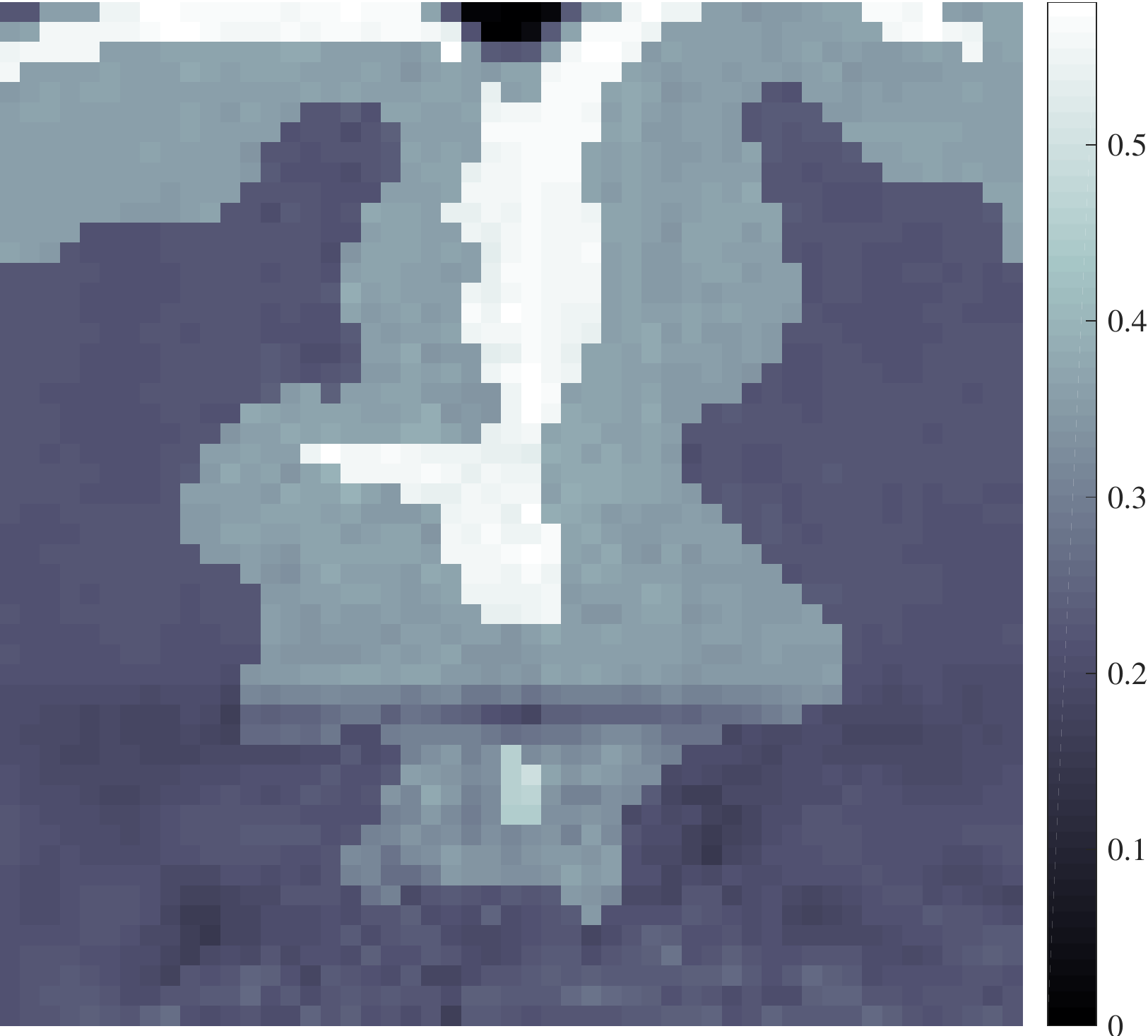}
&	\includegraphics[height=3.8cm]{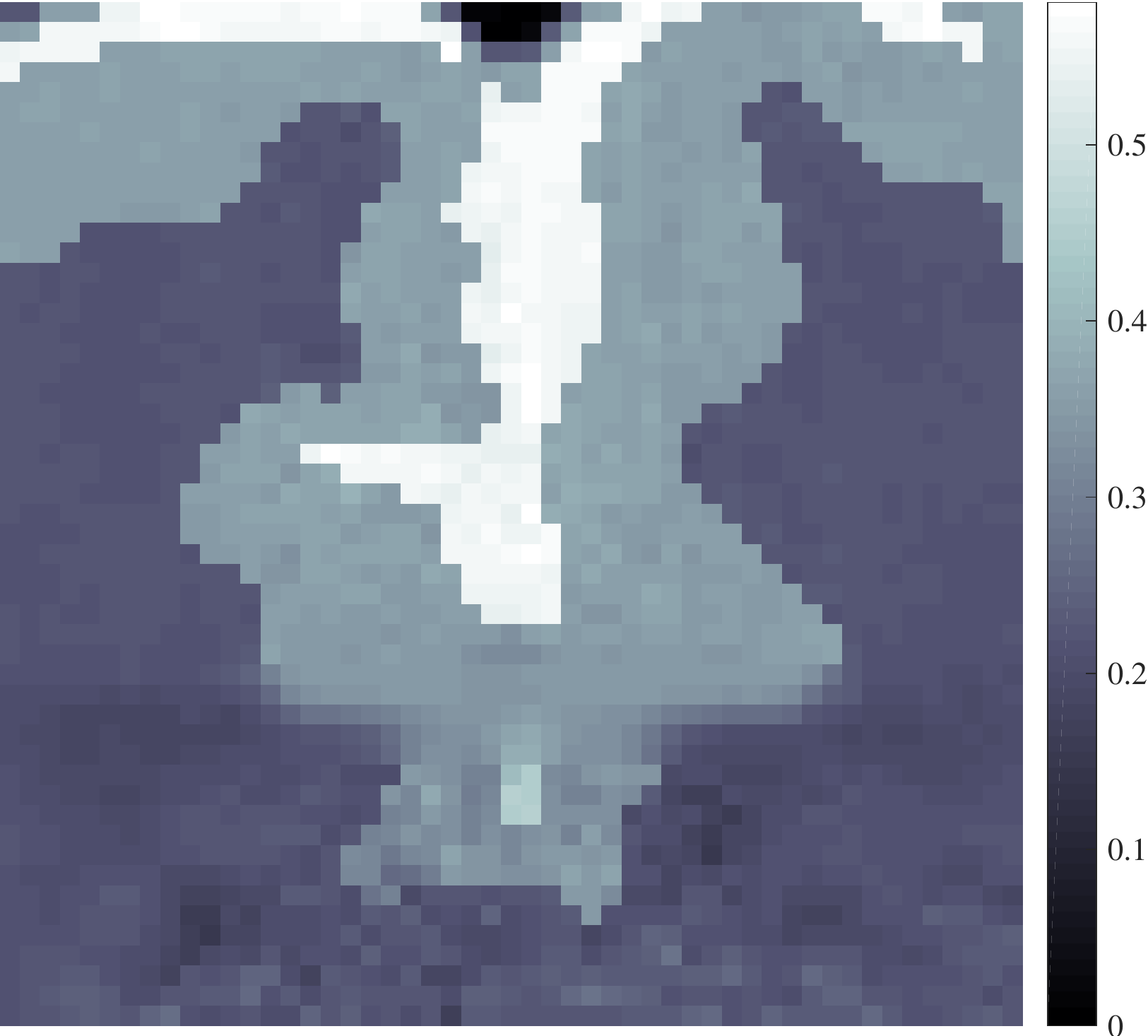}	
&	\includegraphics[height=3.8cm]{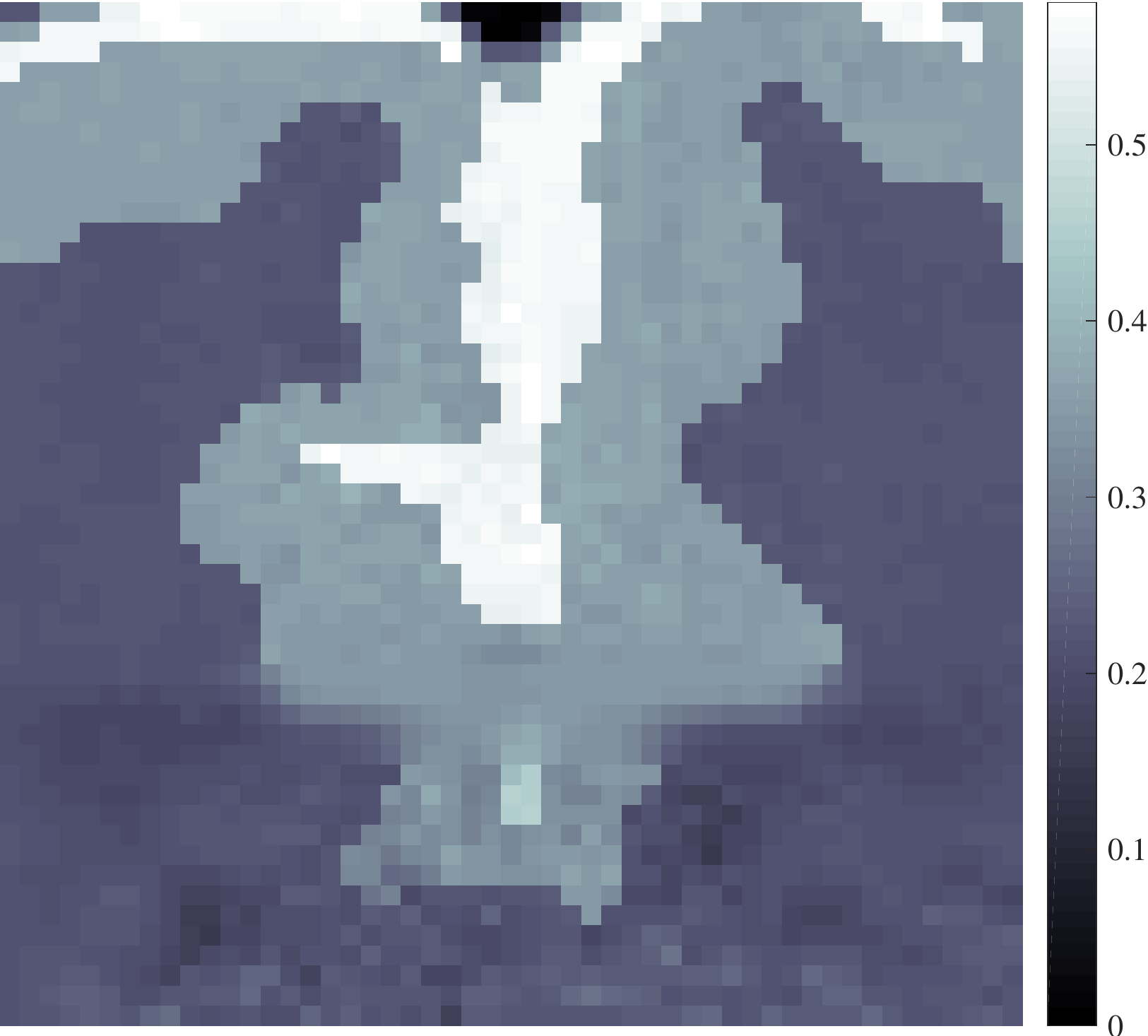}	
\end{tabular}
\end{center}


\caption{\label{Fig:MRI:cartesian} \small
Images showing the simulation results for the magnetic resonance imaging problem considering the Cartesian trajectories displayed in Figure~\ref{Fig:MRI:description}(b). Uncertainty quantification for two artefacts appearing in the MAP estimate, with corresponding $\rho_\alpha = 0.02\%$ (first two rows) and $\rho_\alpha = 0.01\%$ (last two rows).
In both cases $H_0$ cannot be rejected. 
Rows 1 and 3: images in linear scale with the structures of interest highlighted in red with, from left to right: 
$x^\dagger$, 
$x^\ddagger_{\widetilde{\Cc}_\alpha}$, 
and $x^\ddagger_{\Sc}$. 
Rows 2 and 4: zoomed images in linear scale on the area of the structures of interest, corresponding to the images displayed in rows 1 and 3, respectively. The scales in the zoomed images are adapted to better emphasize the two structures of interest. 
}
\end{figure}

%
%

\section{Discussions}
\label{sec:discuss}

\subsection{Model misspecification and approximation errors}
\label{Ssec:discuss:approxBayes}
We now discuss some philosophical aspects of the proposed methodology, and some implicit approximations that their users should be aware of. The first approximation is arguably the mathematical analysis of the imaging problem and its related uncertainty. That is, the fact that we formulate the problem mathematically to operate in a mathematical framework and deliver a mathematical solution for a real imaging problem always involves an approximation, despite the fact that our methodology has been rigorously mathematically derived. The statistical model $p(x|y)$ is of course an explicit approximation because $p(y|x)$ and $p(x)$ are inevitably misspecified. Similarly, $\Sc$ is also a modelling choice motivated by operational considerations (e.g., convexity). Mapping the results of a hypothesis test to statements and conclusions about real structures is also a form of implicit approximation. With this in mind, we understand our methodology as a tool for exploring uncertainty and supporting the use of images as evidence to inform decisions and conclusions. However, we do not attach particular attention to specific significance levels (e.g. $\alpha = 0.01$) because we do not believe that models are sufficiently well calibrated to allow accurate statements of posterior probabilities.

To conclude, we emphasize again that~\eqref{eq:posterior1} is an operational posterior distribution that models our knowledge about $x$ after observing $y$, a model derived from a likelihood function $p(y|x)$ and a prior $p(x)$ that are both operational approximations of some true conditional and marginal probability distributions that are unknown to us. Using an operational model is arguably unavoidable in imaging settings, given that the true marginal distribution of $x$ is difficult to fully characterize, and that the exact likelihood is certainly more complicated than the linear models and exponential-family noise distributions commonly used. As a consequence, our inferences are subjective in this sense and should not be understood as accurate probability statements regarding the underlying true image. Also, they should also not be understood as frequentist probability statements (i.e. related to the relative frequencies of different outcomes if the experiments were repeated a large number of times). Analyzing the frequentist statistical properties of Bayesian procedures in high-dimensional settings is very challenging. In particular, the frequentist properties of the proposed methodology (e.g. the power of the hypothesis test) will depend on the specific model and experiment considered. 
Also note that there are estimators of the form \eqref{hpd2} that are not MAP estimators derived from a Bayesian model \eqref{eq:posterior2} (this point is discussed for instance in \cite{Gribonval2011}). In such cases, we would not recommend using the proposed Bayesian uncertainty quantification methodology.

\subsection{Generalisations to other data observation models}
\label{Ssec:discuss:noise}

In this work, we assume that $w$ in the linear problem~\eqref{pb:inv_pb} has bounded energy. It is important to emphasize that the proposed BUQO method is not restricted to this assumption. 
Indeed, according to~\cite{pereyra2017maximum}, considering another type of noise is leading to a different conservative credible region $\widetilde{\Cc}_\alpha$ than the one given in~\eqref{eq:hpdapp1}. This change only affects the projection $\proj_{\widetilde{\Cc}_\alpha}$ in the proposed method, which needs to be adapted.
 
As a particular example, considering an additive i.i.d. Gaussian noise with zero mean and variance $\sigma$, the associated likelihood is of the form $p(y|x) \propto \exp(- \| \Phi x - y \|^2/(2\sigma^2)) $, and the MAP estimator is given by
\begin{equation}
x^\dagger \in \Argmin_{x \in \R^N} \left\{ g(x) := \frac{1}{2\sigma^2} \|\Phi x - y \|^2 + g_2(x) \right\},
\end{equation}
where $g_2$ is the regularization term. In this context, the conservative credible region $\widetilde{\Cc}_\alpha$ defined in~\cite{pereyra2017maximum} is expressed as follows:
\begin{equation}
\widetilde{\Cc}_\alpha = \left\{ x \in \R^N \mid   \frac{1}{2\sigma^2} \|\Phi x - y \|^2 + g_2(x) \le \widetilde{\eta}_\alpha \right\},
\end{equation}
with $\widetilde{\eta}_\alpha = g(x^\dagger) + N(\tau_\alpha+1)$. It can be noticed that the set $\widetilde{\Cc}_\alpha$ cannot be split into an intersection of simple sets when $g_2$ is not an indicator function. Consequently, to compute the projection onto this set, epigraphical projections must be leveraged \cite{chierchia2015epigraphical}. The remainder of the proposed BUQO approach remains unchanged.

\subsection{Comparison with state-of-the-art MCMC approaches}
\label{Ssec:discuss:comp_MCMC}

As explained in Section~\ref{Ssec:Bayes_quant}, MCMC algorithms can be used as well to perform uncertainty quantification in imaging. However, generally these approached have a computational cost which is several orders of magnitude higher than the computational cost associated with advanced optimization methods.
For example, in the context of our simulations, both for astronomical and medical imaging, computing the hypothesis test by using the state-of-the-art proximal MCMC algorithm \cite{durmus2016efficient} would require using approximately $10^4$ iterations of the algorithm for a small problem. One iteration of this algorithm has a similar computational cost as one iteration of the proposed convex optimisation scheme, which converges in only $10^2$ iterations and as a result is significantly faster. This computational advantage becomes more pronounced as the problem dimension increases, with large problems easily requiring over $10^6$ MCMC iterations with \cite{durmus2016efficient}, and only $10^3$ iterations with the proposed convex optimisation scheme.

\section{Conclusions}
\label{Sec:conclusion}

In this paper, we proposed a Bayesian uncertainty quantification methodology in the context of high dimensional imaging inverse problems. The proposed BUQO approach aims to analyse the degree of confidence in specific image structures (e.g., celestial sources in astronomical images, or  lesions in medical images) appearing in the MAP estimates, when the Bayesian models are log-concave. We proposed to quantify the uncertainty of the structures under scrutiny by performing a Bayesian hypothesis test, leveraging scalable optimization algorithms. Our approach allows to scale to  high-resolution and high-sensitivity imaging problems that are computationally intractable for state-of-the-art Bayesian computation approaches. 
The proposed methodology was demonstrated on challenging Fourier imaging problems related to radio astronomy and magnetic resonance in medicine where there is significant intrinsic uncertainty, and where we considered various types of structures and imaging setups. 
The corresponding \textsc{Matlab} code is available on GitHub (\url{https://basp-group.github.io/BUQO/}).

In future works, we plan to investigate the statistical calibration properties of our models, which will make more precise the limitations of the proposed methodology. We also plan to generalize the proposed approach to solve more sophisticated inverse problems. For instance, often when the inverse problem is non-linear, the MAP approach leads to a non-convex minimization problem \cite{Repetti_2017, Birdi2016, Repetti_SPL_2015, Bolte_2014}. In this case, the theoretical results of \cite{pereyra2017maximum} do not hold, and our approach cannot be directly applied.

\bibliographystyle{siamplain}

\end{document}